\documentclass[11pt]{article}
\usepackage{times}
\usepackage{amsmath, amssymb, amsthm}
\usepackage{graphicx}
\usepackage{algorithmic}
\usepackage{algorithm}
\usepackage{wrapfig}
\usepackage{color,ifpdf}

\topmargin 0pt
\advance \topmargin by -\headheight
\advance \topmargin by -\headsep

\textheight 8.75in

\oddsidemargin 0pt
\evensidemargin \oddsidemargin
\marginparwidth 0.5in

\textwidth 6.5in

 \addtolength{\oddsidemargin}{-0.25in}
    \addtolength{\evensidemargin}{-0.25in}
    \addtolength{\textwidth}{0.5in}

  \addtolength{\topmargin}{-.25in}
  \addtolength{\textheight}{.6in}

\definecolor{Darkblue}{rgb}{0,0,0.4}
\definecolor{Brown}{cmyk}{0,0.81,1.,0.60}
\definecolor{Purple}{cmyk}{0.45,0.86,0,0}
\newcommand{\mydriver}{hypertex}
\ifpdf
 \renewcommand{\mydriver}{pdftex}
\fi
\usepackage[breaklinks,\mydriver]{hyperref}
\hypersetup{colorlinks=true,
            citebordercolor={.6 .6 .6},linkbordercolor={.6 .6 .6},%
citecolor=Darkblue,urlcolor=black,linkcolor=Darkblue,pagecolor=black}
\newcommand{\lref}[2][]{\hyperref[#2]{#1~\ref*{#2}}}

\newcommand{\OPT}{{\bf OPT}}
\newcommand{\E}{\mathrm{E}}
\newcommand{\I}{\mathcal{I}}
\newcommand{\stream}{\pmb{\sigma}}

\newcommand{\bR}{\mathbb{R}}

\newcommand{\FMV}{{\sf FMV}}
\newcommand{\sse}{\subseteq}

\newtheorem{theorem}{Theorem}[section]

\newtheorem{observation}[theorem]{Observation}
\newtheorem{lemma}[theorem]{Lemma}
\newtheorem{proposition}[theorem]{Proposition}

\newcommand\ignore[1]{}
\newcommand\tsty{\textstyle}

\newcommand{\bZ}{\mathbb{Z}}
\newcommand{\F}{\mathcal{F}}
\newcommand{\rme}{\mathrm{e}}

\theoremstyle{remark}

\newtheorem{claim}[theorem]{Claim}

\newcounter{note}[section]

\newcommand{\initOneLiners}{%
    \setlength{\itemsep}{0pt}
    \setlength{\parsep }{0pt}
    \setlength{\topsep }{0pt}
}
\newenvironment{OneLiners}[1][\ensuremath{\bullet}]
    {\begin{list}
        {#1}
        {\initOneLiners}}
    {\end{list}}

\theoremstyle{definition}
\newtheorem{definition}[theorem]{Definition}
\title{Constrained Non-Monotone Submodular Maximization: \\Offline and
  Secretary Algorithms}

\author{
Anupam Gupta \thanks{Carnegie Mellon University, Pittsburgh, PA. Email: {\tt anupamg@cs.cmu.edu}}
\and
Aaron Roth \thanks{Microsoft Research New England, Cambridge, MA. Email: {\tt alroth@cs.cmu.edu}}
\and
Grant Schoenebeck \thanks{UC Berkeley, Berkeley, CA. Email: {\tt grant@cs.berkeley.edu}}
\and
Kunal Talwar \thanks{Microsoft Research, Mountain View, CA. Email: {\tt kunal@microsoft.com}}
}

\date{}

\begin{document}

\maketitle
\thispagestyle{empty}
\begin{abstract}
  Constrained submodular maximization problems have long been studied, most recently in the context of auctions and computational advertising, 
  with near-optimal results known under a variety of constraints when
  the submodular function is \emph{monotone}. The case of non-monotone
  submodular maximization is less well understood: the first approximation
  algorithms even for the unconstrained setting were given by Feige et
  al.\ \emph{(FOCS '07)}. More recently, Lee et al.\ \emph{(STOC '09,
    APPROX '09)} show how to approximately maximize non-monotone submodular
  functions when the constraints are given by the intersection of $p$
  matroid constraints; their algorithm is based on local-search
  procedures that consider $p$-swaps, and hence the running time may be
  $n^{\Omega(p)}$, implying their algorithm is polynomial-time only for
  constantly many matroids.

  \medskip In this paper, we give algorithms that work for
  \emph{$p$-independence systems} (which generalize constraints given by
  the intersection of $p$ matroids), where the running time is
  $\text{poly}(n,p)$. Both our algorithms and analyses are simple: our
  algorithm essentially reduces the non-monotone maximization problem to
  multiple runs of the greedy algorithm previously used in the monotone
  case. 
  Our idea of using existing algorithms for monotone functions to solve
  the non-monotone case also works for maximizing a submodular function
  with respect to \emph{a knapsack constraint}: we get a simple
  greedy-based constant-factor approximation for this problem.



  \medskip With these simpler algorithms, we are able to adapt our
  approach to constrained non-monotone submodular maximization to the
  \emph{(online) secretary setting}, where elements arrive one at a time
  in random order, and the algorithm must make irrevocable decisions
  about whether or not to select each element as it arrives. We give
  constant approximations in this secretary setting when the algorithm
  is constrained subject to a uniform matroid or a partition matroid,
  and give an $O(\log k)$ approximation when it is constrained by a
  general matroid of rank~$k$.
\end{abstract}
\thispagestyle{empty}
\setcounter{page}{0}
\clearpage

\section{Introduction}

We present algorithms for maximizing (not necessarily monotone)
non-negative submodular functions satisfying $f(\emptyset) = 0$ under a
variety of constraints considered earlier in the literature. Lee et
al.~\cite{LMNS-journal, LeeSV09} gave the first algorithms for these problems
via local-search algorithms: in this paper, we consider greedy
approaches that have been successful for \emph{monotone} submodular
maximization, and show how these algorithms can be adapted very simply
to non-monotone maximization as well. Using this idea, we show the
following results:
\begin{itemize}
\item We give an $O(p)$-approximation for maximizing submodular
  functions subject to a $p$-independence system. This extends the
  result of Lee et al.~\cite{LMNS-journal,LeeSV09} which applied to
  constraints given by the intersection of $p$ matroids, where $p$ was a
  constant. (Intersections of $p$ matroids give $p$-indep.\ systems, but
  the converse is not true.) Our greedy-based algorithm has a run-time
  polynomial in $p$, and hence gives the first polynomial-time
  algorithms for non-constant values of~$p$.

\item We give a constant-factor approximation for maximizing submodular
  functions subject to a knapsack constraint. This greedy-based
  algorithm gives an alternate approach to solve this problem; Lee et
  al.~\cite{LMNS-journal} gave LP-rounding-based algorithms that achieved a
  $(5 + \epsilon)$-approximation algorithm for constraints given by the
  intersection of $p$ knapsack constraints, where $p$ is a constant.
\end{itemize}
Armed with simpler greedy algorithms for nonmonotone submodular
maximization, we are able to perform constrained nonmonotone submodular
maximization in several special cases in the secretary setting as well:
when items arrive online in random order, and the algorithm must make
irrevocable decisions as they arrive.
\begin{itemize}
\item We give an $O(1)$-approximation for maximizing submodular
  functions subject to a cardinality constraint and subject to a
  partition matroid. (Using a reduction of~\cite{BDGIT09}, the latter
  implies $O(1)$-approximations to e.g., graphical matroids.) Our
  secretary algorithms are simple and efficient.

\item We give an $O(\log k)$-approximation for maximizing submodular
  functions subject to an arbitrary rank $k$ matroid constraint. This
  matches the known bound for the \emph{matroid secretary problem}, in
  which the function to be maximized is simply linear.
\end{itemize}
No prior results were known for submodular maximization in the secretary
setting, even for \emph{monotone} submodular maximization; there is some
independent work, see \S\ref{sec:related-indep} for details.

Compared to previous offline results, we trade off small constant factors in our approximation ratios of our algorithms for exponential improvements in run time:
maximizing nonmonotone submodular functions subject to (constant) $p \geq 2$
matroid constraints currently has a $(\frac{p^2}{p-1} + \epsilon)$
approximation due to a paper of Lee, Sviridenko and
Vondr\'ak~\cite{LeeSV09}, using an algorithm with run-time exponential in $p$. For $p = 1$ the best result is a
$3.23$-approximation by Vondr\'ak~\cite{Vondrak09}. In contrast, our algorithms have run time only linear in $p$, but our approximation factors are worse by constant factors for the small values of $p$ where previous results exist. We have not tried to optimize our constants, but it seems likely that
matching, or improving on the previous results  for constant $p$ will need more than just
choosing the parameters carefully. We leave such improvements as an open
problem.

\subsection{Submodular Maximization and Secretary Problems in an Economic Context}

Submodular maximization and secretary problems have both been widely
studied in their economic contexts. The problem of selecting a subset of
people in a social network to maximize their influence in a viral
marketing campaign can be modeled as a constrained submodular
maximization problem~\cite{KKT03,MR07}. When costs are introduced, the
influence minus the cost gives us \emph{non-monotone} submodular
maximization problems; prior to this work, \emph{online} algorithms for
non-monotone submodular maximization problems were not known. Asadpour
et al.\ studied the problem of adaptive stochastic (monotone) submodular
maximization with applications to budgeting and sensor placement
\cite{ANS08}, and Agrawal et al.\ showed that the \emph{correlation gap}
of submodular functions was bounded by a constant using an elegant
cost-sharing argument, and related this result to social welfare
maximizing auctions \cite{ADSY09}. Finally, secretary problems, in which
elements arriving in random order must be selected so as to maximize
some constrained objective function have well-known connections to
online auctions~\cite{Kleinberg-multiple, BIK07, BIKK07,
  HajiaghayiKP04}. Our simpler \emph{offline} algorithms allow us to
generalize these results to give the first secretary algorithms capable
of handling a non-monotone submodular objective function.

\subsection{Our Main Ideas}

At a high level, the simple yet crucial observation for the offline
results is this: many of the previous algorithms and proofs for
constrained monotone submodular maximization can be adapted to show that
the set $S$ produced by them satisfies $f(S) \geq \beta f(S \cup C^*)$,
for some $0 < \beta \leq 1$, and $C^*$ being an optimal solution.  In
the monotone case, the right hand side is at least $f(C^*) = \OPT$ and
we are done.  In the non-monotone case, we cannot do this. However, we
observe that if $f(S \cap C^*)$ is a reasonable fraction of $\OPT$, then
(approximately) finding the most valuable set within $S$ would give us a
large value---and since we work with constraints that are downwards
closed, finding such a set is just \emph{unconstrained} maximization on
$f(\cdot)$ restricted to $S$, for which Feige et al.~\cite{FMV07} give
good algorithms!  On the other hand, if $f(S \cap C^*) \leq \epsilon
\OPT$ and $f(S)$ is also too small, then one can show that deleting the
elements in $S$ and running the procedure again to find another set $S'
\sse \Omega \setminus S$ with $f(S') \geq \beta f(S' \cap (C^* \setminus
S))$ would guarantee a good solution! Details for the specific problems
appear in the following sections; we first consider the simplest
cardinality constraint case in \lref[Section]{sec:submax-card} to
illustrate the general idea, and then give more general results in
\lref[Sections]{sec:submod-indep-sys} and~\ref{sec:subm-knapsack}.

For the secretary case where the elements arrive in random order,
algorithms were not known for the monotone case either---the main
complication being that we cannot run a greedy algorithm (since the
elements are arriving randomly), and moreover the value of an incoming
element depends on the previously chosen set of elements. Furthermore, to
extend the results to the non-monotone case, one needs to avoid the
local-search algorithms (which, in fact, motivated the above results), since these algorithms necessarily implement multiple passes over the input, while the secretary model only allows a single pass over it. The details on all
these are given in \lref[Section]{sec:secretaries}.

\subsection{Related Work}
\label{sec:related-work}

\emph{Monotone Submodular Maximization.}  The (offline) monotone
submodular optimization problem has been long studied: Fisher,
Nemhauser, and Wolsey~\cite{FNW,FNWII} showed that the greedy and
local-search algorithms give a $(\rme/\rme-1)$-approximation with
cardinality constraints, and a $(p+1)$-approximation under $p$ matroid
constraints. In another line of
work,~\cite{Jenkyns76,Korte-Hausmann,HKJ80} showed that the greedy
algorithm is a $p$-approximation for maximizing a \emph{modular} (i.e.,
additive) function subject to a $p$-independence system. This proof
extends to show a $(p+1)$-approximation for monotone submodular
functions under the same constraints (see, e.g.,~\cite{CCPV-journal}).
A long standing open problem was to improve on these results; nothing
better than a $2$-approximation was known even for monotone maximization
subject to a single partition matroid constraint. Calinescu et
al.~\cite{CCPV07} showed how to maximize monotone submodular functions
representable as weighted matroid rank functions subject to any matroid
with an approximation ratio of $(\rme/\rme-1)$, and soon thereafter,
Vondr\'{a}k extended this result to \emph{all} submodular
functions~\cite{Vondrak08}; these highly influential results appear
jointly in~\cite{CCPV-journal}. Subsequently, Lee et al.~\cite{LeeSV09}
give algorithms that beat the $(p+1)$-bound for $p$ matroid constraints
with $p \geq 2$ to get a $(\frac{p^2}{p-1} + \epsilon)$-approximation.

\medskip \emph{Knapsack constraints.} Sviridenko~\cite{Sviridenko04}
extended results of Wolsey~\cite{Wolsey82} and Khuller et
al.~\cite{Khuller-Moss-Naor} to show that a greedy-like algorithm with
partial enumeration gives an $(\rme/\rme-1)$-approximation to monotone
submodular maximization subject to a knapsack constraint. Kulik et
al.~\cite{KST09} showed that one could get essentially the same
approximation subject to a constant number of knapsack constraints.  Lee
et al.~\cite{LMNS-journal} give a $5$-approximation for the same problem
in the non-monotone case.

\medskip \emph{Mixed Matroid-Knapsack Constraints.} Chekuri et
al.~\cite{CVZ10} give strong concentration results for dependent
randomized rounding with many applications; one of these applications is
a $((\rme/\rme - 1) - \epsilon)$-approximation for monotone maximization
with respect to a matroid and any constant number of knapsack
constraints.  \cite[Section~F.1]{GNR-robust} extends ideas
from~\cite{CK05} to give polynomial-time algorithms with respect to
non-monotone submodular maximization with respect to a $p$-system and
$q$ knapsacks: these algorithms achieve an $p+q+O(1)$-approximation for
constant $q$ (since the running time is $n^{\text{poly}(q)}$), or a
$(p+2)(q+1)$-approximation for arbitrary $q$; at a high level, their
idea is to ``emulate'' a knapsack constraint by a polynomial number of
partition matroid constraints.

\medskip
\emph{Non-Monotone Submodular Maximization.}  In the non-monotone case,
even the unconstrained problem is NP-hard (it captures max-cut). Feige,
Mirrokni and Vondr\'{a}k~\cite{FMV07} first gave constant-factor
approximations for this problem. Lee et al.~\cite{LMNS-journal} gave the first
approximation algorithms for constrained non-monotone maximization
(subject to $p$ matroid constraints, or $p$ knapsack constraints); the
approximation factors were improved by Lee et al.~\cite{LeeSV09}. The
algorithms in the previous two papers are based on local-search with
$p$-swaps and would take $n^{\Theta(p)}$ time. Recent work by
Vondr\'{a}k~\cite{Vondrak09} gives much further insight into the
approximability of submodular maximization problems.

\medskip
\emph{Secretary Problems.}  The original secretary problem seeks to
maximize the probability of picking the element in a collection having
the highest value, given that the elements are examined in random
order~\cite{Dynkin63,Freeman-survey,Ferguson-survey}. The problem was
used to model item-pricing problems by Hajiaghayi et
al.~\cite{HajiaghayiKP04}.  Kleinberg~\cite{Kleinberg-multiple} showed
that the problem of maximizing a \emph{modular} function subject to a
cardinality constraint in the secretary setting admits a $(1 +
\frac{\Theta(1)}{\sqrt{k}})$-approximation, where $k$ is the
cardinality.  (We show that maximizing a \emph{submodular} function
subject to a cardinality constraint cannot be approximated to better
than some universal constant, independent of the value of $k$.) Babaioff
et al.~\cite{BIK07} wanted to maximize modular functions subject to
matroid constraints, again in a secretary-setting, and gave
constant-factor approximations for some special matroids, and an $O(\log
k)$ approximation for general matroids having rank $k$.  This line of
research has seen several developments recently~\cite{BIKK07,DimitrovP08,KP08,BDGIT09}.

\subsubsection{Independent Work on Submodular Secretaries}
\label{sec:related-indep}

Concurrently and independently of our work, Bobby Kleinberg has given an
algorithm similar to that in \S\ref{sec:secy-card} for monotone
secretary submodular maximization under a cardinality
constraint~\cite{Bobby}. Again independently, Bateni et al.\ consider
the problem of non-monotone submodular maximization in the secretary
setting \cite{BHZ10}; they give a different $O(1)$-approximation subject
to a cardinality constraint, an $O(L \log^2 k)$-approximation subject to
$L$ matroid constraints, and an $O(L)$-approximation subject to $L$
knapsack constraints in the secretary setting. While we do not consider
multiple constraints, it is easy to extend our results to obtain
$O(L \log k)$ and $O(L)$ respectively using standard techniques.


\subsection{Preliminaries}
\label{sec:preliminaries}

Given a set $S$ and an element $e$, we use $S + e$ to denote $S \cup
\{e\}$. A function $f:2^\Omega \rightarrow\bR_+$ is \emph{submodular} if
for all $S,T \subseteq \Omega$, $f(S) + f(T) \geq f(S \cup T) + f(S \cap
T)$. Equivalently, $f$ is submodular if it has \emph{decreasing marginal
  utility}: i.e., for all $S \subseteq T \subseteq \Omega$, and for all
$e \in \Omega$, $f(S + e) - f(S) \geq f(T + e) - f(T)$. Also, $f$ is
called \emph{monotone} if $f(S) \leq f(T)$ for $S \sse T$. Given $f$ and
$S \subseteq \Omega$, define $f_S: 2^\Omega \to \bR$ as $f_S(A) := f(S
\cup A) - f(S)$. The following facts are standard.

\begin{proposition}
  \label{prob:submodfacts}
  If $f$ is submodular with $f(\emptyset) = 0$, then
  \begin{OneLiners}
  \item for any $S$, $f_S$ is submodular with $f_S(\emptyset) = 0$, and
  \item $f$ is also \emph{subadditive}; i.e., for disjoint sets $A, B$,
    we have $f(A) + f(B) \geq f(A \cup B)$.
  \end{OneLiners}
\end{proposition}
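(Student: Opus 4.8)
The plan is to prove both parts directly from the two equivalent formulations of submodularity recalled just above; this proposition is pure bookkeeping, so no clever idea is needed, only care with set identities.

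For part~(1), I would first note that $f_S(\emptyset) = f(S \cup \emptyset) - f(S) = 0$ is immediate. To show $f_S$ is submodular, I would use the pairwise inequality form: fix arbitrary $A, B \sse \Omega$ and expand
\[
  f_S(A) + f_S(B) = f(S \cup A) + f(S \cup B) - 2 f(S), \qquad
  f_S(A \cup B) + f_S(A \cap B) = f(S \cup A \cup B) + f(S \cup (A \cap B)) - 2 f(S).
\]
The $-2f(S)$ terms cancel, so the desired inequality reduces to $f(S \cup A) + f(S \cup B) \geq f\big((S\cup A)\cup(S\cup B)\big) + f\big((S\cup A)\cap(S\cup B)\big)$, using the set identities $(S\cup A)\cup(S\cup B) = S \cup A \cup B$ and $(S\cup A)\cap(S\cup B) = S \cup (A\cap B)$; this is exactly submodularity of $f$ applied to the pair $S\cup A$, $S\cup B$. (Alternatively, using decreasing marginal utility: for $A \sse B$ and $e \notin B$, if $e \in S$ then both marginals $f_S(A+e)-f_S(A)$ and $f_S(B+e)-f_S(B)$ are zero, and if $e \notin S$ the inequality is decreasing marginals of $f$ along the chain $S \cup A \sse S \cup B$.)

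For part~(2), given disjoint $A, B$, I would apply the pairwise form of submodularity of $f$ to $A$ and $B$ directly: $f(A) + f(B) \geq f(A \cup B) + f(A \cap B) = f(A \cup B) + f(\emptyset) = f(A \cup B)$, where the penultimate step uses $A \cap B = \emptyset$ and the last uses $f(\emptyset) = 0$.

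There is no genuine obstacle here; the only points requiring a moment's attention are the two set identities in part~(1) (that union and intersection interact correctly with the translation by $S$) and, if one takes the marginal-utility route, the edge case $e \in S$. Both are routine, so the proof is a few lines.
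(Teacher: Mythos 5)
Your proof is correct: both parts are verified exactly by the standard argument (applying submodularity of $f$ to the pair $S \cup A$, $S \cup B$ together with the identities $(S\cup A)\cup(S\cup B)=S\cup A\cup B$ and $(S\cup A)\cap(S\cup B)=S\cup(A\cap B)$, and to disjoint $A,B$ using $f(\emptyset)=0$). The paper states these facts as standard and gives no proof, so your write-up simply supplies the routine verification the paper omits; there is nothing further to compare.
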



\noindent
\textbf{Matroids.} A \emph{matroid} is a pair $\mathcal{M} = (\Omega, \I
\sse 2^\Omega)$, where $\I$ contains $\emptyset$, if $A \in \I$ and $B
\sse A$ then $B \in \I$, and for every $A, B \in \I$ with $|A| < |B|$,
there exists $e \in B \setminus A$ such that $A + e \in \I$. The sets in
$\I$ are called \emph{independent}, and the \emph{rank} of a matroid is
the size of any maximal independent set (base) in $\mathcal{M}$. In a
\emph{uniform} matroid, $\I$ contains all subsets of size at most $k$. A
\emph{partition} matroid, we have groups $g_1, g_2, \ldots, g_k \sse
\Omega$ with $g_i \cap g_j = \emptyset$ and $\cup_j g_j = \Omega$; the
independent sets are $S \sse \Omega$ such that $|S \cap g_i| \leq 1$.

\medskip
\noindent
\textbf{Unconstrained (Non-Monotone) Submodular Maximization.}  We use
$\FMV_\alpha(S)$ to denote an approximation algorithm given by Feige,
Mirrokni, and Vondr\'{a}k~\cite{FMV07} for unconstrained submodular
maximization in the non-monotone setting: it returns a set $T \sse S$
such that $f(T) \geq \frac1\alpha \max_{T' \sse S} f(T')$. In fact,
Feige et al.\ present many such algorithms, the best approximation ratio
among these is $\alpha = 2.5$ via a local-search algorithm, the easiest
is a $4$-approximation that just returns a uniformly random subset of
$S$.

\section{Submodular Maximization subject to a Cardinality Constraint}
\label{sec:submax-card}

We first give an offline algorithm for submodular maximization subject
to a cardinality constraint: this illustrates our simple approach, upon
which we build in the following sections. Formally, given a subset $X
\sse \Omega$ and a non-negative submodular function $f$ that is
potentially non-monotone, but has $f(\emptyset) = 0$. We want to
approximate $\max_{S \sse X: |S| \leq k} f(S)$. The greedy algorithm
starts with $S \gets \emptyset$, and repeatedly picks an element $e$
with maximum marginal value $f_S(e)$ until it has $k$ elements.

\begin{lemma}
  \label{lem:det-threshold}
  For any set $|C| \leq k$, the greedy algorithm returns a set $S$ that
  satisfies $f(S) \geq \frac12 \, f(S \cup C)$.
\end{lemma}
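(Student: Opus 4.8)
The plan is to adapt the classical greedy-approximation argument for monotone submodular maximization, being careful at the one spot where non-monotonicity (and hence possibly negative marginal gains) breaks it. Write $S = \{e_1,\dots,e_k\}$ in the order the greedy algorithm picked the elements, set $S_0 = \emptyset$ and $S_i = \{e_1,\dots,e_i\}$, and let $\delta_i := f(S_i) - f(S_{i-1}) = f_{S_{i-1}}(e_i)$ be the $i$-th marginal gain; by the greedy rule $\delta_i = \max_{e \notin S_{i-1}} f_{S_{i-1}}(e)$. Let $C \setminus S = \{c_1,\dots,c_m\}$ be the optimal elements not already chosen, and note $m = |C \setminus S| \le |C| \le k$. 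The bound $f(S\cup C) \le 2f(S)$ will follow once we show $f(S\cup C) - f(S) \le \sum_{j=1}^m f_S(c_j)$ and then charge the terms on the right to distinct greedy steps so that the total is at most $f(S)$.

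For the first inequality I would telescope $f(S\cup C) - f(S)$ along $c_1,\dots,c_m$, writing it as $\sum_{j=1}^m \big(f(S \cup \{c_1,\dots,c_j\}) - f(S\cup\{c_1,\dots,c_{j-1}\})\big)$, and then apply the decreasing-marginals form of submodularity (valid irrespective of signs) with $S \subseteq S\cup\{c_1,\dots,c_{j-1}\}$ to bound each term by $f_S(c_j)$. For the charging step, observe that $c_j \notin S \supseteq S_{i-1}$ for every $i \le k$, so $c_j$ was always a legal greedy candidate and $f_{S_{i-1}}(c_j) \le \delta_i$; combining with $f_S(c_j) \le f_{S_{i-1}}(c_j)$ (submodularity again, since $S_{i-1}\subseteq S$) gives $f_S(c_j) \le \delta_i$ for every step $i$.

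The one genuinely non-monotone subtlety — and the step I expect to be the crux — is that some $\delta_i$ may be negative, so $f(S_m)$ need not be $\le f(S_k) = f(S)$; hence charging $c_j$ to step $j$ (as one would in the monotone proof) does not close the argument. Instead I would charge $c_j$ to step $k - m + j$, i.e.\ use the last $m$ greedy steps, which are distinct and all satisfy $c_j \notin S_{i-1}$. Then $\sum_{j=1}^m f_S(c_j) \le \sum_{i=k-m+1}^{k}\delta_i = f(S_k) - f(S_{k-m}) \le f(S_k) = f(S)$, where the last inequality is the sole use of non-negativity of $f$. Chaining the two parts yields $f(S\cup C) - f(S) \le f(S)$, i.e.\ $f(S) \ge \tfrac12 f(S\cup C)$. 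Apart from this choice of which steps to charge to, the remaining work is only bookkeeping the indices so that the charged steps are distinct and each $c_j$ lies outside the corresponding prefix $S_{i-1}$.
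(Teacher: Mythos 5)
Your proof is correct, and it takes a slightly different route from the paper's. The paper argues by contradiction with an averaging step: if $f(S \cup C) - f(S) > f(S)$, then by subadditivity of $f_S$ some single witness $e \in C \setminus S$ has $f_S(\{e\}) > f(S)/|C\setminus S| \geq f(S)/k$; since that $e$ remained a feasible candidate at every one of the $k$ greedy steps and marginals only shrink, every greedy increment exceeds $f(S)/k$, forcing $f(S) > f(S)$. You instead give a direct charging argument: telescope $f(S\cup C)-f(S)$ over $C\setminus S$, observe that each $f_S(c_j)$ is dominated by \emph{every} greedy increment $\delta_i$, and charge the $m=|C\setminus S|$ terms to the last $m$ steps so their total is $f(S_k)-f(S_{k-m}) \le f(S)$, which is where non-negativity enters (the paper's use of non-negativity is the comparison $f(S)/|C\setminus S| \geq f(S)/k$). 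Both proofs hinge on the same two facts -- elements of $C\setminus S$ are always available to greedy, and submodularity makes their marginals only decrease -- so the core mechanism is the same; but your bookkeeping is essentially the specialization to a uniform matroid of the charging proof the paper gives for general $p$-systems (Lemma~\ref{lem:approx1}), and your fix of charging to the \emph{last} $m$ steps is a clean way to neutralize the possibly negative increments, whereas the paper's contradiction is a bit shorter. One minor shared caveat: both your argument and the paper's implicitly assume greedy ran for the full $k$ steps; if it stops early because $X$ is exhausted then $C \subseteq S$ and the claim is trivial.
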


\begin{proof}
  Suppose not. Then $f_S(C) = f(S \cup C) - f(S) > f(S)$, and hence
  there is at least one element $e \in C \setminus S$ that has
  $f_S(\{e\}) > \frac{f(S)}{|C \setminus S|} > \frac{f(S)}{k}$. Since we
  ran the greedy algorithm, at each step this element $e$ would have
  been a contender to be added, and by submodularity, $e$'s marginal
  value would have been only higher then. Hence the elements actually
  added in each of the $k$ steps would have had marginal value more than
  $e$'s marginal value at that time, which is more than $f(S)/k$. This
  implies that $f(S) > k \cdot f(S)/k$, a contradiction.
\end{proof}
This theorem is existentially tight: observe that if the function $f$ is
just the cardinality function $f(S) = |S|$, and if $S$ and $C$ happen to
be disjoint, then $f(S) = \frac12 f(S \cup C)$.

\begin{lemma}[Special Case of Claim~2.7 in~\cite{LMNS-journal}]
  \label{lem:cross-sums}
  Given sets $C, S_1 \subseteq U$, let $C' = C
  \setminus S_1$, and $S_2 \subseteq U \setminus S_1$. Then
   $ f(S_1 \cup C) + f(S_1 \cap C) + f(S_2 \cup C')
     \geq f(C)$.
\end{lemma}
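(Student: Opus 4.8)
The plan is to bound $f(C)$ from above using only subadditivity and a single application of submodularity. First I would observe that $C$ decomposes as the \emph{disjoint} union $C = (S_1 \cap C) \,\sqcup\, C'$, where $C' = C \setminus S_1$. Since $f$ is non-negative and submodular with $f(\emptyset) = 0$, Proposition~\ref{prob:submodfacts} tells us $f$ is subadditive, so $f(C) \le f(S_1 \cap C) + f(C')$. Hence it suffices to prove the single inequality $f(C') \le f(S_1 \cup C) + f(S_2 \cup C')$; combining this with the subadditivity bound gives $f(C) \le f(S_1 \cap C) + f(S_1 \cup C) + f(S_2 \cup C')$, which is exactly the claim.

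To prove that remaining inequality I would apply the defining submodular inequality $f(A) + f(B) \ge f(A \cup B) + f(A \cap B)$ with $A = S_1 \cup C$ and $B = S_2 \cup C'$. The key computation is that $A \cap B = C'$. Indeed, distributing the intersection over the unions produces the four terms $S_1 \cap S_2$, $S_1 \cap C'$, $C \cap S_2$, and $C \cap C'$; the first vanishes because $S_2 \subseteq U \setminus S_1$, the second vanishes because $C' = C \setminus S_1$, the term $C \cap C'$ equals $C'$ since $C' \subseteq C$, and the term $C \cap S_2$ is contained in $C \setminus S_1 = C'$ (again using $S_2 \subseteq U \setminus S_1$). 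So $A \cap B = C'$. Moreover $A \cup B \supseteq C'$ is some subset of $U$, so $f(A \cup B) \ge 0$ by non-negativity. Therefore $f(S_1 \cup C) + f(S_2 \cup C') \ge f(A \cup B) + f(C') \ge f(C')$, as needed.

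The only step demanding any care is the set-algebra identity $(S_1 \cup C) \cap (S_2 \cup C') = C'$, and this is precisely where the hypothesis $S_2 \subseteq U \setminus S_1$ gets used — it is what lets us discard the cross terms involving $S_2 \cap S_1$. Everything else is a routine invocation of subadditivity, submodularity, and non-negativity of $f$, so I do not anticipate any genuine obstacle.
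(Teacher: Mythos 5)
Your proof is correct and is essentially the same argument as the paper's: the paper also applies submodularity to the pair $S_1 \cup C$ and $S_2 \cup C'$ (with intersection $C'$) and then combines $f(C') + f(S_1\cap C) \ge f(C)$, which is exactly your subadditivity step for the disjoint decomposition $C = (S_1\cap C)\sqcup C'$, discarding the non-negative term $f(S_1\cup S_2\cup C)$. The only difference is presentational: you verify the set identity $(S_1\cup C)\cap(S_2\cup C') = C'$ explicitly, which the paper leaves implicit.
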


\begin{proof}
  By submodularity, it follows that $f(S_1 \cup C) + f(S_2 \cup C') \geq
  f(S_1 \cup S_2 \cup C) + f(C')$. Again using submodularity, we get $
  f(C') + f(S_1 \cap C) \geq f(C) + f(\emptyset)$.  Putting these
  together and using non-negativity of $f(\cdot)$, the lemma follows.
\end{proof}

\begin{wrapfigure}{r}{0.45\textwidth}
  \hrule\medskip
  \begin{algorithmic}[1]
    \STATE \textbf{let} $X_1 \gets X$

    \FOR{$i = 1$ to $2$}

    \STATE \textbf{let} $S_i \gets$ Greedy$(X_i)$
    \label{step:greedy}

    \STATE \textbf{let} $S'_i \gets$ \FMV$_\alpha(S_i)$
    \label{step:fmv}

    \STATE \textbf{let} $X_{i+1} \gets X_i \setminus S_i$.

    \ENDFOR

    \RETURN best of $S_1, S_1', S_2$. \label{step:ret}
  \end{algorithmic}
  \medskip\hrule
  \caption{Submod-Max-Cardinality$(X,k,f)$}
  \label{alg:card}
  \vspace{-20pt}
\end{wrapfigure}

We now give our algorithm Submod-Max-Cardinality
(\lref[Figure]{alg:card}) for submodular
maximization: it has the same multi-pass structure as that of Lee et
al., but uses the greedy analysis above instead of a local-search
algorithm.

\begin{theorem}
  \label{thm:onepass}
  The algorithm Submod-Max-Cardinality is a $(4+\alpha)$-approximation.
\end{theorem}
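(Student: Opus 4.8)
The plan is to let $C^{*}$ denote an optimal feasible set, so that $|C^{*}| \le k$ and $f(C^{*}) = \OPT$, and then to show that $\OPT$ is bounded above by a fixed linear combination of $f(S_1)$, $f(S_1')$, and $f(S_2)$ — the three candidates the algorithm compares — with coefficients summing to $4+\alpha$. Since the algorithm returns the best of these three sets, that immediately gives the approximation ratio.

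First I would record the guarantee of each of the three components on the ``residual'' pieces of $C^{*}$. The first pass runs $\text{Greedy}$ on $X_1 = X \supseteq C^{*}$ with cardinality bound $k$, so \lref[Lemma]{lem:det-threshold} applied with $C = C^{*}$ gives $f(S_1) \ge \tfrac12 f(S_1 \cup C^{*})$. The call $S_1' \gets \FMV_\alpha(S_1)$ returns a subset of $S_1$ of value at least $\tfrac1\alpha \max_{T \sse S_1} f(T)$, and since $S_1 \cap C^{*} \sse S_1$ this yields $f(S_1') \ge \tfrac1\alpha f(S_1 \cap C^{*})$. For the second pass, set $C' := C^{*} \setminus S_1$; then $C' \sse X_2 = X \setminus S_1$ and $|C'| \le |C^{*}| \le k$, so \lref[Lemma]{lem:det-threshold} applied again — this time to the greedy run on ground set $X_2$, with $C = C'$ — gives $f(S_2) \ge \tfrac12 f(S_2 \cup C')$.

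Next I would glue these together with \lref[Lemma]{lem:cross-sums}, instantiated with $U = X$, $C = C^{*}$, the set $S_1$ from the first pass, $C' = C^{*}\setminus S_1$, and the set $S_2 \sse X \setminus S_1$ from the second pass: it gives $f(S_1 \cup C^{*}) + f(S_1 \cap C^{*}) + f(S_2 \cup C') \ge f(C^{*}) = \OPT$. Substituting the three inequalities from the previous paragraph (in the direction $f(S_1 \cup C^{*}) \le 2 f(S_1)$, $f(S_1 \cap C^{*}) \le \alpha f(S_1')$, $f(S_2 \cup C') \le 2 f(S_2)$) yields $2 f(S_1) + \alpha f(S_1') + 2 f(S_2) \ge \OPT$. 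If $A$ denotes the set returned in Step~\ref{step:ret}, then $f(A) \ge \max\{ f(S_1), f(S_1'), f(S_2)\}$, hence $(4+\alpha)\, f(A) \ge \OPT$, as claimed.

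As for difficulty: once the two lemmas are in place the proof is essentially bookkeeping, so I do not expect a genuine obstacle. The one spot that warrants care is the second invocation of \lref[Lemma]{lem:det-threshold}: one must verify that the residual optimum $C^{*} \setminus S_1$ is a legitimate competitor for the greedy run on $X_2$ (contained in $X_2$, size at most $k$) and that the lemma's conclusion is about $f$ itself restricted to subsets of the new ground set — both of which hold. I would also note that the argument is oblivious to which $\FMV$ routine is used, so taking $\alpha = 2.5$ gives a $6.5$-approximation and the trivial random-subset algorithm ($\alpha = 4$) gives an $8$-approximation.
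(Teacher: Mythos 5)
Your proof is correct and uses exactly the same ingredients as the paper's: \lref[Lemma]{lem:det-threshold} applied to $S_1$ against $C^*$ and to $S_2$ against $C^*\setminus S_1$, the $\FMV_\alpha$ guarantee on $S_1\cap C^*$, and \lref[Lemma]{lem:cross-sums}. The only difference is in the final bookkeeping: the paper splits into cases according to whether $f(S_1\cap C^*)\ge \epsilon\,\OPT$ and then optimizes $\epsilon=\alpha/(\alpha+4)$, whereas you aggregate the three bounds directly as $2f(S_1)+\alpha f(S_1')+2f(S_2)\ge \OPT$, arriving at the same $(4+\alpha)$ factor slightly more cleanly.
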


\begin{proof}
  Let $C^*$ be the optimal solution with $f(C^*) = \OPT$.  We know that
  $f(S_1) \geq \frac12 f(S_1 \cup C^*)$. Also, if $f(S_1 \cap C^*)$ is at least
  $\epsilon\, \OPT$, then we know that the $\alpha$-approximate
  algorithm \FMV$_{\alpha}$ gives us a value of at least
  $(\epsilon/\alpha)\OPT$. Else,
  \begin{gather}
    \label{eq:1}
    \tsty f(S_1) \geq \frac12 f(S_1 \cup C^*)  \geq \frac12 f(S_1 \cup C^*) +
    \frac12 f(S_1 \cap
    C^*) - \epsilon\, \OPT/2
  \end{gather}
  Similarly, we get that $f(S_2) \geq \frac12 f(S_2 \cup (C^* \setminus
  S_1))$. Adding this to~(\ref{eq:1}), we get
  \begin{align}
    2 \max(f(S_1), f(S_2)) &\geq f(S_1) + f(S_2) \notag \\
    &\geq \tsty \frac12 \big( f(S_1 \cup C^*) + f(S_1 \cap C^*) + f(S_2
    \cup (C^* \setminus S_1)) \big) - \epsilon \OPT/2 \label{eq:2} \\
    & \geq \tsty \frac12 f(C^*) - \epsilon \OPT/2 \label{eq:3} \\
    & \geq \tsty \frac12  (1 - \epsilon)\, \OPT \notag.
  \end{align}
  where we used \lref[Lemma]{lem:cross-sums} to get from~(\ref{eq:2})
  to~(\ref{eq:3}). Hence $\max\{f(S_1), f(S_2)\} \geq \frac{1 -
    \epsilon}{4}\, \OPT$. The approximation factor now is $\max\{
  \alpha/\epsilon, 4/(1-\epsilon)\}$. Setting $\epsilon =
  \frac{\alpha}{\alpha + 4}$, we get a $(4 + \alpha)$-approximation, as
  claimed.
\end{proof}

Using the known value of $\alpha = 2.5$ from Feige et al.~\cite{FMV07},
we get a $6.5$-approximation for submodular maximization under
cardinality constraints. While this is weaker than the
$3.23$-approximation of Vondr\'ak~\cite{Vondrak09}, or even the
$4$-approximation we could get from Lee et al.~\cite{LMNS-journal} for this
special case, the algorithm is faster, and the idea behind the
improvement works in several other contexts, as we show in the following
sections.

\section{Fast Algorithms for $p$-Systems and Knapsacks}

In this section, we show our greedy-style algorithms which achieve an
$O(p)$-approximation for submodular maximization over $p$-systems, and a
constant-factor approximation for submodular maximization over a
knapsack. Due to space constraints, many proofs are deferred to the
appendices.

\subsection{Submodular Maximization for Independence Systems}
\label{sec:submod-indep-sys}

Let $\Omega$ be a universe of elements and consider a collection ${\cal
  I} \subseteq 2^\Omega$ of subsets of $\Omega$. $(\Omega,{\cal I})$ is
called an {\em independence system} if (a) $\emptyset \in {\cal I}$, and
(b) if $X \in {\cal I}$ and $Y \subseteq X$, then $Y \in {\cal I}$ as
well. The subsets in ${\cal I}$ are called {\em independent}; for any
set $S$ of elements, an inclusion-wise maximal independent set $T$ of
$S$ is called a {\em basis} of $S$.  For brevity, we say that $T$ is a
basis, if it is a basis of $\Omega$.

\begin{definition}
  Given an independence system $(\Omega,{\cal I})$ and a subset $S
  \subseteq \Omega$. The {\em rank} $r(S)$ is defined as the cardinality
  of the \emph{largest} basis of $S$, and the {\em lower rank} $\rho(S)$
  is the cardinality of the \emph{smallest} basis of $S$. The
  independence system is called a $p$-independence system (or a
  $p$-system) if
  $ \max_{S \subseteq \Omega} \frac{r(S)}{\rho(S)} \leq p $.
\end{definition}
See, e.g.,~\cite{CCPV-journal} for a discussion of independence systems
and their relationship to other families of constraints; it is useful to
recall that intersections of $p$ matroids form a $p$-independent
system.

\subsubsection{The Algorithm for $p$-Independence Systems}

Suppose we are given an independence system $(\Omega, \mathcal{I})$, a
subset $X \sse \Omega$ and a non-negative submodular function $f$ that
is potentially non-monotone, but has $f(\emptyset) = 0$. We want to find
(or at least approximate) $\max_{S \sse X: S \in \mathcal{I}} f(S)$. The
greedy algorithm for this problem is what you would expect: start with
the set $S = \emptyset$, and at each step pick an element $e \in X
\setminus S$ that maximizes $f_S(e)$ and ensures that $S + e$ is
also independent.  If no such element exists, the algorithm terminates,
else we set $S \gets S + e$, and repeat. (Ideally, we would also
check to see if $f_S(e) \leq 0$, and terminate at the first time this
happens; we don't do that, and instead we add elements even when the
marginal gain is negative until we cannot add any more elements without
violating independence.) The proof of the following lemma appears in
\lref[Section]{sec:psystem-app}, and closely follows that for the monotone
case from~\cite{CCPV-journal}.

\begin{lemma}
  \label{lem:approx}
  For a $p$-independence system, if $S$ is the independent set returned
  by the greedy algorithm, then for any independent set $C$, $f(S) \geq
  \frac{1}{p+1} f(C \cup S)$.
\end{lemma}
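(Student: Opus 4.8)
The plan is to mimic the classical argument (from~\cite{Jenkyns76,Korte-Hausmann,CCPV-journal}) that the greedy algorithm is a $(p+1)$-approximation for monotone submodular maximization over a $p$-system, and check that the only place monotonicity was used can be replaced by the statement $f(S) \geq \frac{1}{p+1} f(C \cup S)$, which is exactly what we want. So I would \emph{not} try to lower bound $f(S)$ by $f(C)$ directly; instead I would track the quantity $f(C \cup S)$ throughout.

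Let $S = \{e_1, e_2, \ldots, e_m\}$ be the elements in the order greedy picked them, and let $S_t = \{e_1, \ldots, e_t\}$ be the first $t$ of them, so $S_0 = \emptyset$ and $S_m = S$. First I would write the telescoping identity $f(S) = \sum_{t=1}^{m} \big(f(S_t) - f(S_{t-1})\big) = \sum_{t=1}^{m} f_{S_{t-1}}(e_t)$, and similarly I want to charge the elements of $C \setminus S$ to the greedy steps. The heart of the argument is a combinatorial claim about $p$-systems: one can partition $C \setminus S$ (or more precisely the relevant part of $C$) into groups $C_1, \ldots, C_m$, where $C_t$ is ``charged'' to the greedy step $t$, such that (i) each $c \in C_t$ was a legal candidate at step $t$ — i.e. $S_{t-1} + c$ is independent — so that by the greedy choice $f_{S_{t-1}}(e_t) \geq f_{S_{t-1}}(c)$, and (ii) $\sum_t |C_t| \leq p \cdot m$, i.e. on average each greedy step is charged at most $p$ elements of $C$. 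Claim (i) combined with submodularity (elements of $C_t$ have marginal value under $S_{t-1}$ at least as large as their marginal value under $S \cup (\text{stuff})$, by decreasing marginals) lets me bound $\sum_{c \in C \setminus S} f_{S_{t(c)-1}}(c) \le \sum_t |C_t| \cdot f_{S_{t-1}}(e_t) \le p \sum_t f_{S_{t-1}}(e_t) = p \, f(S)$; meanwhile submodularity/subadditivity of the marginal function $f_S$ gives $f_S(C \setminus S) \le \sum_{c \in C\setminus S} f_S(c) \le \sum_{c} f_{S_{t(c)-1}}(c)$. Putting these together yields $f(S \cup C) - f(S) = f_S(C \setminus S) \le p \, f(S)$, which rearranges to $f(S) \geq \frac{1}{p+1} f(S \cup C)$, as desired. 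Note this argument never uses $f(C) \le f(S\cup C)$ or any monotonicity — it bounds $f(S\cup C)$ from above directly.

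The construction of the charging (and the bound $\sum_t |C_t| \le pm$) is where the $p$-system hypothesis enters and is the main obstacle. The standard way: process the greedy elements in \emph{reverse} order $e_m, e_{m-1}, \ldots, e_1$, maintaining a shrinking independent ``witness'' set; at each step one uses the fact that in a $p$-system, a basis of a set $A \subseteq \Omega$ has size at least $\frac1p$ times the size of the largest independent subset of $A$, to argue that not too many elements of $C$ fail to be extendable and hence get assigned to early greedy steps. Concretely, defining $A_t = S_t \cup C$ and comparing a basis of $A_t$ containing $S_t$ (size $\ge |S_t| = t$... one needs greedy-maximality to ensure $S_t$ extends to such a basis within $A_t$, or rather $S_t$ is independent so it does) against the independent set $C$ inside $A_t$ (size $\le r(A_t) \le p \cdot \rho(A_t) \le p \cdot t$ once $S_t$ is a basis of $A_t$), one shows $|C \cap A_t \text{ not yet charged}|$ decreases appropriately. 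I will follow the exact bookkeeping in~\cite{CCPV-journal} for this step, since it is delicate but entirely standard; the contribution of this lemma is simply the observation that the monotone argument's conclusion is really the stated inequality about $f(S \cup C)$.
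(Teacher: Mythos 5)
Your overall architecture is the same as the paper's (charge the elements of $C$ to greedy steps, use subadditivity of $f_S$, and never invoke monotonicity), but the step where you conclude $\sum_t |C_t|\, f_{S_{t-1}}(e_t) \le p \sum_t f_{S_{t-1}}(e_t)$ has a genuine gap as justified. You support it only with the aggregate bound $\sum_t |C_t| \le p\cdot m$ (``on average each greedy step is charged at most $p$ elements''), and an average bound is not enough for a weighted sum with non-uniform weights: e.g.\ with $m=2$, $\delta_1 = f_{S_0}(e_1)=10$, $\delta_2=0$, $|C_1|=2p$, $|C_2|=0$ one has $\sum_t|C_t|\delta_t = 20p > 10p = p\sum_t\delta_t$, even though the total charge is $pm$. (The situation is only worse here because in the non-monotone setting the greedy gains $\delta_t$ can be negative, since the algorithm deliberately keeps adding elements past the point of negative marginal value.) What is actually needed, and what the paper uses, is the stronger \emph{prefix} property $\sum_{j\le t}|C_j| \le p\,t$ for every $t$, combined with the fact that the greedy increments are non-increasing, $\delta_1 \ge \delta_2 \ge \cdots$ (this itself needs a one-line argument: $e_{t+1}$ was a feasible candidate at step $t$ by downward closure, so $\delta_t \ge f_{S_{t-1}}(e_{t+1}) \ge f_{S_t}(e_{t+1}) = \delta_{t+1}$ by the greedy rule and submodularity); an Abel-summation/rearrangement argument (this is exactly Claim~A.1 of~\cite{CCPV-journal}, which the paper's proof invokes) then yields $\sum_t |C_t|\delta_t \le p\sum_t \delta_t$ even when some $\delta_t$ are negative. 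Your statement (ii) should therefore be the family of prefix inequalities, not just the total, and the rearrangement step must be made explicit.

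A smaller issue: in your sketch of the charging you compare $C$ against a basis of $A_t = S_t \cup C$ and write $|C| \le r(A_t) \le p\,\rho(A_t) \le p\,t$ ``once $S_t$ is a basis of $A_t$''; but $S_t$ is \emph{not} a basis of $S_t \cup C$ for intermediate $t$ (many elements of $C$ are still addable), so this only gives the endpoint bound $|C| \le p\,m$. The correct bookkeeping (as in the paper) is to let $C_1\cup\cdots\cup C_t$ be the elements of $C$ already blocked by step $t$ and observe that $S_t$ \emph{is} a basis of $S_t \cup (C_1\cup\cdots\cup C_t)$, which is what delivers the prefix bound $\sum_{j\le t}|C_j| \le p\,|S_t| = p\,t$ from the $p$-system property. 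Since you say you will follow the bookkeeping of~\cite{CCPV-journal} verbatim, these repairs are available, but as written the proposal's key inequality does not follow from the claims you state.
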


\begin{wrapfigure}{r}{0.45\textwidth}
    \hrule\medskip
    \begin{algorithmic}[1]
      \STATE $X_1 \gets X$

      \FOR{$i = 1$ to $p+1$}

      \STATE \label{step:greed} $S_i \gets$ Greedy$(X_i, \I, f)$

      \STATE $S_i' \gets \FMV_\alpha(S_i)$

      \STATE $X_{i+1} \gets X_i \setminus S_i$

      \ENDFOR

      \RETURN $S \gets$ best among $\{S_i\}_{i = 1}^{p+1} \cup \{S_i'\}_{i
      = 1}^{p+1}$.
  \end{algorithmic}
  \medskip\hrule
  \caption{Submod-Max-$p$-System$(X, \I, f)$}
  \vspace{-10pt}
  \label{alg:psys}
\end{wrapfigure}

The algorithm Submod-Max-$p$-Systems (\lref[Figure]{alg:psys}) for
maximizing a non-monotone submodular function $f$ with $f(\emptyset) =
0$ over a $p$-independence system now immediately suggests itself.

\begin{theorem}
  \label{thm:greed}
  The algorithm Submod-Max-$p$-System is a $(1+\alpha) (p + 2 +
  1/p)$-approximation for maximizing a non-monotone submodular function
  over a $p$-independence system, where $\alpha$ is the approximation
  guarantee for unconstrained (non-monotone) submodular maximization.
\end{theorem}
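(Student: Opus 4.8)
The plan is to mimic the proof of \lref[Theorem]{thm:onepass}, but with $p+1$ phases instead of $2$, using \lref[Lemma]{lem:approx} in place of \lref[Lemma]{lem:det-threshold} and a suitably iterated version of \lref[Lemma]{lem:cross-sums}. Let $C^*$ be an optimal independent set with $f(C^*) = \OPT$. For $i = 1, \ldots, p+1$, write $C_i^* := C^* \setminus (S_1 \cup \cdots \cup S_{i-1})$ (so $C_1^* = C^*$), which is an independent set contained in $X_i$ by downward-closure; then \lref[Lemma]{lem:approx} applied within the instance on $X_i$ gives $f(S_i) \geq \frac{1}{p+1} f(S_i \cup C_i^*)$. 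The case analysis is the same as before: fix a threshold $\eps$; if some $f(S_i \cap C_i^*) \geq \eps\,\OPT$, then $\FMV_\alpha(S_i)$ returns a set of value at least $(\eps/\alpha)\OPT$ and we are done. Otherwise every $f(S_i \cap C_i^*) < \eps\,\OPT$, and we add up the $p+1$ inequalities $f(S_i) \geq \frac{1}{p+1}\big(f(S_i \cup C_i^*) + f(S_i \cap C_i^*)\big) - \eps\,\OPT/(p+1)$ (the extra term being harmless since $f(S_i \cap C_i^*)$ is small).

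The main work is the telescoping/submodularity step that replaces \lref[Lemma]{lem:cross-sums}: I need to show $\sum_{i=1}^{p+1}\big(f(S_i \cup C_i^*) + f(S_i \cap C_i^*)\big) \geq f(C^*)$. This should follow by induction on the number of phases using the same two submodularity moves as in the proof of \lref[Lemma]{lem:cross-sums}. Concretely, the lemma as stated already gives $f(S_1 \cup C^*) + f(S_1 \cap C^*) + f(S_2 \cup C_2^*) \geq f(C^*)$; to handle more phases I would apply the lemma recursively, peeling off one $S_i$ at a time: $f(S_i \cup C_i^*) + f(S_i \cap C_i^*) + f(S_{i+1} \cup C_{i+1}^*) \geq f(C_i^*)$, using that $C_{i+1}^* = C_i^* \setminus S_i$ and $S_{i+1} \subseteq U \setminus (S_1 \cup \cdots \cup S_i)$. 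Chaining these for $i = 1, \ldots, p$ and dropping the leftover nonnegative term $f(S_{p+1} \cup C_{p+1}^*)$ — or rather keeping it and bounding $f(C_{p+1}^*) \geq 0$ — yields the claimed sum inequality.

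Putting it together: in the "otherwise" case we get
\begin{align*}
  (p+1)\max_i f(S_i) \;\geq\; \sum_{i=1}^{p+1} f(S_i)
  &\;\geq\; \tsty\frac{1}{p+1}\sum_{i=1}^{p+1}\big(f(S_i \cup C_i^*) + f(S_i \cap C_i^*)\big) - \eps\,\OPT \\
  &\;\geq\; \tsty\frac{1}{p+1}\,\OPT - \eps\,\OPT,
\end{align*}
so $\max_i f(S_i) \geq \big(\frac{1}{(p+1)^2} - \frac{\eps}{p+1}\big)\OPT = \frac{1 - \eps(p+1)}{(p+1)^2}\,\OPT$. Combining the two cases, the approximation factor is $\max\{\alpha/\eps,\ (p+1)^2/(1 - \eps(p+1))\}$; balancing the two terms by choosing $\eps$ appropriately (which sets $\alpha/\eps = (p+1)^2/(1-\eps(p+1))$, i.e. $\eps = \frac{\alpha}{(p+1)^2 + \alpha(p+1)} = \frac{\alpha}{(p+1)(p+1+\alpha)}$) makes the bound $\frac{\alpha}{\eps} = (p+1)(p+1+\alpha) = (1+\alpha)\big((p+1) + \frac{p+1}{1+\alpha}\cdot\alpha/\alpha\big)$; a short simplification rewrites $(p+1)(p+1+\alpha)$ in the form $(1+\alpha)(p + 2 + 1/p)$ claimed — actually one checks $(p+1)(p+1+\alpha) \le (1+\alpha)(p+2+1/p)$ since $(p+1)^2 \le (1+\alpha)(p+2+1/p)$ already holds for $\alpha \ge 1$ as $(p+1)^2 = p^2 + 2p + 1$ and $(p+2+1/p)\cdot 2 \ge p^2+2p+1$ fails for large $p$, so I will instead carry the exact bound $(p+1)(p+1+\alpha)$ and verify it matches or is absorbed by the stated form. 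I expect the only genuine obstacle to be the bookkeeping in this final optimization of $\eps$ and matching constants; the structural part (iterating \lref[Lemma]{lem:cross-sums} and the case split) is a direct generalization of \lref[Theorem]{thm:onepass}.
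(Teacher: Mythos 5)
Your plan reproduces the paper's structure (the residual optima $C_i^*=C^*\cap X_i$, \lref[Lemma]{lem:approx} in each round, the case split on $f(S_i\cap C_i^*)\geq\eps\,\OPT$ with $\FMV_\alpha$), and the sum inequality you reduce to, $\sum_{i=1}^{p+1}\bigl(f(S_i\cup C_i^*)+f(S_i\cap C_i^*)\bigr)\geq f(C^*)$, is indeed true (a single application of \lref[Lemma]{lem:cross-sums} at $i=1$ plus non-negativity already gives it). But that inequality is too weak, and the mismatch you flagged at the end is not bookkeeping: its right-hand side is \emph{one} copy of $\OPT$, so after dividing by $(p+1)^2$ your second case only yields $\max_i f(S_i)\geq\frac{1-(p+1)\eps}{(p+1)^2}\,\OPT$, an $\Omega(p^2)$ factor. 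Your final bound $(p+1)(p+1+\alpha)$ is not absorbed by the claimed $(1+\alpha)(p+2+1/p)=(1+\alpha)(p+1)^2/p$: it exceeds it whenever $p^2>1+\alpha$ and is asymptotically worse by a factor of order $p/(1+\alpha)$. No choice of $\eps$ rescues your two-case analysis, since matching the claimed factor in the second case would require $1-(p+1)\eps\geq p/(1+\alpha)$, which is impossible once $p>1+\alpha$.

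The missing idea is the strengthened multi-round inequality that the paper imports from Claim~2.7 of Lee et al., whose right-hand side is (in effect) $p$ copies of $\OPT$ rather than one. The key observation is that in the ``otherwise'' case \emph{every} residual optimum still has nearly full value: since $C^*\setminus C_i=\cup_{j<i}(S_j\cap C_j)$, subadditivity gives $f(C_i)\geq f(C^*)-\sum_{j<i}f(S_j\cap C_j)\geq(1-p\eps)\OPT$, and combined with submodularity ($f(S_i\cup C_i)\geq f(C_i)+f(S_i)-f(S_i\cap C_i)\geq f(C_i)-\eps\,\OPT$) each of the $p+1$ rounds individually contributes roughly $\frac{1}{p+1}\OPT$ via \lref[Lemma]{lem:approx} --- the rounds do not merely telescope to a single $\OPT$ in total. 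This is exactly what the two set identities quoted in the paper's proof, $(\cup_{j<i}S_j\cup C^*)\cap(S_i\cup C_i)=C_i$ and $\cup_{j<i}(S_j\cap C_j)\cup C_i=C^*$, are used for. With this one gets $f(S)\geq\bigl(\tfrac{p}{(p+1)^2}-\eps\bigr)\OPT$ in the second case, and balancing against $\alpha/\eps$ with $\eps=\frac{\alpha}{1+\alpha}\cdot\frac{p}{(p+1)^2}$ gives precisely $(1+\alpha)\frac{(p+1)^2}{p}=(1+\alpha)(p+2+1/p)$. Your argument needs this per-round lower bound on $f(C_i)$ (or the full Claim~2.7 statement) in place of the single application of \lref[Lemma]{lem:cross-sums}; otherwise you lose a factor of about $p$.
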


\begin{proof}
  Let $C^*$ be an optimal solution with $\OPT = f(C^*)$, and let $C_i = C^*
  \cap X_i$ for all $i \in [p+1]$---hence $C_1 = C^*$. Note that $C_i$ is
  a feasible solution to the greedy optimization in
  \lref[Step]{step:greed}. Hence, by \lref[Lemma]{lem:approx}, we know that
  $f(S_i) \geq \frac{1}{p+1}f(C_i \cup S_i)$. Now, if for some $i$, it
  holds that $f(S_i \cap C_i) \geq \epsilon \OPT$ (for $\epsilon > 0$ to
  be chosen later), then the guarantees of $\FMV_\alpha$ ensure that
  $f(S_i') \geq (\epsilon \OPT)/\alpha$, and we will get a
  $\alpha/\epsilon$-approximation. Else, it holds for all $i \in [p+1]$
  that
  \begin{gather}
    \label{eq:15}
    \tsty f(S_i) \geq \frac{1}{p+1}f(C_i \cup S_i) + f(C_i \cap S_i) -
    \epsilon\,\OPT
  \end{gather}
  Now we can add all these inequalities, divide by $p+1$, and use the
  argument from~\cite[Claim~2.7]{LMNS-journal} to infer that
  \begin{gather}
    \tsty f(S) \geq \frac{p}{(p+1)^2} f(C^*) - \epsilon\,\OPT = \OPT
    \left(\frac{p}{(p+1)^2} - \epsilon \right).
  \end{gather}
  (While Claim~2.7 of~\cite{LMNS-journal} is used in the context of a
  local-search algorithm, it uses just the submodularity of the function
  $f$, and the facts that $(\cup_{j < i}S_j \cup C) \cap (S_i \cup C_i)
  = C_i$ and $(\cup_{j < i} (S_j \cap C_j) \cup C_i = C$ for every $i$.)
  Thus the approximation factor is $\max\{\alpha/\epsilon,
  (\frac{p}{(p+1)^2} - \epsilon)^{-1}\}$.  Setting $\epsilon =
  \frac{\alpha}{1+\alpha}\frac{p}{(p+1)^2}$, we get the claimed
  approximation ratio.
\end{proof}

Note that even using $\alpha = 1$, our approximation factors differ from the
ratios in Lee et al.~\cite{LMNS-journal,LeeSV09} by a small constant factor. However, the proof here is
somewhat simpler and also works seamlessly for all $p$-independence
systems instead of just intersections of matroids. Moreover our running
time is only linear in the number of matroids, instead of being
exponential as in the local-search: previously, no polynomial time algorithms were known for this problem if $p$ was super-constant. Note that running the algorithm just
twice instead of $p+1$ times reduces the run-time further; we can then
use \lref[Lemma]{lem:cross-sums} instead of the full power of
\cite[Claim~2.7]{LMNS-journal}, and hence the constants are slightly worse.



\subsection{Submodular Maximization over Knapsacks}
\label{sec:subm-knapsack}

The paper of Sviridenko~\cite{Sviridenko04} gives a greedy algorithm
with partial enumeration that achieves a
$\frac{\rme}{\rme-1}$-approximation for \emph{monotone} submodular
maximization with respect to a knapsack constraint. In particular, each
element $e \in X$ has a size $c_e$, and we are given a bound $B$: the
goal is to maximize $f(S)$ over subsets $S \sse X$ such that $\sum_{e
  \in S} c_e \leq B$. His algorithm is the following---for each possible
subset $S_0 \subseteq X$ of at most three elements, start with $S_0$ and
iteratively include the element which maximizes the gain in the function
value per unit size, and the resulting set still fits in the knapsack.
(If none of the remaining elements gives a positive gain, or fit in the
knapsack, stop.)  Finally, from among these $O(|X|^3)$ solutions, choose
the best one---Sviridenko shows that in the monotone submodular case,
this is an $\frac{\rme}{\rme-1}$-approximation algorithm.  One can modify
Sviridenko's algorithm and proof to show the following result for
non-monotone submodular functions. (The details are in
\lref[Appendix]{sec:sviridenko}).
\begin{theorem}
  \label{thm:sviri}
  There is a polynomial-time algorithm that given the above input,
  outputs a polynomial sized collection of sets such that for any valid
  solution $C$, the collection contains a set $S$ satisfying $f(S) \geq
  \frac12 f(S \cup C)$.
\end{theorem}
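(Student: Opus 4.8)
The plan is to run Sviridenko's partial-enumeration greedy for knapsack-constrained monotone submodular maximization~\cite{Sviridenko04} and adapt its analysis into a per-$C$ statement, in the spirit of Lemmas~\ref{lem:det-threshold} and~\ref{lem:approx}. The algorithm is unchanged: for every $S_0\subseteq X$ with $|S_0|\le 3$ and $\sum_{e\in S_0}c_e\le B$, run the density-greedy completion---repeatedly add the still-affordable element of largest positive marginal-value-per-unit-size---and output the collection of all completed sets (together with the seeds $S_0$). Two changes to the analysis are needed. First, we never use optimality of $C$ or monotonicity to replace $f(C\cup G)$ by $f(C)$; instead we carry $z(G):=f(C\cup G)-f(G)$ and aim at $z(S)\le f(S)$, which is exactly $f(S)\ge\tfrac12 f(S\cup C)$. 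Second, where the monotone proof contracts $f(C)-f(G_i)\le(1-c_{g_i}/B)\bigl(f(C)-f(G_{i-1})\bigr)$, we instead use the submodularity facts that $z(G_i)$ is non-increasing along any greedy trajectory $G_0\subseteq G_1\subseteq\cdots$ (because $f_{G_{i-1}}(g_i)\ge f_{C\cup G_{i-1}}(g_i)$), and that on the stretch where greedy only adds elements of $C$ it satisfies the same contraction $z(G_i)\le(1-c_{g_i}/B)\,z(G_{i-1})$: there $f_{C\cup G_{i-1}}(g_i)=0$, so $z(G_i)=z(G_{i-1})-\bigl(f(G_i)-f(G_{i-1})\bigr)$, and $f(G_i)-f(G_{i-1})\ge\tfrac{c_{g_i}}{B}z(G_{i-1})$ by the usual averaging over $C\setminus G_{i-1}$, all of which fits in $G_{i-1}$ while $G_{i-1}\subseteq C$.

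Now fix a valid $C$. If $|C|\le 3$ then $C$ is itself a seed, its completion $S\supseteq C$ has $f(S\cup C)=f(S)$, and we are done; so assume $|C|\ge 4$. Order $C$ in its own greedy order $o_1,o_2,\dots$, take $S_0=\{o_1,o_2,o_3\}$ (one of the enumerated seeds), and let $S$ be its greedy completion; the aim is to show that one of $S_0,S$ meets the bound. By submodularity and the choice of $S_0$ as the three largest marginals of $C$, every $o_j$ with $j\ge 4$---hence every element of $C$ that greedy misses---has $f_G(o_j)\le f_{S_0}(o_j)\le f_{\{o_1,o_2\}}(o_3)\le\tfrac13 f(S_0)$ for every trajectory set $G\supseteq S_0$ not containing $o_j$; and since the completion makes only non-negative gains, $f(S)\ge f(S_0)$.

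Suppose for contradiction that $f(S_0)<\tfrac12 f(C)$ and $f(S)<\tfrac12 f(S\cup C)$, i.e.\ $z(S_0)>\tfrac12 f(C)$ and $z(S)>f(S)$, so (by monotonicity of $z$ along the trajectory) $z(G_i)>f(S)\ge 0$ throughout. One now argues as in Sviridenko's proof, and this case analysis is the step I expect to be the main obstacle. In the base case---greedy halts with $S\subseteq C$---some unchosen element of $C$ of positive marginal must be too large to fit in $S$, and by the displayed bound it has size below $B/3$, so $\mathrm{size}(S)>2B/3$; feeding this into the contraction $z(S)\le z(S_0)\,e^{-(\mathrm{size}(S)-\mathrm{size}(S_0))/B}$, together with $z(S_0)\le f(C)$ and $f(S)=f(C)-z(S)$, and accounting carefully for the case $f(S_0)\approx\tfrac12 f(C)$ (this is where having three elements in the seed is exploited), forces $z(S)\le f(S)$, a contradiction. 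In the remaining cases greedy leaves $C$ before halting; there one shows the contraction still buys enough---using that $z$ is non-increasing and that at the first outside pick $g$ the ratio $f_G(g)/c_g$ still dominates the average density of the unchosen part of $C$---to again reach $z(S)\le f(S)$. The delicate points are the sign of $f_C(g)$, which need not be non-positive for a non-optimal $C$, and tuning the bookkeeping so that a three-element seed suffices to land on the constant $\tfrac12$; these are carried out in Appendix~\ref{sec:sviridenko}.
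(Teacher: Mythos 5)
There is a genuine gap: the heart of the theorem---the case analysis that actually delivers the factor $\tfrac12$---is not carried out in your proposal. You explicitly flag it as ``the main obstacle'' and defer ``the delicate points'' to an appendix you do not supply, so what you have is a plan rather than a proof. Worse, the plan follows Sviridenko's $(1-1/\rme)$-contraction scheme (tracking $z(G)=f(C\cup G)-f(G)$ and contracting it by $(1-c_{g_i}/B)$), and the one concrete step you do give in the key case is incorrect: from the displayed bound $f_G(o_j)\le\tfrac13 f(S_0)$---a bound on the \emph{value} of a missed element of $C$---you infer that its \emph{size} is below $B/3$ and hence that $c(S)>2B/3$. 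No size bound follows from a value bound (an element of $C$ may have cost up to $B$ yet negligible marginal value), so the factor $\rme^{-(c(S)-c(S_0))/B}$ cannot be controlled as you hope. In addition, your contraction is only justified while the greedy trajectory stays inside $C$ (so that every element of $C\setminus G_{i-1}$ is still affordable and the picked element lies in $C$); once greedy selects an element outside $C$, or an element of $C$ no longer fits, you have only that $z$ is non-increasing, which by itself makes no progress toward $z(S)\le f(S)$, and your remaining cases are pure hope (``the contraction still buys enough''). The paper's own footnote records that a preliminary version claiming a $(1-1/\rme)$-type bound by exactly this kind of adaptation was erroneous, which is a strong hint that this route does not close easily.

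The paper's actual argument avoids the contraction entirely and is closer in spirit to Lemma~\ref{lem:det-threshold}. Its algorithm outputs \emph{every prefix} $S_t$ of each greedy run (not only seeds and completed sets), and the analysis stops at the first step $\tau+1$ at which either the best density $\theta_{\tau+1}$ is non-positive (then $f(S_\tau)\ge f(S_\tau\cup C)$ outright) or an element $i_{\tau+1}\in C$ is rejected for capacity. In the latter case Lemma~\ref{lemma:sviri-3} shows $f_Y(S_\tau+i_{\tau+1})\ge\tfrac12 f_Y(S_\tau\cup C)$ by a direct averaging argument: otherwise some element of $C$ would have density exceeding $f_Y(S_\tau+i_{\tau+1})/B$ at every step, and since the accumulated cost exceeds $B$ at step $\tau+1$, the value accrued would exceed $f_Y(S_\tau+i_{\tau+1})$, a contradiction. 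The three-element seed is used only to bound the value lost by being unable to add $i_{\tau+1}$, namely $f_{S_\tau}(i_{\tau+1})\le f(Y)/3$ via~\eqref{eq:ksviri-5}. Note also that the paper's good set is the intermediate prefix $S_\tau$, whereas your family contains only seeds and completed runs; this particular deviation is repairable (a short submodularity argument transfers the guarantee from a prefix to its completion, since later additions have nonnegative greedy marginals and $f_{A_i\cup C}(e)\le f_{A_i}(e)$), but you neither establish the bound for a prefix nor supply that transfer. To fix the proposal, replace the contraction bookkeeping with a direct density argument of the above kind, or genuinely carry out and verify the deferred case analysis after repairing the $B/3$ step.
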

Note that the tight example for cardinality constraints shows that we
cannot hope to do better than a factor of $1/2$. Now using an argument
very similar to that in \lref[Theorem]{thm:onepass} gives us the
following result for non-monotone submodular maximization with respect
to a knapsack constraint.
\begin{theorem}
  There is an $(4+\alpha)$-approximation for the problem of maximizing a
  submodular function with respect a knapsack constraint, where $\alpha$
  is the approximation guarantee for unconstrained (non-monotone)
  submodular maximization.
\end{theorem}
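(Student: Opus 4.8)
The plan is to run the two-pass template of \lref[Theorem]{thm:onepass} essentially verbatim, using \lref[Theorem]{thm:sviri} in place of the greedy subroutine: \lref[Theorem]{thm:sviri} is exactly the knapsack analogue of \lref[Lemma]{lem:det-threshold}, except that it hands us a polynomial-sized \emph{collection} of candidate sets rather than a single one. Concretely, the algorithm first runs the procedure of \lref[Theorem]{thm:sviri} on the ground set $X$ to obtain a polynomial-sized collection $\mathcal{L}_1$ of feasible sets; then, for each $S \in \mathcal{L}_1$, it runs the same procedure on the residual knapsack instance over $X \setminus S$ (restrict the sizes and keep the budget $B$) to obtain a polynomial-sized collection $\mathcal{L}_2(S)$; finally, for every set $T$ in $\mathcal{L}_1 \cup \bigcup_{S \in \mathcal{L}_1} \mathcal{L}_2(S)$ it also computes $\FMV_\alpha(T)$, and it returns the best of all these sets. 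Since $|\mathcal{L}_1|$ and each $|\mathcal{L}_2(S)|$ are polynomial in $|X|$, the whole candidate pool has polynomial size (the product of two polynomials), so the algorithm runs in polynomial time, making one call to $\FMV_\alpha$ per candidate.

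For the analysis, let $C^*$ be an optimal solution with $\OPT = f(C^*)$. Applying \lref[Theorem]{thm:sviri} on $X$ with the valid solution $C^*$ gives some $S_1 \in \mathcal{L}_1$ with $f(S_1) \geq \frac12 f(S_1 \cup C^*)$. Because the knapsack constraint is downward closed, $C^* \setminus S_1$ is a valid solution inside $X \setminus S_1$, so applying \lref[Theorem]{thm:sviri} on $X \setminus S_1$ gives some $S_2 \in \mathcal{L}_2(S_1)$ with $f(S_2) \geq \frac12 f(S_2 \cup (C^* \setminus S_1))$. From here I would copy the argument of \lref[Theorem]{thm:onepass}: if $f(S_1 \cap C^*) \geq \epsilon\,\OPT$ then $\FMV_\alpha(S_1)$, which is in our candidate pool, already has value at least $(\epsilon/\alpha)\OPT$; otherwise $f(S_1) \geq \frac12 f(S_1 \cup C^*) + \frac12 f(S_1 \cap C^*) - \epsilon\,\OPT/2$, and adding the bound on $f(S_2)$ and invoking \lref[Lemma]{lem:cross-sums} with $C = C^*$ yields $2\max(f(S_1),f(S_2)) \geq \frac12 (1-\epsilon)\,\OPT$. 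Balancing $\alpha/\epsilon$ against $4/(1-\epsilon)$ by taking $\epsilon = \alpha/(\alpha+4)$ gives the claimed $(4+\alpha)$-approximation.

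The only point that needs care is that \lref[Theorem]{thm:sviri} guarantees the existence of a good $S_1 \in \mathcal{L}_1$ but does not identify it; this is exactly why the algorithm must, for \emph{every} $S \in \mathcal{L}_1$, build a second-pass collection $\mathcal{L}_2(S)$ and apply $\FMV_\alpha(S)$, and then take the best set overall — so that the analysis only needs to exhibit one good triple $(S_1, \FMV_\alpha(S_1), S_2)$ among the candidates. One should also double-check that the residual instance over $X \setminus S$ is again a legitimate knapsack instance (it is) and that the nested enumeration keeps the candidate count polynomial (it does). I do not expect any genuine obstacle beyond faithfully transcribing the inequalities of \lref[Theorem]{thm:onepass}; all the substantive work lies in \lref[Theorem]{thm:sviri} itself, which is proved in \lref[Appendix]{sec:sviridenko}.
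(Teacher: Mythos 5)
Your proposal is correct and takes essentially the same route as the paper, which itself only sketches this step by saying the argument of \lref[Theorem]{thm:onepass} applies with \lref[Theorem]{thm:sviri} in place of \lref[Lemma]{lem:det-threshold}. Your handling of the fact that \lref[Theorem]{thm:sviri} returns a collection rather than a single set --- enumerating a second-pass collection and an $\FMV_\alpha$ call for every first-pass candidate and taking the best --- is the natural way to make that sketch precise, and the resulting case analysis and choice $\epsilon = \alpha/(\alpha+4)$ match the paper's argument.
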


\section{Constrained Submodular Maximization in the Secretary Setting}
\label{sec:secretaries}

In this section, we will give algorithms for submodular maximization in
the secretary setting: first subject to a cardinality constraint, then
with respect to a partition matroid, and finally an algorithm for
general matroids. The main algorithmic concerns tackled in this section
when developing secretary algorithms are: (a)~previous algorithms for
non-monotone maximization required local-search, which seems difficult
in an online secretary setting, so we developed greedy-style algorithms;
(b)~we need multiple passes for non-monotone optimization, and while
that can be achieved using randomization and running algorithms in
parallel, these parallel runs of the algorithms may have correlations
that we need to control (or better still, avoid); and of course (c)~the
marginal value function changes over the course of the algorithm's
execution as we pick more elements---in the case of partition matroids,
e.g., this ever-changing function creates several complications.

We also show an information theoretic lower bound: no secretary
algorithm can approximately maximize a submodular function subject to a
cardinality constraint $k$ to a factor better than some universal
constant greater than 1, independent of $k$ (This is ignoring
computational constraints, and so the computational inapproximability of
offline submodular maximization does not apply). This is in contrast to
the additive secretary problem, for which Kleinberg gives a secretary
algorithm achieving a $\smash{\frac{1}{1-5/\sqrt{k}}}$-approximation
\cite{Kleinberg-multiple}. This lower bound is found in
\lref[Appendix]{sec:lower-bounds}. (For a discussion about independent
work on submodular secretary problems, see \S\ref{sec:related-indep}.)

\subsection{Subject to a Cardinality Constraint}
\label{sec:secy-card}

The offline algorithm presented in \lref[Section]{sec:submax-card}
builds three potential solutions and chooses the best amongst them. We
now want to build just one solution in an \emph{online} fashion, so that
elements arrive in random order, and when an element is added to the
solution, it is never discarded subsequently. We first give an online
algorithm that is given the optimal value $\OPT$ as input but where the
elements can come in \emph{worst-case} order (we call this an ``online
algorithm with advice''). Using sampling ideas we can estimate $\OPT$,
and hence use this advice-taking online algorithm in the secretary model
where elements arrive in random order.

To get the advice-taking online algorithm, we make two changes. First,
we do not use the greedy algorithm which selects elements of highest
marginal utility, but instead use a \emph{threshold algorithm}, which
selects any element that has marginal utility above a certain threshold.
Second, we will change \lref[Step]{step:fmv} of Algorithm
Submod-Max-Cardinality to use FMV$_4$, which simply selects a random
subset of the elements to get a $4$-approximation to the unconstrained
submodular maximization problem~\cite{FMV07}. The \emph{Threshold
  Algorithm} with inputs $(\tau,k)$ simply selects each element as it
appears if it has marginal utility at least $\tau$, up to a maximum of
$k$ elements.
\begin{lemma}[Threshold Algorithm]
  \label{lem:threshold-semion}
  Let $C^*$ satisfy $f(C^*) = \OPT$. The threshold algorithm on inputs
  $(\tau,k)$ returns a set $S$ that either has $k$ elements and hence a
  value of at least $\tau k$, or a set $S$ with value $f(S) \geq f(S
  \cup C^*) - |C^*|\tau$.
\end{lemma}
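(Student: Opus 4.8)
The plan is to do a simple case analysis on whether the threshold algorithm fills up to $k$ elements or stops early. Let $S$ be the set returned. If $|S| = k$, then since every element we added had marginal utility at least $\tau$ at the moment it was added, a telescoping sum of marginals gives $f(S) = \sum_{j=1}^{k} \big(f(S_j) - f(S_{j-1})\big) \geq \tau k$, where $S_0 = \emptyset \subseteq S_1 \subseteq \cdots \subseteq S_k = S$ is the chain of partial solutions; this handles the first alternative. (We use $f(\emptyset)=0$ here.)

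The interesting case is when the algorithm terminates with $|S| < k$, i.e.\ it examined every element of $\Omega$ and at the end held fewer than $k$ of them. I would argue as follows. Consider any $e \in C^* \setminus S$. Since $e$ was not picked and $|S| < k$ (so the cardinality cap was never the reason for rejection once $e$ appeared — more carefully, at the moment $e$ was inspected the current partial set $S_e$ had fewer than $k$ elements, because $S_e \subseteq S$ and $|S|<k$), it must be that $e$'s marginal utility at that time was below $\tau$: $f_{S_e}(e) < \tau$. By submodularity (decreasing marginals) and $S_e \subseteq S$, we get $f_S(e) \leq f_{S_e}(e) < \tau$. Now bound $f_S(C^* \setminus S)$ by summing these marginals one element at a time along any ordering of $C^* \setminus S$: repeatedly applying submodularity, $f(S \cup C^*) - f(S) = f_S(C^* \setminus S) \leq \sum_{e \in C^* \setminus S} f_S(e) < |C^* \setminus S|\,\tau \leq |C^*|\,\tau$. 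Rearranging yields $f(S) \geq f(S \cup C^*) - |C^*|\tau$, which is the second alternative.

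The only mild subtlety — and the step I would be most careful about — is the claim that when $e \in C^* \setminus S$ is inspected, the current partial solution has strictly fewer than $k$ elements, so that $e$ was rejected \emph{on account of its marginal value} rather than because the budget $k$ was already exhausted. This is immediate here because the partial solution only grows and its final size $|S|$ is already $< k$, so it was $< k$ at every earlier time too; hence the cardinality cap was never binding, and every rejection in this case was a threshold rejection. Everything else is just telescoping sums and the standard subadditivity-type consequence of submodularity (Proposition~\ref{prob:submodfacts}), applied to the marginal function $f_S$ which is itself submodular with $f_S(\emptyset)=0$.
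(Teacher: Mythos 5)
Your proof is correct and is essentially the paper's argument in contrapositive form: the paper assumes $f(S) < f(S\cup C^*) - |C^*|\tau$ and derives by averaging a single element $e \in C^*\setminus S$ with $f_S(e) > \tau$ that would have been selected, whereas you directly bound every $e \in C^*\setminus S$ by $f_S(e) \le f_{S_e}(e) < \tau$ and sum, using the same ingredients (the cardinality cap never binding when $|S|<k$, decreasing marginals over time, and subadditivity of $f_S$). No gap; the subtlety you flag about threshold versus budget rejections is exactly the point the paper's proof also relies on.
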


\begin{proof}
  The claim is immediate if the algorithm picks $k$ elements, so suppose
  it does not pick $k$ elements, and also $f(S) < f(S \cup C^*) - |C^*|\tau$.
  Then $f_S(C^*) > |C^*|\tau$, or $\tau < \frac{f_S(C^*)}{|C^*|} \leq \frac{\sum_{e
      \in C^*} f_S(e)}{|C^*|}$. By averaging, this implies there exists an
  element $e \in C^*$ such that $f_S(e) > \tau$; this element cannot have
  been chosen into $S$ (otherwise the marginal value would be $0$), but
  it would have been chosen into $S$ when it was considered by the
  algorithm (since at that time its marginal value would only have been
  higher). This gives the desired contradiction.
\end{proof}

\begin{theorem}
  \label{thm:rand-return}
  If we change Algorithm Submod-Max-Cardinality from
  \S\ref{sec:submax-card} to use the threshold algorithm with threshold
  $\tau = \frac{\OPT}{7k}$ in \lref[Step]{step:greedy}, and to use the
  random sampling algorithm \FMV$_4$ in \lref[Step]{step:fmv}, and
  return a (uniformly) random one of $S_1, S_1', S_2$ in
  \lref[Step]{step:ret}, the expected value of the returned set is at
  least $\OPT/21$.
\end{theorem}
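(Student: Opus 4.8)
The plan is to use the fact that the modified algorithm returns one of the three sets $S_1$, $S_1'$, $S_2$ uniformly at random, and that the only other internal randomness is that used by $\FMV_4$; hence the expected value of the returned set equals $\tfrac13\big(f(S_1) + \E[f(S_1')] + f(S_2)\big)$, and it suffices to show this sum is at least $\OPT/7$. Fix an optimal solution $C^*$ with $f(C^*) = \OPT$ and $|C^*| \le k$, and recall $\tau = \OPT/(7k)$, so that $|C^*|\,\tau \le \OPT/7$.

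First I would dispose of the ``threshold procedure fills up'' cases. If the phase-$1$ run returns a set $S_1$ with $k$ elements, then since every selected element had marginal value at least $\tau$ when added, telescoping gives $f(S_1) \ge k\tau = \OPT/7$, so the returned set already has expected value at least $\tfrac13 f(S_1) \ge \OPT/21$ (the other two summands being nonnegative); the identical argument covers the case $|S_2| = k$. So assume $|S_1|, |S_2| < k$. Now \lref[Lemma]{lem:threshold-semion} applies twice: to $S_1$ against $C^*$, giving $f(S_1) \ge f(S_1 \cup C^*) - |C^*|\tau \ge f(S_1 \cup C^*) - \OPT/7$; and to $S_2$ against the residual optimum $C^* \setminus S_1$, a feasible set of size $\le k$ contained in $X_2 = X \setminus S_1$, giving $f(S_2) \ge f(S_2 \cup (C^* \setminus S_1)) - \OPT/7$. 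Also, since $S_1 \cap C^* \subseteq S_1$, the $\FMV_4$ guarantee yields $\E[f(S_1')] \ge \tfrac14\, f(S_1 \cap C^*)$.

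To finish I would split on the value of $f(S_1 \cap C^*)$. If $f(S_1 \cap C^*) \ge \tfrac47 \OPT$, then $\E[f(S_1')] \ge \OPT/7$ and we are done through the $S_1'$ summand. Otherwise, \lref[Lemma]{lem:cross-sums} with $C = C^*$ and $C' = C^* \setminus S_1$ gives $f(S_1 \cup C^*) + f(S_2 \cup (C^* \setminus S_1)) \ge \OPT - f(S_1 \cap C^*) > \tfrac37\OPT$; subtracting the two $\OPT/7$ losses leaves $f(S_1) + f(S_2) > \tfrac17 \OPT$, and again the returned set has expected value more than $\OPT/21$. The one genuinely delicate point is making sure \lref[Lemma]{lem:threshold-semion} transfers to the second run against $C^* \setminus S_1$ — that is, that every element of $C^* \setminus S_1$ really was considered by the phase-$2$ threshold process; in the offline reading this is automatic since the threshold procedure scans all of $X_2$, and in the online reading it holds because an element is withheld from phase $2$ only when phase $1$ already selected it. Everything else is arithmetic, and the constants ($\tau = \OPT/(7k)$ and the split point $\tfrac47\OPT$) are tuned precisely so that each of the three sub-estimates bottoms out at $\OPT/21$; note that $S_1'$ is used only in the first sub-case and $S_1,S_2$ only in the second, so all three returned candidates are essential.
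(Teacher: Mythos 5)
Your proposal is correct and follows essentially the same route as the paper: the same case split (a run fills up to $k$ elements; $f(S_1\cap C^*)\ge 4\tau k$ so \FMV$_4$ delivers $\tau k$; otherwise apply \lref[Lemma]{lem:threshold-semion} to both runs and combine via \lref[Lemma]{lem:cross-sums} to get $f(S_1)+f(S_2)\ge \OPT-6\tau k=\tau k$), with the same constants. The extra care you take (verifying the threshold lemma applies to the second run against $C^*\setminus S_1$, and writing the \FMV$_4$ bound in expectation) is sound and only makes explicit what the paper leaves implicit.
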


\begin{proof}
  We show that $f(S_1) + f(S_1') + f(S_2) \geq \tau k = \frac{\OPT}{7}$,
  and picking a random one of these gets a third of that in expectation.
  Indeed, if $S_1$ or $S_2$ has $k$ elements, then $f(S_1) + f(S_2) \geq
  \tau k$. Else if $f(S_1 \cap C^*) \geq 4 \tau k$, then \FMV$_4$
  guarantees that $f(S_1') \geq \tau k$. Else $f(S_1) + f(S_2) \geq
  (f(S_1 \cup C^*) - \tau k) + (f(S_2 \cup C^*) - \tau k) + (f(S_1 \cap
  C^*) - 4\tau k)$, which by \lref[Lemma]{lem:cross-sums} is at least
  $\OPT - 6 \tau k = \tau k$.
\end{proof}

\begin{observation}
  Given the value of $\OPT$, the algorithm of
  \lref[Theorem]{thm:rand-return} can be implemented in an online
  fashion where we (irrevocably) pick at most $k$ elements.
\end{observation}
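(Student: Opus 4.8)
The plan is to observe that all three candidate sets $S_1$, $S_1'$, $S_2$ produced by the modified Submod-Max-Cardinality algorithm of \lref[Theorem]{thm:rand-return} can be built from purely causal information about the stream, and that since the algorithm returns a \emph{uniformly random} one of them, we may commit in advance (using a fresh coin) to which one we will output. As each of $S_1$, $S_1'$, $S_2$ individually has at most $k$ elements, committing up front means we never irrevocably pick more than $k$ elements, even though the three sets together could have up to $2k$.

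Concretely: at the very start, draw $R$ uniformly from $\{1,1',2\}$; the online algorithm will irrevocably select exactly $S_R$. In all three cases we maintain the set $S_1$ \emph{in memory} (without committing to its elements) — this is legitimate in the secretary model, where we may record the identities of all elements seen so far and are only forbidden from re-selecting a previously rejected element. Membership of an arriving element $e$ in $S_1$ is decided by the threshold test $f_{S_1}(e) \geq \tau$, with $\tau = \OPT/7k$ available as advice, against the value of $S_1$ at that moment, which depends only on earlier elements; hence the virtual $S_1$ is well-defined online.

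If $R=1$ we just run the threshold algorithm on the stream, irrevocably picking each arriving $e$ with $f_{S_1}(e) \geq \tau$ until we have $k$ elements. If $R=1'$, recall $\FMV_4(S_1)$ keeps each element of $S_1$ independently with probability $1/2$; so each time the threshold rule adds $e$ to the virtual $S_1$, we flip an independent fair coin and on heads irrevocably pick $e$ — the output is then $S_1' \sse S_1$, at most $k$ elements, with exactly the randomness used by $\FMV_4$. If $R=2$, then $S_2$ is the threshold algorithm run on $X_2 = X\setminus S_1$: for each arriving $e$ we first apply the $S_1$-test (updating the virtual $S_1$, picking nothing), and only if $e\notin S_1$ do we apply the test $f_{S_2}(e)\geq\tau$ for $S_2$, irrevocably picking $e$ (up to $k$ of them) if it passes. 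In every case at most $k$ elements are picked, the joint distribution of $(S_1,S_1',S_2,R)$ is exactly the one analysed in \lref[Theorem]{thm:rand-return}, and so the expected output value is at least $\OPT/21$.

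The only point requiring care is the apparent circularity in building $S_2$: it is defined on the subsampled universe $X_2 = X\setminus S_1$, which is not known up front. This is resolved by noting that whether $e$ lands in $S_1$ is decided the instant $e$ arrives (a function of $e$ and the stream prefix before it), so the substream $X_2$ — and the running marginals $f_{S_2}(\cdot)$ it induces — can be produced on the fly with no forward reference. Everything else (the threshold comparisons, the cap of $k$, the $1/2$-coins) is manifestly single-pass implementable, so no further calculation is needed.
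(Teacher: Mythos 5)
Your proposal is correct and follows essentially the same route as the paper: pre-commit by a uniform random choice to one of $S_1, S_1', S_2$, observe that $S_1$ (and hence the $1/2$-subsampled $S_1'$) is determined online by the threshold test, and build $S_2$ on the fly from elements that fail to enter $S_1$. The paper states this in three brief sentences; your write-up just makes the same argument explicit, including the (correct) resolution of the apparent circularity in defining $X_2 = X \setminus S_1$ online.
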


\begin{proof}
  We can randomly choose which one of $S_1, S_1', S_2$ we want to output
  before observing any elements. Clearly $S_1$ can be determined online,
  as can $S_2$ by choosing any element that has high marginal value and
  is not chosen in $S_1$. Moreover, $S_1'$ just selects elements from
  $S_1$ independently with probability $1/2$.
\end{proof}

\begin{observation}
  In both the algorithms of \lref[Theorems]{thm:onepass}
  and~\ref{thm:rand-return}, if we use some value $Z \leq \OPT$ instead
  of $\OPT$, the returned set has value at least $Z/(4+\alpha)$, and
  expected value at least $Z/21$, respectively.
\end{observation}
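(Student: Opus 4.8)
The plan is to notice that $\OPT$ enters both analyses in only two innocuous ways — as additive slack terms that scale with a tunable parameter, and via the single inequality $f(C^*) = \OPT \geq Z$ — so that substituting any underestimate $Z \leq \OPT$ costs nothing beyond rescaling the final guarantee from $\OPT$ down to $Z$.

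For \lref[Theorem]{thm:onepass} there is essentially nothing to prove: Algorithm Submod-Max-Cardinality never reads $\OPT$, and the constant $\epsilon = \frac{\alpha}{\alpha+4}$ appearing in its analysis is absolute, so its output $S$ unconditionally satisfies $f(S) \geq \OPT/(4+\alpha) \geq Z/(4+\alpha)$. Structurally, one re-reads the chain~(\ref{eq:1})--(\ref{eq:3}) with every occurrence of $\epsilon\,\OPT$ replaced by $\epsilon\,Z$ and the final step $f(C^*) = \OPT$ replaced by $f(C^*) \geq Z$; each inequality is unaffected, and one lands at $\max\{f(S_1),f(S_2)\} \geq \frac{1-\epsilon}{4}\, Z$.

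For \lref[Theorem]{thm:rand-return} the threshold $\tau = \OPT/(7k)$ genuinely depends on $\OPT$, so I would instead run the threshold algorithm with $\tau = Z/(7k)$, keeping $\FMV_4$ in \lref[Step]{step:fmv} and the uniformly random return among $S_1,S_1',S_2$. Then I re-walk the three cases of that proof. If $S_1$ or $S_2$ picks $k$ elements, telescoping the marginal gains (each $\geq \tau$ at the moment of addition) gives $f(S_1)+f(S_2) \geq \tau k$, with no reference to $\OPT$. If $f(S_1 \cap C^*) \geq 4\tau k$, then $\FMV_4$ gives $f(S_1') \geq f(S_1 \cap C^*)/4 \geq \tau k$, again with no reference to $\OPT$. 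Otherwise, \lref[Lemma]{lem:threshold-semion} (applied to $S_1$ against $C^*$, and to $S_2$ against $C^* \setminus S_1$, each of size at most $k$) yields $f(S_1) \geq f(S_1 \cup C^*) - k\tau$ and $f(S_2) \geq f(S_2 \cup (C^*\setminus S_1)) - k\tau$; since in this case $f(S_1\cap C^*) - 4\tau k < 0$ we may add that term to the bound on $f(S_1)$ for free, and summing together with \lref[Lemma]{lem:cross-sums} gives $f(S_1) + f(S_2) \geq f(C^*) - 6\tau k$. This is the one and only place $\OPT$ is used: $f(C^*) = \OPT \geq Z$, hence $f(C^*) - 6\tau k \geq Z - \tfrac{6Z}{7} = \tfrac{Z}{7} = \tau k$. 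So in every case $f(S_1)+f(S_1')+f(S_2) \geq \tau k = Z/7$, and a uniformly random choice among the three has expected value at least $Z/21$.

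The only thing needing care is the bookkeeping in the last case — checking that the two applications of the threshold lemma contribute exactly $k\tau$ of slack apiece (using $|C^*|, |C^*\setminus S_1| \leq k$), and that the ``$-4\tau k$'' term is legitimately thrown in because it is nonpositive there — so that the total slack is $6\tau k$ and $\tau = Z/(7k)$ is precisely calibrated to make $f(C^*) - 6\tau k \geq \tau k$ survive. Beyond this there is no real obstacle; the statement is in effect the remark that both algorithms can be fed any lower bound on $\OPT$ in place of $\OPT$, which is exactly what lets a sampling-based estimate of $\OPT$ be plugged in later.
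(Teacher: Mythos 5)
Your proposal is correct and follows exactly the argument the paper intends (the observation is stated without proof there): the offline algorithm never uses $\OPT$, and in the threshold-based algorithm $\OPT$ enters only through $\tau$ and the single inequality $f(C^*) \geq Z$, so rerunning both analyses with $\tau = Z/(7k)$ and $Z$ in place of $\OPT$ gives the claimed bounds. Your bookkeeping of the $6\tau k$ slack and the sign of the $f(S_1 \cap C^*) - 4\tau k$ term matches the proof of \lref[Theorem]{thm:rand-return} verbatim.
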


Finally, it will be convenient to recall Dynkin's algorithm: given a
stream of $n$ numbers randomly ordered, it samples the first $1/e$ fraction
of the numbers and picks the next element that is larger than all
elements in the sample.

\subsubsection{The  Secretary Algorithm for the Cardinality Case}

\begin{wrapfigure}{r}{0.6\textwidth}
  \centering
  \vspace{-10pt}
  \hrule\medskip
  \begin{algorithmic}
    \STATE \textbf{Let} Solution $\leftarrow \emptyset$.

    \STATE \textbf{Flip} a fair coin

    \IF {heads}

    \STATE Solution $\leftarrow$ most valuable item using Dynkin's-Algo

    \ELSE

    \STATE \textbf{Let} $m \in B(n,1/2)$ be a draw from the binomial
    distribution

    \STATE $A_1 \gets$ $\rho_{\text{off}}$-approximate offline algorithm
    on the first $m$ elements.

    \STATE $A_2 \gets$  $\rho_{\text{on}}$-approximate advice-taking online
    algorithm with \\ \qquad \qquad $f(A_1)$ as the guess for $\OPT$.

    \STATE Return $A_2$
    \ENDIF
  \end{algorithmic}
  \medskip \hrule
  \caption{\textbf{Algorithm} SubmodularSecretaries}
  \label{theSecyAlgorithm}
  \vspace{-10pt}
\end{wrapfigure}

For a constrained submodular optimization, if we are given \emph{(a)} a
$\rho_{\text{off}}$-approximate offline algorithm, and also \emph{(b)} a
$\rho_{\text{on}}$-approximate online advice-taking algorithm that works
given an estimate of $\OPT$, we can now get an algorithm in the
secretary model thus: we use the offline algorithm to estimate $\OPT$ on
the first half of the elements, and then run the advice-taking online
algorithm with that estimate. The formal algorithm appears in
\lref[Figure]{theSecyAlgorithm}.  Because of space constraints, we have
deferred the proof of the following theorem to
\lref[Appendix]{sec:secretariesproofs}.
\begin{theorem}
  \label{thm:conversion}
  The above algorithm is an $O(1)$-approximation algorithm for the
  cardinality-constrained submodular maximization problem in the
  secretary setting.
\end{theorem}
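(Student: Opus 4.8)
The plan is to show that Algorithm SubmodularSecretaries (Figure~\ref{theSecyAlgorithm}) is an $O(1)$-approximation by conditioning on the outcome of the fair coin and splitting the analysis according to whether a single "heavy" element carries most of $\OPT$. First I would fix an optimal solution $C^*$ with $f(C^*) = \OPT$, and let $e^*$ be the single most valuable element of $\Omega$, with value $v^* := f(\{e^*\})$. The key dichotomy is: either $v^* \geq \gamma\,\OPT$ for some constant $\gamma$ to be chosen, in which case the \emph{heads} branch (Dynkin's classical secretary algorithm) succeeds — with probability $1/e$ it returns exactly $e^*$, giving expected value at least $\frac{1}{e}\,v^* \geq \frac{\gamma}{e}\,\OPT$, and since the coin is heads with probability $1/2$ this contributes $\Omega(\OPT)$. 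Otherwise $v^* < \gamma\,\OPT$, every element is "light," and I would argue the \emph{tails} branch succeeds.

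For the tails branch, the crux is that the guess $Z := f(A_1)$ used as advice satisfies $Z \leq \OPT$ always (since $A_1 \subseteq$ first $m$ elements is a feasible set, so $f(A_1) \le \OPT$ — here I use that the constraint is downward closed and $f(A_1)$ cannot exceed the unconstrained-on-feasible-sets optimum), and $Z \geq c\,\OPT$ with constant probability. The second part is where I'd spend the most care. Because $m \sim B(n,1/2)$, each element independently lands in the "first $m$" prefix $P$ with probability $1/2$, independent of the random arrival order; hence $P$ is a uniformly random subset of $\Omega$. Standard submodular sampling facts (e.g., the kind underlying $\FMV_4$, or a direct argument: $\E[f(C^* \cap P)] \geq \frac14 f(C^*)$ when no single element is too heavy, or more simply $\E_P[\max_{S \subseteq P, S \in \I} f(S)] \geq \Omega(\OPT)$) give that $C^* \cap P$ is a feasible set inside $P$ with $\E[f(C^* \cap P)] = \Omega(\OPT)$, and combined with the light-elements assumption a concentration/Markov argument shows $f(C^* \cap P) \geq c\,\OPT$ with constant probability. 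Running the offline $\rho_{\text{off}}$-approximate algorithm on $P$ then yields $Z = f(A_1) \geq \frac{c}{\rho_{\text{off}}}\,\OPT$ with constant probability.

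Given such a $Z$ with $c'\,\OPT \le Z \le \OPT$, I would invoke the Observation following Theorem~\ref{thm:rand-return}: the advice-taking online algorithm run with guess $Z$ returns a set of expected value at least $Z/21 \geq \frac{c'}{21}\,\OPT$. Combining over the constant-probability good event and the factor $1/2$ for the tails coin gives $\Omega(\OPT)$ in expectation in this branch as well. Putting the two branches together — and noting the two cases of the dichotomy are exhaustive — yields the claimed $O(1)$-approximation; the final constant comes from optimizing $\gamma$ and the sampling constant $c$.

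The main obstacle I anticipate is the second branch's lower bound on $Z$: the advice-taking online algorithm of Theorem~\ref{thm:rand-return} tolerates any underestimate $Z \le \OPT$ gracefully, but it needs $Z$ to be a \emph{constant} fraction of $\OPT$ to be useful, and establishing $f(C^* \cap P) = \Omega(\OPT)$ with constant probability genuinely requires the "no heavy element" structure — without it, a single huge element that happens to fall outside the prefix $P$ could make $f(A_1)$ tiny. So the careful part is setting up the dichotomy threshold $\gamma$ so that the heads branch handles the heavy case and the tails branch's sampling argument goes through in the light case, and verifying that the randomness of $m$ (hence of $P$) is independent of the random arrival order so that $A_2$'s online guarantee still applies on the elements arriving after the prefix.
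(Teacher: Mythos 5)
Your overall architecture matches the paper's proof: split on whether a single element is heavy (in which case the heads/Dynkin branch already earns $\Omega(\OPT)$), and otherwise argue that the sampled prefix lets the offline algorithm produce a useful advice value $Z=f(A_1)$ for the online advice-taking algorithm. However, there is a genuine gap in your tails-branch analysis. You invoke the Observation after Theorem~\ref{thm:rand-return} with only the guarantee $Z \le \OPT$, but that is not the condition the advice-taking algorithm needs here: $A_2$ runs only on the elements arriving \emph{after} the prefix, so its guarantee of expected value $Z/21$ requires $Z$ to be bounded by the optimum achievable among those remaining elements (in the analysis of Theorem~\ref{thm:rand-return}, the bound $\OPT_2 - 6\tau k \ge \tau k$ with $\tau = Z/(7k)$ needs $Z \le \OPT_2$, where $\OPT_2$ is the best feasible value in the post-prefix stream). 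With only $Z\le\OPT$, the bad event is exactly the one you should worry about: most of $C^*$ lands in the prefix, $f(A_1)$ is close to $\OPT$, the threshold $Z/(7k)$ is then too high for the impoverished second half, and $A_2$ collects essentially nothing. The paper closes this by proving a \emph{two-sided} bound: writing the greedy marginals $a_i$ of $C^*$ and using Chebyshev on the random signed sum $\sum_i Y_i a_i$ (which is where the light-element hypothesis $a_i < \OPT/c$ enters), it shows that with constant probability \emph{both} $f(C^*\cap X_1)$ and $f(C^*\cap X_2)$ are at least $\tfrac12(1-d/\sqrt{c})\,\OPT$, and then conditions (probability $1/2$ by symmetry) on the event that the smaller side is the prefix, so that $f(A_1)$ does not exceed the value available downstream.

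A second, related soft spot: your route to the constant-probability lower bound on $f(C^*\cap P)$ — ``$\E[f(C^*\cap P)]\ge \tfrac14\OPT$ plus a concentration/Markov argument'' — is not an argument yet. Markov controls only the upper tail of a nonnegative variable, so it cannot by itself convert an expectation bound into a constant-probability lower bound; the concentration step you would need is precisely the Chebyshev-on-marginals computation above (variance $\sum_i a_i^2 \le \OPT^2/c$ in the light case), which you gesture at but do not supply. Once you (i) replace ``$Z\le\OPT$'' by ``$Z\le\OPT_2$ with constant probability,'' obtained via the symmetric two-sided concentration and conditioning on the min falling in the prefix, and (ii) make the Chebyshev argument explicit, your outline becomes the paper's proof, with constants determined by optimizing the dichotomy threshold and the Chebyshev parameters.
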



\subsection{Subject to a Partition Matroid Constraint}
\label{sec:secy-part}

In this section, we give a constant-factor approximation for maximizing
submodular functions subject to a partition matroid. Recall that in such
a matroid, the universe is partitioned into $k$ ``groups'', and the
independent sets are those which contain at most one element from each
group. To get a secretary-style algorithm for \emph{modular (additive)}
function maximization subject to a partition matroid, we can run
Dynkin's algorithm on each group independently.  However, if we have a
submodular function, the marginal value of an element depends on the
elements previously picked---and hence the marginal value of an element
as seen by the online algorithm and the adversary become very different.

We first build some intuition by considering a simpler ``contiguous
partitions'' model where all the elements of each group arrive together
(in random order), but the groups of the partition are presented in some
\emph{arbitrary} order $g_1, g_2, \ldots, g_r$. We then go on to handle
the case when all the elements indeed come in completely random order,
using what is morally a reduction to the contiguous partitions case.

\subsubsection{A Special Case: Contiguous Partitions}
\label{sec:contiguous}

For the contiguous case, one can show that executing Dynkin's algorithm
with the obvious marginal valuation function is a good algorithm: this
is not immediate, since the valuation function changes as we pick some
elements---but it works out, since the groups come contiguously.  Now,
as in the previous section, one wants to run two parallel copies of this
algorithm (with the second one picking elements from among those not
picked by the first)---but the correlation causes the second algorithm
to not see a random permutation any more! We get around this by coupling
the two together as follows:
\begin{quote}
  Initially, the algorithm determines whether it is one of 3 different
  modes (A, B, or C) uniformly at random. The algorithm maintains a set
  of selected elements, initially $S_0$. When group $g_i$ of the
  partition arrives, it runs Dynkin's secretary algorithm on the
  elements from this group using valuation function $f_{S_{i-1}}$. If
  Dynkin's algorithm selects an element $x$, our algorithm flips a coin.
  If we are in modes $A$ or $B$, we let $S_i \leftarrow S_{i-1} \cup
  \{x\}$ if the coin is heads, and let $S_i \leftarrow S_{i-1}$
  otherwise. If we are in mode $C$, we do the reverse, and let $S_i
  \leftarrow S_{i-1} \cup \{x\}$ if the coin is tails, and let $S_i
  \leftarrow S_{i-1}$ otherwise. Finally, after the algorithm has
  completed, if we are in mode $B$, we discard each element of $S_r$
  with probability $1/2$. (Note that we can actually implement this step
  online, by 'marking' but not selecting elements with probability $1/2$
  when they arrive).
\end{quote}

\begin{lemma}
  \label{lem:partn-lem1}
  The above algorithm is a $(3+6\mathrm{e})$-approximation for the
  submodular maximization problem under partition matroids, when each
  group of the partition comes as a contiguous segment.
\end{lemma}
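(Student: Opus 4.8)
The plan is to mirror the analysis of the cardinality secretary algorithm (\lref[Theorem]{thm:rand-return}), but now working with the contiguous-group structure. First I would fix an optimal independent set $C^* = \{c_1^*, \ldots\}$ with $f(C^*) = \OPT$, where $c_i^*$ is the element of $C^*$ in group $g_i$ (if any). Let $S_r^{AB}$ denote the set the algorithm builds when in modes $A$ or $B$ \emph{before} the possible discarding step, and $S_r^C$ the set built in mode $C$; by the coin-flip coupling these two sets are disjoint, and the elements selected by Dynkin's runs are exactly $S_r^{AB} \cup S_r^{C}$. The key structural point is that when group $g_i$ arrives, Dynkin's algorithm is run on the elements of $g_i$ with the \emph{fixed} valuation $f_{S_{i-1}}$ — fixed because $S_{i-1}$ is already determined once the earlier groups have passed — so within group $g_i$ it is a genuine single-choice secretary instance, and Dynkin guarantees that with probability $\geq 1/\rme$ it picks the element of $g_i$ maximizing $f_{S_{i-1}}(\cdot)$, in particular an element whose marginal value is at least $f_{S_{i-1}}(c_i^*)$.

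Next I would run the ``greedy-type'' telescoping argument. Condition on being in mode $A$ (or, equivalently, $B$ before discarding). For each $i$, either Dynkin picks some $x_i$ with $f_{S_{i-1}}(x_i) \geq f_{S_{i-1}}(c_i^*)$ and the coin is heads (probability $\geq 1/(2\rme)$, independent across $i$ given the history), in which case $S_i = S_{i-1} + x_i$ contributes at least $f_{S_{i-1}}(c_i^*)$ to $f(S_r^{AB})$; or it does not. Taking expectations and telescoping, $\E[f(S_r^{AB})] \geq \frac{1}{2\rme}\sum_i \E[f_{S_{i-1}}(c_i^*)]$. By submodularity $\sum_i f_{S_{i-1}}(c_i^*) \geq f(S_r^{AB} \cup C^*) - f(S_r^{AB})$ whenever the groups are contiguous (this is exactly the place the contiguity is needed — it lets us add the $c_i^*$ one at a time against a growing prefix), so rearranging gives a bound of the form $\E[f(S_r^{AB} \cup C^*)] \leq c\cdot \E[f(S_r^{AB})]$ for a constant $c = O(\rme)$. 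This is the analogue of \lref[Lemma]{lem:threshold-semion}/the greedy lemma in the online setting.

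Finally I would assemble the three modes exactly as in \lref[Theorem]{thm:rand-return}. Mode $A$ returns $S_r^{AB}$ and captures $f(S_r^{AB} \cup C^*)$ up to a constant; mode $C$ returns $S_r^{C}$ and, by the identical argument applied to $C^* \setminus S_r^{AB}$ against the growing set $S^C$, captures $f(S_r^{C} \cup (C^* \setminus S_r^{AB}))$ up to a constant; mode $B$ returns a uniformly random subset of $S_r^{AB}$, which is precisely $\FMV_4$ run on $S_r^{AB}$, hence captures $\tfrac14 f(S_r^{AB} \cap C^*)$. Now \lref[Lemma]{lem:cross-sums} with $S_1 = S_r^{AB}$, $S_2 = S_r^{C}$, $C = C^*$ gives $f(S_r^{AB}\cup C^*) + f(S_r^{AB}\cap C^*) + f(S_r^{C}\cup(C^*\setminus S_r^{AB})) \geq f(C^*) = \OPT$, so the sum of the three modes' values is $\Omega(\OPT)$, and since we pick a mode uniformly at random we lose only a further factor of $3$; bookkeeping the constants yields the claimed $(3 + 6\rme)$ bound.

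The main obstacle I expect is making the telescoping step fully rigorous in the presence of the coin flips and of Dynkin's failure probability: one must be careful that ``$f_{S_{i-1}}(c_i^*)$'' is measurable with respect to the right $\sigma$-algebra and that the per-group success events can be treated as independent conditioned on the prefix, and one must handle the case where $c_i^*$ has already been selected into $S_{i-1}$ (so its marginal is $0$, which is harmless) versus where the group $g_i$ contains no element of $C^*$. The discarding step in mode $B$ needs the standard observation — already noted parenthetically in the algorithm description — that marking-but-not-selecting with probability $1/2$ realizes $\FMV_4$ online without disturbing the other modes. Everything else is the same constant-chasing as in \lref[Theorem]{thm:rand-return}.
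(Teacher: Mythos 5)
Your proposal is correct and takes essentially the same route as the paper's own proof: couple modes $A$ and $C$ through shared coins so the two selected sets are disjoint, use the per-group Dynkin guarantee (losing a factor $2\rme$) and telescope marginals to get $\E[f(S^A_r)] \geq \frac{1}{1+2\rme}\,\E[f(S^A_r \cup C^*)]$ and its analogue for mode $C$, view mode $B$ as $\FMV_4$ applied to $S^A_r$, and combine the three bounds via \lref[Lemma]{lem:cross-sums}, paying a final factor of $3$ for the uniformly random choice of mode. The only slight slip is attributing the need for contiguity to the submodular telescoping inequality (which holds deterministically for any nested chain of prefixes); as you correctly note earlier, contiguity is really needed so that Dynkin's algorithm within each group runs against the fixed valuation $f_{S_{i-1}}$.
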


\begin{proof}
  We first analyze the case in which the algorithm is in mode $A$ or
  $C$.  Consider a hypothetical run of \emph{two} versions of our
  algorithm simultaneously, one in mode $A$ and one in mode $C$ which
  share coins and produce sets $S^A_r$ and $S^C_r$. The two algorithms
  run with identical marginal distributions, but are coupled such that
  whenever both algorithms attempt to select the same element (each with
  probability $1/2$), we flip only one coin, so one succeeds while the
  other fails.  Note that $S^C_r \subseteq U\setminus S^A_r$, and so we
  will be able to apply \lref[Lemma]{lem:cross-sums}.  For a fixed
  permutation $\pi$, let $S^A_r(\pi)$ be the set chosen by the mode $A$
  algorithm for that particular permutation.  As usual, we
  define $f_A(B) = f(A\cup B) - f(A)$. Hence, $f(S^A_r(\pi)) =
  f(S^A_r(\pi) \cup C^*) - f_{S^A_r(\pi)}(C^*)$, and taking
  expectations, we get
  \begin{align}
    \E[f(S^A_r)] &= \E[f(S^A_r \cup C^*)] - \E[f_{S^A_r}(C^*)] \label{eq:10}
  \end{align}
   Now, for any $e \in X$, let $j(e)$ be the index
  of the group containing $e$; hence we have
  \begin{align}
    \E[f_{S^A_r}(C^*)] &\leq \sum_{e \in C^*} \E[f_{S^A_r}(\{e\})] \leq \sum_{e \in C^*} \E[f_{S^A_{j(e)-1}}(\{e\})] \quad \leq \sum_{e \in C^*} 2\mathrm{e} \cdot
    \E[f_{S^A_{j(e)-1}}(\{Y_{j(e)}\})] \notag \\
    &= 2\mathrm{e} \cdot \E[f(S^A_r)], \label{eq:11}
  \end{align}
  where the first inequality is just subadditivity, the second
  submodularity, the third follows from the fact that Dynkin's algorithm
  is an $\mathrm{e}$-approximation for the secretary problem and
  selecting the element that Dynkin's selects with probability $1/2$
  gives a $2e$ approximation, and the resulting telescoping sum gives
  the fourth equality.  Now substituting~(\ref{eq:11})
  into~(\ref{eq:10}) and rearranging, we get $\E[f(S^A_r)] \geq
  \frac{1}{1+2\mathrm{e}}\;f(S^A_r \cup C^*)$.  An identical analysis of
  the second hypothetical algorithm gives: $\E[f(S^C_r)] \geq
  \frac{1}{1+2\mathrm{e}}\;f(S^C_r \cup C^*\setminus S^A_r)$.

  It remains to analyze the case in which the algorithm runs in mode
  $B$. In this case, the algorithm generates a set $S^B_r$ by selecting
  each element in $S^A_r$ uniformly at random. By the theorem of
  \cite{FMV07}, uniform random sampling achieves a $4$-approximation to
  the problem of \emph{unconstrained} submodular maximization.
  Therefore, we have in this case: $\E[f(S^B_r)] \geq \frac{1}{4}f(S^A_r
  \cap C^*)$.  By \lref[Lemma]{lem:cross-sums}, we therefore have:
  $\E[f(S^A_r)]+\E[f(S^B_r)]+\E[f(S^C_r)] \geq \frac{1}{1+2e}f(C^*)$.
  Since our algorithm outputs one of these three sets uniformly at
  random, it gets a $(3 + 6e)$ approximation to $f(C^*)$.
\end{proof}

\subsubsection{General Case}
We now consider the general secretary setting, in which the elements come in random order, not necessarily grouped by partition. Our previous approach will not work: we cannot simply run Dynkin's secretary algorithm on contiguous chunks of elements, because some elements may be blocked by our previous choices. We instead do something similar in spirit: we divide the elements up into $k$ `epochs', and attempt to select a single element from each. We treat every element that arrives before the current epoch as part of a sample, and according to the current valuation function at the beginning of an epoch, we select the first element that we encounter that has higher value than any element from its own partition group in the sample, so long as we have not already selected something from the same partition group. Our algorithm is as follows:

\begin{quote}
  Initially, the algorithm determines whether it is one of 3 different
  modes (A, B, or C) uniformly at random. The algorithm maintains a set
  of selected elements, initially $S_0$, and observes the first $N_0 \sim B(n,\frac{1}{2})$ of the elements without selecting anything. The algorithm then considers $k$ \emph{epochs}, where the $i$th epoch is the set of $N_i\sim B(n,\frac{1}{100k})$ contiguous elements after the $(i-1)$th epoch. At epoch $i$, we use valuation function $f_{S_{i-1}}$. 
  If an element has higher value than any element from its own partition group that arrived earlier than epoch $i$, we flip a coin. If we are in modes $A$ or $B$, we let $S_i \leftarrow S_{i-1} \cup \{x\}$ if the coin is heads, and let $S_i \leftarrow S_{i-1}$
  otherwise. If we are in mode $C$, we do the reverse, and let $S_i
  \leftarrow S_{i-1} \cup \{x\}$ if the coin is tails, and let $S_i
  \leftarrow S_{i-1}$ otherwise. After all $k$ epochs have passed, we ignore the remaining elements. Finally, after the algorithm has
  completed, if we are in mode $B$, we discard each element of $S_r$
  with probability $1/2$. (Note that we can actually implement this step
  online, by 'marking' but not selecting elements with probability $1/2$
  when they arrive).
\end{quote}

If we were guaranteed to select an element in every epoch $i$ that was the highest valued element according to $f_{S_{i-1}}$, then the analysis of this algorithm would be identical to the analysis in the contiguous case. This is of course not the case. However, we prove a technical lemma that says that we are ``close enough'' to this case.

\begin{lemma}
\label{lem:partitionTechnicalLemma}
For all partition groups $i$ and epochs $j$, the algorithm selects the highest element from group $i$ (according to the valuation function $f_{S_{j-1}}$ used during epoch $j$) during epoch $j$ with probability at least $\Omega(\frac{1}{k})$.
\end{lemma}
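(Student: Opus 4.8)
The plan is to condition on the valuation function $f_{S_{j-1}}$ being fixed at the start of epoch $j$, so that within the epoch we are working with a static objective; this is legitimate because $S_{j-1}$ is determined entirely by what happened in epochs $1, \ldots, j-1$, all of which precede epoch $j$ in the stream. Fix the target partition group $i$ and let $e^\star$ be the element of group $i$ that has the highest marginal value under $f_{S_{j-1}}$ among all elements of group $i$ that have \emph{not yet} been observed when epoch $j$ begins. (If group $i$ has already contributed an element to $S_{j-1}$, or $e^\star$ arrived before epoch $j$, the claimed event is about a different element; one should phrase the statement as: the element selected from group $i$ in epoch $j$ is the one maximizing $f_{S_{j-1}}$ over the portion of group $i$ seen in or after epoch $j$ — I would restate the lemma this way for cleanliness.) I would then lower-bound the probability that (a) $e^\star$ actually falls inside epoch $j$ rather than after it, (b) when $e^\star$ arrives, nothing from group $i$ has been selected yet during epoch $j$, and (c) the coin flip (heads in modes $A$, $B$; tails in mode $C$) goes the selecting way — this last contributes only a constant factor $1/2$.

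The key quantitative step is a Dynkin-style argument adapted to the epoch structure. Because elements arrive in uniformly random order and the epoch lengths are $N_0 \sim B(n,1/2)$ followed by $N_1, \ldots, N_k$ i.i.d.\ $\sim B(n, 1/(100k))$, an equivalent and cleaner way to generate the arrival process is: independently assign each element a label in $\{0, 1, \ldots, k, \text{rest}\}$ with $\Pr[\text{label }0] = 1/2$, $\Pr[\text{label } t] = 1/(100k)$ for $t \in [k]$, and the remaining mass to ``rest'', then order elements within each label uniformly at random; label $t$ is exactly epoch $t$. With this view, $e^\star$ lands in epoch $j$ with probability exactly $1/(100k) = \Omega(1/k)$, independently of everything else. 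The only thing that can spoil the selection is that some \emph{other} element of group $i$ arrives earlier within epoch $j$ and is itself selected (blocking $e^\star$): but any such element has marginal value at most that of $e^\star$ under $f_{S_{j-1}}$, and would only be selected if it beat all earlier arrivals from group $i$ in the sample $S_0$ plus epochs $1..j-1$. Here I would invoke the standard secretary fact that, conditioned on $e^\star \in$ epoch $j$ and on the identity of the max-so-far element in group $i$ among the pre-epoch-$j$ arrivals, the probability that \emph{no} element of group $i$ arriving in epoch $j$ strictly before $e^\star$ exceeds that earlier maximum is $\Omega(1)$ — essentially because the relative order of group-$i$ elements is uniform and a constant fraction of group $i$ is sampled before epoch $j$ (the $N_0 \sim B(n,1/2)$ prefix alone samples each element with probability $\ge 1/2$).

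Putting these together: $\Pr[e^\star \in \text{epoch } j] \cdot \Pr[e^\star \text{ not blocked} \mid \cdot] \cdot \Pr[\text{coin goes the right way}] \ge \frac{1}{100k} \cdot \Omega(1) \cdot \frac12 = \Omega(1/k)$, which is the claim. I expect the main obstacle to be step (b)/(c) of the first paragraph stated precisely — i.e., ruling out that $e^\star$ is ``wasted'' because the algorithm already selected some lower-value element of group $i$ earlier in epoch $j$, or because $e^\star$ fails to beat the running maximum from the sample. The cleanest route is to note that if $e^\star$ is the overall maximum of group $i$ over (sample $\cup$ epoch $j$) and it arrives in epoch $j$, then it automatically beats every earlier arrival and no earlier group-$i$ element in epoch $j$ could have been selected (each would have to beat the same running max that now includes candidates $\le e^\star$); handling the boundary case where the true group-$i$ maximum was already absorbed into $S_{j-1}$ requires the restatement of the lemma I flagged above, and with that restatement the argument goes through with the constant-factor losses tracked explicitly.
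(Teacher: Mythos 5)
Your skeleton---condition on $f_{S_{j-1}}$, pass to the independent-placement view of the sample and epochs, show the target element lands in epoch $j$ with probability $\Theta(1/k)$, and argue blocking events have constant probability---is the same as the paper's, but the proposal omits the steps that carry the real content. First, you never address blocking by the \emph{other} $k-1$ groups: the algorithm commits to (at most) one element per epoch, namely the first arrival that beats its own group's pre-epoch maximum, so a qualifying element of some group $i'\neq i$ arriving earlier in epoch $j$ spoils the event. The paper's proof spends most of its effort exactly here: the number of elements of group $i'$ still beating that group's pre-epoch maximum is dominated by a geometric random variable, so such an element falls in epoch $j$ with probability $O(1/k)$, and a union bound over all $k$ groups costs only a constant---this is also where the epoch length $\sim n/(100k)$ is actually used. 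Second, you do not bound the probability that group $i$ already contributed an element in an earlier epoch (the paper's event $\neg P_{i,j}$, bounded by $1/20$ via the same geometric argument); instead you propose to restate the lemma so the target is the best \emph{unseen} group-$i$ element. That weakened statement no longer supports the downstream inequality $\E[f_{S_{j-1}}(e)]\geq \Omega(1/k)\,f_{S_{j-1}}(e')$ for the optimal group-$i$ element $e'$, since $e'$ (or the global group-$i$ maximum) may already have passed unselected; the paper keeps the lemma about the global maximum and pays for it with the events $L_{i,j}$ and $S_{i,j}$ (highest not before epoch $j$, second-highest before epoch $j$), jointly of probability at least $1/5$.

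There is also a concrete error in your ``cleanest route'': the selection rule compares an element only against members of its group that arrived \emph{before the current epoch}, not against earlier arrivals within epoch $j$. So $e^\star$ being the maximum over (sample $\cup$ epoch $j$) does not prevent a smaller group-$i$ element that arrives earlier in epoch $j$ from beating the sample maximum and triggering a selection first; ruling this out is precisely why the paper requires the second-highest element to lie in the pre-epoch portion. Finally, the claim that $e^\star$ ``lands in epoch $j$ with probability exactly $1/(100k)$, independently of everything else'' is not right once you condition on the history (and the identity of $e^\star$ is itself placement-dependent); the relevant conditional probability is still $\Theta(1/k)$, but it must be computed, as the paper does, relative to the mass remaining after the sample and the earlier epochs.
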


Because of space constraints, we defer the proof of this technical lemma
to \lref[Appendix]{sec:secretariesproofs}.

Note an immediate consequence of the above lemma: if $e$ is the element selected from epoch $j$, by summing over the elements in the optimal set $C^*$  (1 from each of the $k$ partition groups), we get:
$$\E[f_{S_{j-1}}(e)] \geq \Omega(\frac{1}{k})\sum_{e' \in C^*}f_{S_{j-1}}(e') \geq \Omega(\frac{f_{S_{j-1}}(C^*)}{k})$$
Summing over the expected contribution to $S_r$ from each of the $k$
epochs and applying submodularity, we get $\E[f_{S_r^A(C^*)}] \leq
O(\E[f(S^A_r)])$. Using this derivation in place of inequality
\ref{eq:11} in the proof of \lref[Lemma]{lem:partn-lem1} proves that our
algorithm gives an $O(1)$ approximation to the non-monotone submodular
maximization problem subject to a partition matroid constraint.

\subsection{Subject to a General Matroid Constraint}
\label{sec:gen-matroid-secy}

We consider matroid constraints where the matroid is $\mathcal{M} =
(\Omega, \I)$ with rank $k$. Let $w_1 = \max_{e \in \Omega} f(\{e\})$
the maximum value obtained by any single element, and let $e_1$ be the
element that achieves this maximum value. (Note that we do not know
these values up-front in the secretary setting.)  In this section, we
first give an algorithm that gets a set of fairly high value given a
threshold $\tau$. We then show how to choose this threshold, assuming we
know the value $w_1$ of the most valuable element, and why this implies
an advice-taking online algorithm having a logarithmic approximation.
Finally, we show how to implement this in a secretary framework.

\medskip\noindent\textbf{A Threshold Algorithm.}  Given a value $\tau$,
run the following algorithm. Initialize $S_1, S_2 \gets \emptyset$. Go
over the elements of the universe $\Omega$ in \emph{arbitrary} order:
when considering element $e$, add it to $S_1$ if $f_{S_1}(e) \ge
\epsilon \tau$ and $S_1 \cup \{e\}$ is independent, else add it to $S_2$ if $f_{S_2}(e) \ge \epsilon \tau$ and $S_2 \cup \{e\}$ is independent,
else discard it. (We will choose the value of $\epsilon$ later.)
Finally, output a uniformly random one of $S_1$ or $S_2$.

To analyze this algorithm, let $C^*$ be the optimal set with $f(C^*) =
\OPT$. Order the elements of $C^*$ by picking its elements greedily
based on marginal values. Given $\tau > 0$, let $C^*_\tau \sse C^*$ be the
elements whose marginal benefit was at least $\tau$ when added in this
greedy order: note that $f(C^*_\tau) \geq |C^*_\tau|\tau$.

\begin{lemma}
  For $\epsilon = 2/5$, the set produced by our algorithm has expected value is at least
  $|C^*_\tau| \cdot \tau/10$.
\end{lemma}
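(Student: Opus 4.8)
We need to show that the threshold algorithm for general matroids, with $\epsilon = 2/5$, produces a set of expected value at least $|C^*_\tau| \cdot \tau / 10$. The structure is parallel to the cardinality and knapsack cases: two "greedy-with-threshold" sets $S_1, S_2$ are built in one pass (now with the matroid-independence requirement), and we output a random one. The key difference from before is that there's no unconstrained-maximization ($\FMV$) call here — so the lemma must be getting its value purely from the interplay between $S_1$ and $S_2$ and the elements of $C^*_\tau$ that got "blocked."

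**The plan.** First I would record the easy direction: every element placed in $S_1$ contributes marginal value $\ge \epsilon\tau$ at the time it was added, so by the telescoping/submodularity argument $f(S_1) \ge \epsilon\tau |S_1|$, and likewise $f(S_2) \ge \epsilon\tau|S_2|$. So if either $|S_1|$ or $|S_2|$ is large — say $\ge |C^*_\tau|/c$ for a suitable constant — we are immediately done. The interesting case is when both $S_1$ and $S_2$ are small. Then I would argue that most elements of $C^*_\tau$ were discarded, and crucially, an element $e \in C^*_\tau$ is discarded only if *both* $f_{S_1}(e) < \epsilon\tau$ and $f_{S_2}(e) < \epsilon\tau$ hold at the time $e$ is considered, *or* $e$ would-be-added set is dependent in both $S_1$ and $S_2$. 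I would handle the "dependence" obstructions via a matroid-exchange / counting argument: since $C^*_\tau$ is independent and has size $|C^*_\tau|$, at most $|S_1|$ of its elements can be blocked from $S_1$ by dependence (because the elements of $S_1$ already "span" only a rank-$|S_1|$ flat), and similarly at most $|S_2|$ by $S_2$; so at most $|S_1| + |S_2|$ elements of $C^*_\tau$ are blocked by dependence on both, leaving at least $|C^*_\tau| - |S_1| - |S_2|$ elements of $C^*_\tau$ that, when considered, had low marginal value into whichever of $S_1, S_2$ they were not forced out of.

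**Extracting value.** For those "low-marginal" elements $e \in C^*_\tau$, by submodularity $f_{S_1}(e) \le \epsilon\tau$ (evaluated at the *final* $S_1$ is even smaller, so this is fine), and the same for $S_2$. Now I would bound $f(C^*_\tau)$ using subadditivity and \lref[Lemma]{lem:cross-sums}-style manipulations: write $f(C^*_\tau) \le f(S_1 \cup C^*_\tau) + (\text{something})$, and use $f(S_1 \cup C^*_\tau) \le f(S_1) + \sum_{e \in C^*_\tau} f_{S_1}(e)$, splitting the sum over $C^*_\tau$ into the at-most-$|S_1|+|S_2|$ dependence-blocked elements (each contributing at most... here I need a crude bound, perhaps via $f(\{e\}) \le$ something, or better, fold these into the $f(S_1\cup\cdots)$ term) and the remaining low-marginal ones (each contributing $\le \epsilon\tau$). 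Combining, $f(C^*_\tau) \lesssim f(S_1) + f(S_2) + (|S_1|+|S_2|+|C^*_\tau|)\epsilon\tau \lesssim (\text{const})\cdot(f(S_1)+f(S_2)) + \epsilon\tau|C^*_\tau|$ after using $f(S_i)\ge\epsilon\tau|S_i|$. Since $f(C^*_\tau) \ge \tau|C^*_\tau|$, choosing $\epsilon = 2/5$ makes the $\epsilon\tau|C^*_\tau|$ term absorb only a bounded fraction of the left side, so $f(S_1)+f(S_2) \ge \tau|C^*_\tau|/5$, hence a random one of them has expected value $\ge \tau|C^*_\tau|/10$.

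**The main obstacle.** The delicate point is handling the elements of $C^*_\tau$ that are blocked by *matroid dependence* rather than by low marginal value — I cannot just say their contribution is small. The cleanest fix is the rank/exchange counting above: at most $\mathrm{rank}(S_i) = |S_i|$ elements of the independent set $C^*_\tau$ can be in the span of $S_i$, so the set of "dependence-blocked on both sides" elements has size $\le |S_1|+|S_2|$, and these I should not try to bound elementwise but rather absorb them into an $f(S_1 \cup C^*_\tau)$ term which is itself controlled because $f(S_1) \ge \epsilon\tau|S_1|$ and $|S_1|$ is small in the interesting case. Getting the constants to land exactly at $\epsilon = 2/5$ and expected value $\tau|C^*_\tau|/10$ will require being careful about whether the random choice between $S_1$ and $S_2$ loses a factor $2$ before or after the threshold tuning, but that's bookkeeping rather than a conceptual difficulty.
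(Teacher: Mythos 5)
Your first half tracks the paper's proof: the case split on $|S_1|,|S_2|$ using $f(S_i)\geq \epsilon\tau|S_i|$, and the rank/exchange counting showing that at least $|C^*_\tau|-|S_1|-|S_2|\geq |C^*_\tau|/2$ elements of $C^*_\tau$ were eligible for both $S_1$ and $S_2$ when they arrived, hence were rejected only by the threshold, hence (by submodularity) have $f_{S_1}(e)<\epsilon\tau$ and $f_{S_2}(e)<\epsilon\tau$ with respect to the \emph{final} sets. This is precisely the paper's set $A$.

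The gap is in your ``extracting value'' step. You keep $f(C^*_\tau)\geq\tau|C^*_\tau|$ as the target and try to upper-bound $f(C^*_\tau)$ by $f(S_1)+f(S_2)$ plus per-element error terms over \emph{all} of $C^*_\tau$. That forces you to charge the dependence-blocked elements, whose marginals $f_{S_1}(e)$ admit no useful bound (they were rejected for reasons other than small marginal value); your parenthetical ``here I need a crude bound, perhaps via $f(\{e\})\le$ something'' concedes this, and no such bound exists, since $f(\{e\})$ can be of order $\OPT$, far above $\epsilon\tau$. Moreover, any application of \lref[Lemma]{lem:cross-sums} with $C=C^*_\tau$ leaves the term $f(S_1\cap C^*_\tau)$ on the left-hand side; since $f$ is non-monotone and this algorithm has no $\FMV$/random-subsampling step, that term can neither be extracted as value nor bounded via $f(S_1)$, and small cardinality of $S_1\cap C^*_\tau$ says nothing about its $f$-value. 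The missing idea, which is exactly the paper's move, is to \emph{change the target}: work with the doubly-eligible, threshold-rejected set $A\sse C^*_\tau$, which is disjoint from $S_1\cup S_2$ so that $f(S_1\cap A)=f(\emptyset)=0$ in \lref[Lemma]{lem:cross-sums}, and observe that $f(A)\geq\tau|A|$ (order $A$ by the greedy order of $C^*$; each marginal only increases relative to its value in the $C^*$-order, which was at least $\tau$ since $A\sse C^*_\tau$). Then $f(S_i)\geq f(S_i\cup A)-|A|\epsilon\tau$ for $i=1,2$, and \lref[Lemma]{lem:cross-sums} yields $f(S_1)+f(S_2)\geq f(A)-2\epsilon\tau|A|\geq(1-2\epsilon)\tau|C^*_\tau|/2$; taking the minimum with the large-$|S_i|$ case and setting $\epsilon=2/5$ gives the claimed $|C^*_\tau|\tau/10$. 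In short: discard the blocked elements (you lose only $|S_1|+|S_2|$ of them, and $f(A)\geq\tau|A|$ keeps the target intact) rather than trying to absorb their contribution into terms you cannot control.
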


\begin{proof}
  If either $|S_1|$ or $|S_2|$ is at least $|C^*_\tau|/4$, we get value
  at least $|C^*_\tau|/4 \cdot \epsilon \tau$. Else both these sets have
  small cardinality. Since we are in a matroid, there must be a set $A
  \sse C^*_\tau$ of cardinality $|A| \geq |C^*_\tau| - |S_1| - |S_2|
  \geq |C^*_\tau|/2$, such that $A$ is disjoint from both $S_1$ and
  $S_2$, and both $S_1 \cup A$ and $S_2 \cup A$ lie in $\I$ (i.e., they
  are independent).

  We claim that $f(S_1) \geq f(S_1 \cup A) - |A| \cdot
  \epsilon \tau$.  Indeed, an element in $e \in A$ was not added
  by the threshold algorithm; since it could be added while maintaining
  independence, it must have been discarded because the marginal value
  was less than $\epsilon \tau$. Hence $f_{S_1}(\{e\}) < \epsilon \tau$,
  and hence $f(S_1 \cup A) - f(S_1) = f_{S_1}(A) \leq
  \sum_{e \in A} f_{S_1}(\{e\}) < |A|\cdot \epsilon \tau$.
  Similarly, $f(S_2) \geq f(S_2 \cup A) - |A| \cdot \epsilon
  \tau$. And by disjointness, $f(S_1 \cap A) = f(\emptyset) = 0$.
  Hence, summing these and applying \lref[Lemma]{lem:cross-sums}, we get
  that $f(S_1) + f(S_2) \geq f(S_1 \cup A) + f(S_2 \cup A) +
  f(S_1 \cap A) - 2\epsilon\tau |A| \geq f(A) -
  2\epsilon\tau |A|$.

  Since the marginal values of all the elements in $C^*_\tau$ were at
  least $\tau$ when they were added by the greedy ordering, and $A \sse
  C^*_\tau$, submodularity implies that $f(A) \geq |A| \tau$, which in
  turn implies $f(S_1) + f(S_2) \geq (1 - 2\epsilon)\tau |A| \geq (1 -
  2\epsilon) \tau |C^*_\tau|/2$. A random one of $S_1, S_2$ gets half of
  that in expectation. Taking the minimum of $|C^*_\tau|/4 \cdot
  \epsilon \tau$ and $(1 - 2\epsilon) \tau |C^*_\tau|/2$ and setting
  $\epsilon = 2/5$, we get the claim.
\end{proof}

\begin{lemma}
  \label{lem:lbd}
  $\sum_{i = 0}^{\log 2k} |C^*_{w_1/2^i}| \cdot \frac{w_1}{2^i} \geq
  f(C^*)/4 = \OPT/4$.
\end{lemma}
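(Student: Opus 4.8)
The plan is to show that the ``threshold buckets'' $C^*_{w_1/2^i}$ for $i = 0, 1, \ldots, \log 2k$ together capture almost all the value of $C^*$, by a standard geometric-bucketing argument on the marginal values of the greedy ordering of $C^*$. First I would fix the greedy ordering $e_1, e_2, \ldots, e_k$ of the elements of $C^*$ (the same one used to define the $C^*_\tau$), and let $d_j = f(\{e_1, \ldots, e_j\}) - f(\{e_1, \ldots, e_{j-1}\})$ be the marginal value of the $j$th element in this order, so that $f(C^*) = \sum_{j=1}^{k} d_j$. By the greedy choice and submodularity, these marginals are nonincreasing, and $d_1 = w_1$ (the single best element is picked first, and its marginal equals its singleton value, which is the maximum singleton value). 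The key observation is that $|C^*_\tau| \cdot \tau$ counts, for threshold $\tau = w_1/2^i$, the quantity $\tau$ once for every element whose marginal is at least $\tau$; so $\sum_i |C^*_{w_1/2^i}| \cdot (w_1/2^i)$ is a lower ``dyadic staircase'' approximation to $\sum_j d_j$.

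The main steps are then: (1) For each element $e_j$ with marginal $d_j$, if $d_j \geq w_1/2^{\log 2k} = w_1/(2k)$, let $i(j)$ be the largest index $i \le \log 2k$ with $w_1/2^{i} \le d_j$; then $e_j \in C^*_{w_1/2^{i(j)}}$ and the term $w_1/2^{i(j)}$ in the sum ``pays for'' at least $d_j / 2$ of $d_j$ (since $w_1/2^{i(j)} > d_j/2$ by maximality of $i(j)$, using also $d_j \le d_1 = w_1$ so that $i(j) \ge 0$ exists). Hence $\sum_{i=0}^{\log 2k} |C^*_{w_1/2^i}| \cdot \frac{w_1}{2^i} \ge \frac12 \sum_{j : d_j \ge w_1/(2k)} d_j$. (2) Bound the discarded tail: each $e_j$ with $d_j < w_1/(2k)$ contributes less than $w_1/(2k)$, and there are at most $k$ such elements, so $\sum_{j : d_j < w_1/(2k)} d_j < k \cdot \frac{w_1}{2k} = \frac{w_1}{2} \le \frac{f(C^*)}{2}$, where the last inequality holds because $f(C^*) \ge f(\{e_1\}) = w_1$ (take $S = \emptyset$, $T = C^*$ in... actually we need a small argument: $f(C^*) \ge d_1 = w_1$ only if all later marginals are $\ge$ their own contribution, which need not be nonnegative). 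So more carefully, I would argue $w_1 = f(\{e_1\}) = \max_e f(\{e\})$ and, since $f$ is nonnegative and $C^* + e_1$ would be... — the clean statement is just $\OPT = f(C^*) \ge w_1$, which follows because $e_1$ is a feasible singleton solution (a matroid has $\{e_1\} \in \I$ as long as $e_1$ is not a loop, and a loop has $f$-value contributing nothing), hence $\OPT \ge f(\{e_1\}) = w_1$. (3) Combine: $\sum_j d_j = f(C^*)$, split into the two parts, giving $\sum_{i} |C^*_{w_1/2^i}| \cdot \frac{w_1}{2^i} \ge \frac12\big(f(C^*) - \frac{w_1}{2}\big) \ge \frac12\big(f(C^*) - \frac{f(C^*)}{2}\big) = \frac{f(C^*)}{4}$, as claimed.

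The step I expect to be slightly delicate is the handling of the top of the range and the boundary case $d_j = w_1$: one must make sure that every element with $d_j \ge w_1/(2k)$ really does land in some bucket $C^*_{w_1/2^i}$ with $0 \le i \le \log 2k$, which uses both $d_j \le w_1$ (so $i \ge 0$ works) and $d_j \ge w_1/(2k)$ (so $i \le \log 2k$ works) — this is exactly why the sum runs up to $\log 2k$ rather than $\log k$, leaving a factor-$2$ slack to absorb the discarded tail into $\frac14 \OPT$. Everything else is the routine dyadic-rounding inequality $\sum_i |C^*_{w_1/2^i}|\cdot (w_1/2^i) \ge \frac12 \sum_{j} \mathbf{1}[d_j \ge w_1/(2k)]\, d_j$, which I would state and verify element-by-element as in step (1).
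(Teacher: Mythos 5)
Your proposal is correct and is essentially the same dyadic-bucketing argument as the paper's: each element of $C^*$ is charged to its largest applicable threshold $w_1/2^i$ with $i \le \log 2k$, which pays at least half its marginal, while the elements with marginal below $w_1/(2k)$ contribute at most $k \cdot \frac{w_1}{2k} = \frac{w_1}{2} \le \OPT/2$ in total, giving the $\OPT/4$ bound. Two wording fixes: $i(j)$ should be the \emph{smallest} index (i.e., largest threshold) with $w_1/2^{i(j)} \le d_j$, and $d_1 = w_1$ should be $d_1 \le w_1$ (the globally best singleton need not lie in $C^*$), which is all your argument actually uses.
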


\begin{proof}
  Consider the greedy enumeration $\{e_1, e_2,\ldots, e_t\}$ of $C$, and
  let $w_j = f_{\{e_1, e_2,\ldots, e_{i-1}\}}(\{e_j\})$.  First consider
  an infinite summation $\sum_{i = 0}^{\infty} |C^*_{w_1/2^i}| \cdot
  \frac{w_1}{2^i}$---each element $e_j$ contributes at least $w_j/2$ to
  it, and hence the summation is at least $\frac12 \sum_j w_j$.  But
  $f(C^*) = \sum_{j=1}^t w_j$, which says the infinite sum is at least
  $f(C^*)/2 = \OPT/2$. But the finite sum merely drops a contribution of
  $w_1/4k$ from at most $|C^*| \leq k$ elements, and clearly $\OPT$ is at
  least $w_1$, so removing this contribution means the finite sum is at
  least $\OPT/4$.
\end{proof}

Hence, if we choose a value $\tau$ uniformly from $w_1, w_1/2, w_1/4,
\ldots, w_1/2k$ and run the above threshold algorithm with that setting
of $\tau$, we get that the expected value of the set output by the
algorithm is:
\begin{gather}
  \tsty \frac{1}{1 + \log 2k} \sum_{i = 0}^{\log 2k} |C^*_{w_1/2^i}|
  \cdot \frac{w_1}{10 \cdot 2^i} \geq \frac{1}{1 + \log 2k}
  \frac{\OPT}{40}.
\end{gather}

\medskip\noindent\textbf{The Secretary Algorithm.}
The secretary algorithm for general matroids is the following:
\begin{quote}
  Sample half the elements, let $W$ be the weight of the highest weight
  element in the first half. Choose a value $i \in \{0, 1, \ldots, 2 +
  \log 2k\}$ uniformly at random. Run the threshold algorithm with $W/2^i$ as the threshold
\end{quote}

\begin{lemma}
  The algorithm is an $O(\log k)$-approximation in the secretary
  setting for rank~$k$ matroids.
\end{lemma}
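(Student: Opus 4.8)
The plan is to run the standard ``sample to get advice, then threshold'' reduction that we used in the cardinality case (\lref[Figure]{theSecyAlgorithm}), now with the single-element maximum $w_1=\max_e f(\{e\})$ playing the role of the advice. We have already established the advice-taking guarantee: knowing $w_1$ and running the threshold algorithm with $\tau$ drawn uniformly from $\{w_1/2^i : 0\le i\le \log 2k\}$ returns a set of expected value at least $\OPT/\bigl(40(1+\log 2k)\bigr)$. So it suffices to show two things: (i) the largest single-element value $W$ seen in the first (sampled) half of the stream is a good surrogate for $w_1$; and (ii) running the threshold algorithm on only the \emph{second} half of the stream costs at most a constant factor.

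For (i): $W\le w_1$ always, and with probability at least $\tfrac12$ the maximizer $e_1$ lands in the first half, in which case $W=w_1$ exactly; conditioning on this event, the scales $\{W/2^i\}$ used by the secretary algorithm coincide with $\{w_1/2^i\}$, so \lref[Lemma]{lem:lbd} applies unchanged (the two extra scales $i=\log 2k+1,\log 2k+2$ are slack that only helps). For (ii): inspecting the proof of the threshold lemma, the only way the target set $C^*_\tau$ enters is through a subset $A\subseteq C^*_\tau$ that is disjoint from $S_1,S_2$, has $S_\ell\cup A$ independent, and satisfies $f(A)\ge|A|\tau$; all of this goes through verbatim with $C^*_\tau$ replaced by $C^*_\tau\cap R$, where $R$ is the set of second-half elements. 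Hence the threshold algorithm on $R$, with threshold $\tau$, produces a set of value $\Omega\bigl(|C^*_\tau\cap R|\,\tau\bigr)$ (and \lref[Lemma]{lem:cross-sums} is used exactly as before). If the random split is a binomial $B(n,\tfrac12)$ split, then each element of $C^*$ lands in $R$ independently with probability $\tfrac12$, so $\E[\,|C^*_{w_1/2^i}\cap R|\,]\ge\tfrac12|C^*_{w_1/2^i}|$; feeding this into \lref[Lemma]{lem:lbd} and averaging over the $\Theta(\log k)$ choices of $i$ recovers expected value $\Omega(\OPT/\log k)$, as desired.

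The delicate point — the step I expect to require the most care — concerns the single element $e_1$ itself. Conditioning on $e_1$ lying in the first half deterministically removes $e_1$ from $R$, so if $e_1\in C^*$ we lose its contribution to the sum in \lref[Lemma]{lem:lbd}, and that contribution can be as large as $2w_1$ (since $e_1$, being the global maximizer, is first in the greedy ordering of $C^*$ and hence lies in every $C^*_{w_1/2^i}$). When $w_1$ is only a small constant fraction of $\OPT$ this is harmless: one re-runs the whole argument with target set $C^*\setminus\{e_1\}$, which still has $f(C^*\setminus\{e_1\})\ge\OPT-w_1=\Omega(\OPT)$ and does not contain $e_1$. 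When a single element carries a constant fraction of $\OPT$, the clean fix is the one already used in the cardinality case: devote constant probability to running Dynkin's classical secretary algorithm, which returns $e_1$ — hence $\Omega(\OPT)$ — with constant probability. Combining the two regimes, and absorbing the constant-factor loss from the random split, yields the claimed $O(\log k)$-approximation.
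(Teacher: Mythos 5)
Your treatment of the main regime is essentially the paper's own argument: condition on $e_1$ landing in the sampled half so that $W=w_1$, observe that every other element of $C^*$ survives into the second half with probability about $\tfrac12$, and then rerun the threshold lemma and \lref[Lemma]{lem:lbd} with the target replaced by $C^*_\tau\cap R$ (equivalently, with $C^*\setminus\{e_1\}$, which still has value at least $\OPT-w_1$). The paper compresses this into ``the expected optimal value in the second half is at least $\OPT/4$,'' so if anything you make the restriction-to-$R$ step more explicit, and your observation that the threshold-lemma proof only uses that the target is an independent set whose greedy marginals exceed $\tau$ (so it survives intersection with $R$) is correct.

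The genuine gap is in the heavy-element regime. The lemma is about the specific algorithm stated just above it: sample half, set $W$, draw $i\in\{0,1,\ldots,2+\log 2k\}$ uniformly, and run the threshold algorithm at scale $W/2^i$. Unlike the cardinality-case algorithm of \lref[Figure]{theSecyAlgorithm}, it has \emph{no} coin-flip branch running Dynkin's algorithm, so when a single element carries a constant fraction of $\OPT$ you cannot ``devote constant probability to Dynkin'' --- that proves the guarantee for a modified algorithm, not the one in the statement. The missing observation, which is exactly how the paper closes this case, is that the scale $i=0$ (chosen with probability $\Theta(1/\log k)$) already plays Dynkin's role: with constant probability the element of second-highest singleton value falls in the sampled half while $e_1$ falls in the second half, so $W$ equals that second-highest value and the threshold algorithm run at threshold $W$ then secures value on the order of $w_1\ge\OPT/2$, giving $\Omega(\OPT/\log k)$ in expectation in this regime too. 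With that substitution (and no change to your other steps), your proof coincides with the paper's.
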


\begin{proof}
  With probability $\Theta(1/\log k)$, we choose the value $i=0$. In
  this case, with constant probability the element with second-highest
  value comes in the first half, and the highest-value element $e_1$
  comes in the second half; hence our (conditional) expected value in
  this case is at least $w_1$. In case this single element accounts for
  more than half of the optimal value, we get $\Omega(\OPT/\log k)$. We
  ignore the case $i = 1$. If we choose $i \geq 2$, now with constant
  probability $e_1$ comes in the first half, implying that $W = w_1$.
  Moreover, each element in $C - e_1$ appears in the second half with
  probability slightly higher than $1/2$. Since $e_1$ accounts for at
  most half the optimal value, the expected optimal value in the second
  half is at least $\OPT/4$. The above argument  then ensures that we get value $\Omega(\OPT/\log k)$ in expectation.
\end{proof}




\medskip\noindent\textbf{Acknowledgments.} We thank C.\ Chekuri, V.\
Nagarajan, M.I.\ Sviridenko, J.\ Vondr\'{a}k, and especially R.D.\
Kleinberg for valuable comments, suggestions, and conversations. Thanks
to C.\ Chekuri also for pointing out an error in
Section~\ref{sec:sviridenko}, and to M.T.\ Hajiaghayi for informing us
of the results in~\cite{BHZ10}.

{\small
\bibliographystyle{alpha}
\bibliography{SubmodularSecretaries}
}

\appendix

\section{Proof of Main Lemma for $p$-Systems}
\label{sec:psystem-app}

Let $e_1, e_2, \ldots, e_k$ be the elements added to $S$ by greedy, and
let $S_i$ be the first $i$ elements in this order, with $\delta_i =
f_{S_{i-1}}(\{e_i\}) = f(S_i) - f(S_{i-1})$, which may be positive or
negative. Since $f(\emptyset) = 0$, we have $f(S = S_k) = \sum_i
\delta_i$. And since $f$ is submodular, $\delta_i \geq \delta_{i+1}$ for
all $i$. 

\begin{lemma}[following \cite{CCPV-journal}]
  \label{lem:approx1}
  For any independent set $C$, it holds that $f(S_k) \geq \frac{1}{p+1}
  f(C \cup S_k)$.
\end{lemma}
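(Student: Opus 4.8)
The plan is to mimic the standard analysis of the greedy algorithm over a $p$-system (cf.~\cite{CCPV-journal}), with a little extra care since the marginals $\delta_i$ here may be negative. First I would reduce the claim to a statement about the greedy marginals: since $f_{S_k}$ is submodular with $f_{S_k}(\emptyset)=0$ (\lref[Proposition]{prob:submodfacts}), it is subadditive, so
$f(C\cup S_k)-f(S_k)=f_{S_k}(C\setminus S_k)\le\sum_{e\in C\setminus S_k}f_{S_k}(\{e\})$, and any non-positive terms on the right may be discarded. It therefore suffices to assign to each remaining $e\in C\setminus S_k$ a greedy step that ``pays'' for it, so that the total payment is at most $p\cdot f(S_k)=p\sum_i\delta_i$.

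\textbf{The charge and the $p$-system count.} For $e\in C\setminus S_k$, let $i(e)$ be the largest $i\ge 0$ with $S_i+e\in\I$; this is well defined ($i=0$ works) and $\{\,i:S_i+e\in\I\,\}=\{0,\dots,i(e)\}$ because $\I$ is downward closed. If $i(e)<k$, then at step $i(e)+1$ the element $e$ was a legal candidate while greedy chose $e_{i(e)+1}$, so by the greedy rule followed by submodularity ($S_{i(e)}\subseteq S_k$) we get $f_{S_k}(\{e\})\le f_{S_{i(e)}}(\{e\})\le f_{S_{i(e)}}(\{e_{i(e)+1}\})=\delta_{i(e)+1}$, and I charge $e$ to step $i(e)+1$. (An $e$ with $i(e)=k$ can still be appended to the final $S_k$; in the variant of greedy that stops once no positive-marginal element remains this forces $f_{S_k}(\{e\})\le 0$, and such $e$ were already dropped.) The heart of the argument is a bound on how much is charged to the first $j$ steps: $S_j$ is a basis of $D_j:=S_j\cup\{e\in C\setminus S_k:i(e)<j\}$, because no element of the second part can extend $S_j$; hence $\rho(D_j)\le|S_j|=j$, and since $C\cap D_j$ is an independent subset of $D_j$, the $p$-system property gives $|C\cap D_j|\le r(D_j)\le p\,\rho(D_j)\le pj$. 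Writing $a_j$ for the number of elements charged to step $j$ and $c_j=|C\cap S_j|$, this says precisely $\sum_{j'\le j}a_{j'}\le pj-c_j$ for every $j\le k$.

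\textbf{Summing up.} Now $f(C\cup S_k)-f(S_k)\le\sum_j a_j\delta_j$, and since the $\delta_j$ are non-increasing, a summation-by-parts against the prefix bounds $\sum_{j'\le j}a_{j'}\le pj-c_j$ — together with the telescoping identity $\sum_{j<k}j(\delta_j-\delta_{j+1})+k\delta_k=\sum_j\delta_j$ and $c_j-c_{j-1}=\mathbf 1[e_j\in C]$ — yields $\sum_j a_j\delta_j\le p\sum_j\delta_j-\sum_{j:\,e_j\in C}\delta_j\le p\,f(S_k)$, the last step because the $\delta_j$ actually being summed are positive. Hence $f(C\cup S_k)\le(p+1)f(S_k)$, as claimed.

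The step I expect to be the real obstacle is the combinatorial one: defining the charge $i(e)$ and extracting the inequality $\sum_{j'\le j}a_{j'}\le pj-c_j$ from the single $p$-system axiom via the ``$S_j$ is a basis of $D_j$'' observation. The summation-by-parts at the end, and the bookkeeping around non-positive marginals, are routine.
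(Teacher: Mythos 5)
The combinatorial heart of your argument is correct and is essentially the paper's proof in different notation: your charge of each $e\in C\setminus S_k$ to step $i(e)+1$ is the paper's partition $C_i=A_{i-1}\setminus A_i$, the bound $f_{S_k}(\{e\})\le f_{S_{i(e)}}(\{e\})\le\delta_{i(e)+1}$ is its second property, and your observation that $S_j$ is a basis of $D_j$, giving $\sum_{j'\le j}a_{j'}\le pj-c_j$ via the $p$-system axiom, is its first property (with the harmless refinement of subtracting $c_j$). The gap is in the final aggregation. The inequality you assert, $\sum_j a_j\delta_j\le p\sum_j\delta_j-\sum_{j:e_j\in C}\delta_j$, needs the boundary term of the summation by parts to satisfy $A_k\delta_k\le(pk-c_k)\delta_k$, which requires $\delta_k\ge 0$, since you only know $A_k\le pk-c_k$. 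Your fallback of truncating the Abel summation at the last positive marginal $\delta_m$ only yields $\sum_j a_j\delta_j\le p\sum_{j\le m}\delta_j-\sum_{j\le m:e_j\in C}\delta_j$, and when the discarded tail marginals are negative one has $p\sum_{j\le m}\delta_j\ge p\,f(S_k)$, so the chain does not close; moreover the parenthetical ``the $\delta_j$ being summed are positive'' does not apply to $\sum_{j:e_j\in C}\delta_j$, because greedy may pick elements of $C$ at negative marginal.

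This is not a repairable bookkeeping issue for the greedy the paper actually runs (which keeps adding elements at negative marginal gain until nothing can be added): the statement itself fails for that greedy. Take the partition matroid on $\{x,y,z\}$ with parts $\{x,y\}$ and $\{z\}$ (so $p=1$), and $f(A)=g(A\cap\{x,z\})+0.99\,|A\cap\{y\}|$ with $g(\emptyset)=0$, $g(\{x\})=1$, $g(\{z\})=0.5$, $g(\{x,z\})=0.1$; this $f$ is non-negative, submodular, and $f(\emptyset)=0$. Greedy picks $x$ ($\delta_1=1$) and is then forced to add $z$ ($\delta_2=-0.9$), so $f(S_2)=0.1$, while $C=\{y\}$ gives $f(C\cup S_2)=1.09>2f(S_2)$; on this instance your asserted inequality reads $1\le 0.1$. (The same example shows that the paper's own appeal to Claim~A.1 of \cite{CCPV-journal} is invalid here--that rearrangement bound needs the $\delta_i$ to be non-negative, as in the monotone setting--so the closing remark in the appendix that the proof does not use early stopping is too strong.) Your argument is sound, and coincides with the paper's, for the variant of greedy that stops as soon as the best marginal gain is non-positive: then every $\delta_j$ that receives a charge is positive, the summation by parts goes through, and your treatment of elements still addable at termination (they have non-positive $f_{S_k}$-marginal and were discarded at the subadditivity step) correctly covers the case $i(e)=k$. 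If you state the lemma for that version of greedy, your proof is complete.
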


\begin{proof}
  We show the existence of a partition of $C$ into $C_1, C_2, \ldots,
  C_k$ with the following two properties:
  \begin{itemize}
  \item for all $i \in [k]$, $p_1 + p_2 + \ldots + p_i \leq i\cdot p$
    where $p_i := |C_i|$, and
  \item for all $i \in [k]$, $p_i \delta_i \geq f_{S_k}(C_i)$.
  \end{itemize}
  Assuming such a partition, we can complete the proof thus:
  \begin{gather}
    \label{eq:12}
    p \sum_i \delta_i \geq \sum_i p_i \delta_i \geq \sum_i f_{S_k}(C_i)
    \geq f_{S_k}(C) = f(S_k \cup C) - f(S_k),
  \end{gather}
  where the first inequality follows from~\cite[Claim~A.1]{CCPV-journal}
  (using the first property above, and that the $\delta$'s are
  non-increasing), the second from the second property of the partition
  of $C$, the third from subadditivity of $f_{S_k}(\cdot)$ (which is
  implied by the submodularity of $f$ and applications of both facts in
  \lref[Proposition]{prob:submodfacts}), and the fourth from the
  definition of $f_{S_k}(\cdot)$. Using the fact that $\sum_i \delta_i =
  f(S_k)$, and rearranging, we get the lemma.

  Now to prove the existence of such a partition of $C$. Define $A_0,
  A_1, \ldots, A_k$ as follows: $A_i = \{ e \in C \setminus S_i \mid S_i
  + e \in \I\}$. Note that since $C \in \I$, it follows that $A_0 = C$;
  since the independence system is closed under subsets, we have $A_i
  \subseteq A_{i-1}$; and since the greedy algorithm stops only when
  there are no more elements to add, we get $A_k = \emptyset$. Defining
  $C_i = A_{i-1} \setminus A_{i}$ ensures we have a partition $C_1, C_2,
  \ldots, C_k$ of $C$.

  Fix a value $i$. We claim that $S_i$ is a basis (a maximal independent
  set) for $S_i \cup (C_1 \cup C_2 \cup \ldots \cup C_i) = S_i \cup (C
  \setminus A_i)$. Clearly $S_i \in \I$ by construction; moreover, any
  $e \in (C \setminus A_i) \setminus S_i$ was considered but not
  added to $A_i$ because $S_i + e \not\in \I$. Moreover, $(C_1 \cup C_2
  \cup \ldots \cup C_i) \sse C$ is clearly independent by
  subset-closure.  Since $\I$ is a $p$-independence system, $|C_1 \cup
  C_2 \cup \ldots \cup C_i| \leq p\cdot |S_i|$, and thus $\sum_i |C_i| =
  \sum_i p_i \leq i\cdot p$, proving the first property.

  For the second property, note that $C_i = A_{i-1} \setminus A_i \sse
  A_{i-1}$; hence each $e \in C_i$ does not belong to $S_{i-1}$ but
  could have been added to $S_{i-1}$ whilst maintaining independence,
  and was considered by the greedy algorithm. Since greedy chose the
  $e_i$ maximizing the ``gain'', $\delta_i \geq f_{S_{i-1}}(\{e\})$ for
  each $e \in C_i$. Summing over all $e \in C_i$, we get $p_i \delta_i
  \leq \sum_{e \in C_i} f_{S_{i-1}}(\{e\}) \leq f_{S_{i-1}}(C_i)$, where
  the last inequality is by the subadditivity of $f_{S_{i-1}}$. Again,
  by submodularity, $f_{S_{i-1}}(C_i) \leq f_{S_{k}}(C_i)$, which proves
  the second fact about the partition $\{C_j\}_{j=1}^k$ of $C$.
\end{proof}

Clearly, the greedy algorithm works no worse if we stop it when the best
``gain'' is negative, but the above proof does not use that fact.


\section{Proofs for Knapsack Constraints}
\label{sec:sviridenko}

The proof is similar to that in~\cite{Sviridenko04} and the proof of
\lref[Lemma]{lem:det-threshold}. We use notation similar
to~\cite{Sviridenko04} for consistency.  Let $f$ be a non-negative
submodular function with $f(\emptyset) = 0$.  Let $I = [n]$, and we are
given $n$ items with weights $c_i \in \bZ_+$, and $B \geq 0$; let $\F =
\{S \sse I \mid c(S) \leq B$, where $c(S) = \sum_{i \in S} c_i$. Our
goal to solve $\max_{S \sse \F} f(S)$.
To that end, we want to prove the following result:
\begin{theorem}
  There is a polynomial-time algorithm that outputs a collection of sets
  such that for any $C \in \F$, the collection contains a set $S$
  satisfying $f(S) \geq \frac12 f(S \cup C)$.~\footnote{A preliminary
    version of the paper claimed a factor of $(1-1/\rme)$ instead of
    $1/2$---we thank C.~Chekuri for pointing out the error.}
\end{theorem}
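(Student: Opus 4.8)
The plan is to adapt Sviridenko's greedy-with-partial-enumeration scheme~\cite{Sviridenko04} and to reuse the counting idea from the proof of \lref[Lemma]{lem:det-threshold}. The algorithm is this: for every $S_0\sse[n]$ with $|S_0|\le 3$ and $c(S_0)\le B$, run the \emph{density-greedy} procedure starting from $S_0$ --- at each step, among the elements $e$ whose addition keeps the current set feasible and for which the marginal gain is strictly positive, add the one maximizing the density $f_{S}(\{e\})/c_e$, and stop when no such element remains --- and put the resulting set into the output collection; also put every subset of $[n]$ of size at most $3$ into the collection. This produces $O(n^3)$ sets in polynomial time. It then suffices to prove the structural claim: for every feasible $C$ there is a choice of $S_0$ (of size $\le 3$) so that the density-greedy set $S$ produced from it satisfies $f(S)\ge\tfrac12 f(S\cup C)$, i.e.\ $f_S(C)\le f(S)$.

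Fix a feasible $C$; if $|C|\le 3$ then $C$ itself is in the collection and $f(C)\ge\tfrac12 f(C)$ trivially, so assume $|C|\ge 4$. I would take $S_0=\{a_1,a_2,a_3\}$, the first three elements obtained when building $C$ greedily by marginal value ($a_1=\argmax_{e\in C}f(\{e\})$, then $a_2,a_3$ on the residual functions), and let $S$ be the density-greedy set produced from this $S_0$; put $Y=C\setminus S_0$. Three facts drive the analysis: (i) $f(S)\ge f(S_0)$, since only positive-gain elements are ever added; (ii) by maximality of $a_1,a_2,a_3$ and submodularity, $f(S_0)=f(\{a_1\})+f_{\{a_1\}}(\{a_2\})+f_{\{a_1,a_2\}}(\{a_3\})\ge 3\,f_T(\{y\})$ for every $y\in Y$ and every $T\supseteq S_0$, so each leftover element of $C$ has marginal at most $\tfrac13 f(S_0)\le\tfrac13 f(S)$ in any later context; and (iii) any $y\in Y\setminus S$ with $f_S(\{y\})>0$ was not added only because it was infeasible at the end, so $c_y>B-c(S)$. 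Writing $Y^{+}=\{y\in Y\setminus S:\ f_S(\{y\})>0\}$, subadditivity of $f_S$ together with $C\setminus S\sse Y$ reduces the whole claim to showing $\sum_{y\in Y^{+}}f_S(\{y\})\le f(S)$.

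I would then split on how full the greedy knapsack is. If $c(S)\le\tfrac23 B$, fact (iii) forces $c_y>\tfrac13 B$ for every $y\in Y^{+}$ while $\sum_{y\in Y^{+}}c_y\le c(C)\le B$, so $|Y^{+}|\le 2$, and then fact (ii) gives $\sum_{y\in Y^{+}}f_S(\{y\})\le\tfrac23 f(S_0)\le\tfrac23 f(S)<f(S)$, as wanted. The remaining case $c(S)>\tfrac23 B$ is the crux, and here I would run the density-rate argument in the style of~\cite{Sviridenko04} and \lref[Lemma]{lem:det-threshold}: order $Y^{+}$ by decreasing size, and for each resulting ``layer'' use that while a given leftover element $y$ was still feasible every element the greedy added had density at least that of $y$, so that $f(S)-f(S_0)$ is bounded below by the density of $y$ times the budget consumed by the time $y$ became infeasible (which exceeds $B-c_y-c(S_0)\ge 0$); summing these layered inequalities and feeding in fact (ii) and $c_y+c(S_0)\le B$ should force $\sum_{y\in Y^{+}}f_S(\{y\})\le f(S)$ again.

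I expect the layered density bookkeeping in the case $c(S)>\tfrac23 B$, together with choosing the split point and the enumeration size so that the constants meet \emph{exactly} at $\tfrac12$ (which is best possible, by the tight example $f(S)=|S|$ with $S\cap C=\emptyset$ already used for \lref[Lemma]{lem:det-threshold}), to be the main obstacle. The one genuinely new ingredient beyond Sviridenko's monotone analysis is that the greedy must refuse elements of non-positive marginal gain --- it is precisely this rule that makes fact (iii), and hence the density step, valid --- so I would state that rule carefully and check it is compatible with maintaining feasibility at every step. Finally, the surrogate form $f(S)\ge\tfrac12 f(S\cup C)$ is all that is needed afterwards: the argument of \lref[Theorem]{thm:onepass} then applies essentially verbatim to turn it into the $(4+\alpha)$-approximation.
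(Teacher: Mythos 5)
Your proposal is sound and reaches the stated bound by a genuinely different route from the paper's proof in \lref[Appendix]{sec:sviridenko}. The paper's greedy always examines the globally densest remaining element, permanently discards it when it does not fit, and puts \emph{every intermediate set} $S_t$ of every run into the output family ($O(n^4)$ sets); its analysis then looks only at the prefix $S_\tau$ at the first moment an element of $C$ is rejected, so it needs to account for just that one rejected element (whose marginal is at most $f(Y)/3$ by the top-three enumeration) via a single density lemma, $f_Y(S_\tau+i_{\tau+1})\ge\frac12 f_Y(S_\tau\cup C)$ (\lref[Lemma]{lemma:sviri-3}). Your greedy instead skips infeasible elements and keeps going, and you output only final sets ($O(n^3)$ of them), so you must control \emph{all} leftover elements of $C$ simultaneously---which is exactly your ``crux''. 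That crux does close, and in fact no case split on $c(S)$ is needed. Your density observation gives, for every $y\in Y^{+}$ with $c_y<B-c(S_0)$, that $f(S)-f(S_0)\ge \frac{f_S(\{y\})}{c_y}\bigl(B-c_y-c(S_0)\bigr)$: every element added while the cumulative cost was still at most $B-c_y$ was chosen over $y$, hence had density at least $f_{S_{t-1}}(\{y\})/c_{y}\ge f_S(\{y\})/c_y$ by submodularity, and those elements have total cost exceeding $B-c_y-c(S_0)$. Writing $W=f(S)-f(S_0)\ge 0$, $F_0=f(S_0)$ and $B'=B-c(S_0)\ge\sum_{y\in Y^{+}}c_y$ (as $S_0\cup Y^{+}\subseteq C$ and $c(C)\le B$), this inequality together with your fact~(ii) yields the uniform ``bang-per-buck'' bound $f_S(\{y\})\le \frac{c_y}{B'}\bigl(W+\frac{F_0}{3}\bigr)$ for every $y\in Y^{+}$: if $c_y\ge \frac{F_0}{3W+F_0}B'$ use the cap $F_0/3$, otherwise use the density bound together with $B'-c_y\ge \frac{3W}{3W+F_0}B'$. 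Summing over $Y^{+}$ gives $f_S(C)\le\sum_{y\in Y^{+}}f_S(\{y\})\le W+\frac{F_0}{3}\le f(S)$, which is exactly your target $f(S)\ge\frac12 f(S\cup C)$; your Case~1 argument is correct but subsumed by this averaging. In terms of trade-offs: the paper's route keeps the per-solution bookkeeping minimal (one rejected element, one density lemma) at the price of retaining all partial solutions in the output family, while your route needs the per-element density inequality and the averaging above but yields a slightly leaner family and analyzes only the sets the algorithm actually ends with; both hinge on the same two ingredients, namely the top-three enumeration bounding each leftover marginal by $f(S_0)/3$ and a density comparison against the greedy run.
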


\subsection{The Algorithm}

The algorithm is the following: it constructs a polynomial number of
solutions and chooses the best among them (and in case of ties, outputs
the lexicographically smallest one of them).
\begin{itemize}
\item First, the family contains all solutions with cardinality $1,2,3$:
  clearly, if $|C| \leq 3$ then we will output $C$ itself, which will
  satisfy the condition of the theorem.

\item Now for each solution $U \sse I$ of cardinality $3$, we greedily
  extend it as follows: Set $S_0 = U$, $I_0 = I$. At step~$t$, we have a
  partial solution $S_{t-1}$. Now compute
  \begin{gather}
    \theta_t = \max_{i \in I_{t-1} \setminus S_{t-1}} \frac{f(S_{t-1} +
      i) - f(S_{t-1})}{c_i}.
  \end{gather}
  Let the maximum be achieved on index $i_t$. If $\theta_t \leq 0$,
  terminate the algorithm. Else check if $c(S_{t-1} + i_t) \leq B$: if
  so, set $S_t = S_{t-1} + i_t$ and $I_t = I_{t-1}$, else set $S_t =
  S_{t-1}$ and $I_t = I_{t-1} - i_t$. Stop if $I_t \setminus S_t =
  \emptyset$. 
\end{itemize}
The family of sets we output is all sets of cardinality at most three,
as well as for each greedy extension of a set of cardinality three, we
output all the sets $S_t$ created during the run of the algorithm. Since
each set can have at most $n$ elements, we get $O(n^4)$ sets output by
the algorithm.

\subsection{The Analysis}

Let us assume that $|C| = t > 3$, and order $C$ as $j_1, j_2,
\ldots, j_t$ such that
\begin{gather}
  j_k = \max_{j \in C \setminus \{j_1, \ldots, j_{k-1}\}} f_{\{j_1,
    \ldots, j_{k-1}\}}(\{j\}),
\end{gather}
i.e., index the elements in the order they would be considered by the
greedy algorithm that picks items of maximum marginal value (and does
not consider their weights $c_i$). Let $Y = \{j_1, j_2,
j_3\}$. Submodularity and the ordering of $C$ gives us the following:
\begin{lemma}
\label{lemma:sviri-1}
  For any $j_k \in C$ with $k \geq 4$ and any $Z \sse I \setminus
  \{j_1, j_2, j_3, j_k\}$, it holds that:
  \begin{align*}
    f_{Y \cup Z}(\{j_k\}) &\leq f(\{j_k\}) \leq f(\{j_1\}) \\
    f_{Y \cup Z}(\{j_k\}) &\leq f(\{j_1, j_k\}) - f(\{j_1\})\leq
    f(\{j_1, j_2\}) - f(\{j_1\}) \\
    f_{Y \cup Z}(\{j_k\}) &\leq f(\{j_1, j_2, j_k\}) - f(\{j_1, j_2\})\leq
    f(\{j_1, j_2, j_3\}) - f(\{j_1, j_2\})
  \end{align*}
\end{lemma}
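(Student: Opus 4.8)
The plan is to derive all six inequalities from just two ingredients: the decreasing-marginal-utility form of submodularity, and the greedy ordering $j_1, j_2, j_3, \ldots$ chosen for the elements of $C$. First I would record two preliminary observations that make the marginal quantities legitimate. Since $k \geq 4$, the element $j_k$ is distinct from $j_1, j_2, j_3$; and since $Z \sse I \setminus \{j_1,j_2,j_3,j_k\}$ we have $j_k \notin Y \cup Z$. Hence $j_k$ is absent from every set in the chain $\emptyset \sse \{j_1\} \sse \{j_1,j_2\} \sse \{j_1,j_2,j_3\} = Y \sse Y \cup Z$, so all of $f_{Y\cup Z}(\{j_k\})$, $f_{\emptyset}(\{j_k\})$, $f_{\{j_1\}}(\{j_k\})$, $f_{\{j_1,j_2\}}(\{j_k\})$ are honest marginal gains of adding $j_k$ to a set not containing it.

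For the left inequality in each of the three lines I would simply invoke decreasing marginal utility along the chain of sets above: adding the fixed element $j_k$ to a larger set gains no more than adding it to a subset, so
\[ f_{Y\cup Z}(\{j_k\}) \;\le\; f_{\{j_1,j_2\}}(\{j_k\}) \;\le\; f_{\{j_1\}}(\{j_k\}) \;\le\; f_{\emptyset}(\{j_k\}) = f(\{j_k\}). \]
Expanding $f_{\{j_1\}}(\{j_k\}) = f(\{j_1,j_k\}) - f(\{j_1\})$ and $f_{\{j_1,j_2\}}(\{j_k\}) = f(\{j_1,j_2,j_k\}) - f(\{j_1,j_2\})$ turns these into exactly the left-hand inequalities of lines two and three, while line one is already in the stated form.

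For the right inequality in each line I would use that the ordering of $C$ picks, at each stage, the element of maximum current marginal value. Since $j_1 \in \argmax_{j \in C} f(\{j\})$ and $j_k \in C$, we get $f(\{j_k\}) \le f(\{j_1\})$. Since $j_2 \in \argmax_{j \in C \setminus \{j_1\}} f_{\{j_1\}}(\{j\})$ and $j_k \in C \setminus \{j_1\}$ (again because $k \ge 4$), we get $f_{\{j_1\}}(\{j_k\}) \le f_{\{j_1\}}(\{j_2\})$, i.e. $f(\{j_1,j_k\}) - f(\{j_1\}) \le f(\{j_1,j_2\}) - f(\{j_1\})$. Finally, since $j_3 \in \argmax_{j \in C \setminus \{j_1,j_2\}} f_{\{j_1,j_2\}}(\{j\})$ and $j_k \in C \setminus \{j_1,j_2,j_3\} \sse C \setminus \{j_1,j_2\}$, we get $f_{\{j_1,j_2\}}(\{j_k\}) \le f_{\{j_1,j_2\}}(\{j_3\})$, which is the third line. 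Concatenating the left and right bounds in each line completes the proof.

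There is really no hard step here — the whole lemma is a bookkeeping exercise. The only points that need care are verifying that $j_k$ is a legitimate candidate in each of the three argmaxes (which is exactly what $k \ge 4$ buys us) and that $j_k \notin Y \cup Z$ so that the monotonicity-of-marginals step applies; both were checked in the first step. If desired, all six inequalities can be summarized as the single statement that for $T \in \{\emptyset,\{j_1\},\{j_1,j_2\}\}$ one has $f_{Y\cup Z}(\{j_k\}) \le f_T(\{j_k\}) \le f_T(\{j_{|T|+1}\})$, the first inequality being submodularity and the second the greedy choice of $j_{|T|+1}$.
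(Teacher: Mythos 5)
Your proof is correct and is exactly the argument the paper intends: the paper gives only the one-line justification ``submodularity and the ordering of $C$'' before stating the lemma, and your expansion (left inequalities from decreasing marginal utility along $\emptyset \sse \{j_1\} \sse \{j_1,j_2\} \sse Y\cup Z$, right inequalities from the greedy choice of $j_1, j_2, j_3$, with $k\ge 4$ ensuring $j_k$ is an eligible candidate) fills in precisely those details. No discrepancies.
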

Summing the above three inequalities we get that for $j_k \not\in Y \cup
Z$,
\begin{gather}
  \label{eq:ksviri-5}
  3\,f_{Y \cup Z}(\{j_k\}) \leq f(Y).
\end{gather}
For the rest of the discussion, consider the iteration of the algorithm
which starts with $S_0 = Y$. For $S$ such that $S_0 = Y \sse S \sse I$,
recall that $f_Y(S) = f(Y \cup S) - f(Y) = f(S) -
f(Y)$. \lref[Proposition]{prob:submodfacts} shows that $f_Y(\cdot)$ is a
submodular function with $f_Y(\emptyset) = 0$. The following lemma is
the analog of~\cite[eq.~2]{Sviridenko04}:
\begin{lemma}
  \label{lemma:sviri-2}
  For any submodular function $g$ and all $S, T \sse I$ it holds that
  \begin{gather}
    g(T \cup S) \leq g(S) + \sum_{i \in T \setminus S} (g(S + i) - g(S))
  \end{gather}
\end{lemma}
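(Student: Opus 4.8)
The statement to prove is Lemma~\ref{lemma:sviri-2}: for any submodular $g$ and all $S,T\sse I$,
\[
  g(T\cup S)\le g(S)+\sum_{i\in T\setminus S}\bigl(g(S+i)-g(S)\bigr).
\]
This is the standard ``submodularity gives an upper bound by single-element marginals'' estimate, and the natural approach is a telescoping argument: enumerate the elements of $T\setminus S$ as $i_1,i_2,\ldots,i_m$ in arbitrary order, set $S^{(0)}=S$ and $S^{(\ell)}=S^{(\ell-1)}+i_\ell$, so that $S^{(m)}=S\cup T$. Then $g(S\cup T)-g(S)=\sum_{\ell=1}^{m}\bigl(g(S^{(\ell)})-g(S^{(\ell-1)})\bigr)$, and each term is the marginal value of $i_\ell$ with respect to the set $S^{(\ell-1)}\supseteq S$.

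**Key step.** By submodularity in its ``decreasing marginal utility'' form (stated in the Preliminaries: $f(A+e)-f(A)\ge f(B+e)-f(B)$ whenever $A\sse B$), since $S\sse S^{(\ell-1)}$ and $i_\ell\notin S^{(\ell-1)}$, we have
\[
  g(S^{(\ell)})-g(S^{(\ell-1)})=g(S^{(\ell-1)}+i_\ell)-g(S^{(\ell-1)})\le g(S+i_\ell)-g(S).
\]
Summing over $\ell=1,\ldots,m$ telescopes the left side to $g(S\cup T)-g(S)$ and yields exactly $\sum_{i\in T\setminus S}\bigl(g(S+i)-g(S)\bigr)$ on the right, which rearranges to the claimed inequality. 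Note that the identity $g(T\cup S)=g((T\setminus S)\cup S)$ lets us restrict attention to $T\setminus S$ without loss.

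**Where the difficulty lies.** There is essentially no obstacle here — the only thing to be careful about is that the decreasing-marginals characterization of submodularity requires $i_\ell\notin S^{(\ell-1)}$, which holds because we process each new element exactly once and never revisit it, so every $S^{(\ell-1)}$ already contains $i_1,\ldots,i_{\ell-1}$ but not $i_\ell$. One could equally phrase this as an induction on $|T\setminus S|$, peeling off one element at a time, but the telescoping presentation is cleanest. This matches~\cite[eq.~2]{Sviridenko04}, as the surrounding text indicates.
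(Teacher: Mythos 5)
Your proof is correct and is essentially the argument the paper uses: the paper rewrites $g(T\cup S)=g(S)+g_S(T\setminus S)$ and invokes subadditivity of the submodular function $g_S$ (from \lref[Proposition]{prob:submodfacts}) on the singletons of $T\setminus S$, which is just a packaged form of your telescoping/decreasing-marginals estimate. No gap; both yield the bound $\sum_{i\in T\setminus S}(g(S+i)-g(S))$ in the same one-line fashion.
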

\begin{proof}
  $g(T \cup S) =  g(S) + (g(T \cup S) - g(S)) = g(S) + g_S(T \setminus
  S) \leq g(S) + \sum_{i \in T \setminus S} g_S(\{i\}) = g(S) + \sum_{i
    \in T \setminus S} (g(S + i) - g(S))$, where we used subadditivity
  of the submodular function $g_S$.
\end{proof}

Let $\tau + 1$ be the first step in the greedy algorithm at which either
(a) the algorithm stops because $\theta_{\tau + 1} \leq 0$, or (b) we
consider some element $i_{\tau + 1} \in C$ and it is dropped by the
greedy algorithm---i.e., we set $S_{\tau + 1} = S_{\tau}$ and
$I_{\tau+1} = I_{\tau} - i_{\tau+1}$. Note that before this point either
we considered elements from $C$ and picked them, or the element
considered was not in $C$. In fact, let us assume that there are no
elements that are neither in $C$ nor are picked by our algorithm,
since we can drop them and perform the same algorithm and analysis
again, it will not change anything---hence we can assume we have not
dropped any elements before this, and $S_t = \{i_1, i_2, \ldots, i_t\}$
for all $t \in \{0, 1, \ldots, \tau\}$.

Now we apply \lref[Lemma]{lemma:sviri-2} to the submodular function
$f_Y(\cdot)$ with sets $S = S_t$ and $T = C$ to get
\begin{gather}
  \label{eq:k10}
  f_Y(C \cup S_t) \leq f_Y(S_t) + \sum_{i \in C \setminus S_t}
  f_Y(S_t + i) - f_Y(S_t) = f_Y(S_t) + \sum_{i \in C \setminus S_t}
  f(S_t + i) - f(S_t)
\end{gather}
Suppose case~(a) happened and we stopped because $\theta_{\tau + 1} \leq
0$. This means that every term in the summation in~\eqref{eq:k10} must be
negative, and hence $f_Y(C \cup S_\tau) \leq f_Y(S_\tau)$, or
equivalently, $f(C \cup S_\tau) \leq f(S_\tau)$. In this case, we are
not even losing the $(1 - 1/\rme)$ factor.

Case~(b) is if the greedy algorithm drops the element $i_{\tau + 1} \in
C$. Since $i_{\tau + 1}$ was dropped, it must be the case that
$c(S_\tau) \leq B$ but $c(S_\tau + i_{\tau + 1}) = B' > B$. In this case
the right-hand expression in~\eqref{eq:k10} has some positive terms for
each of the values of $t \leq \tau$, and hence for each $t$, we get
\begin{gather}
  \label{eq:k11}
  f_Y(C \cup S_t) \leq f_Y(S_t) + B \cdot \theta_{t+1}.
\end{gather}
To finish up, we prove a lemma similar to \lref[Lemma]{lem:det-threshold}.
\begin{lemma}
  \label{lemma:sviri-3}
  $f_Y(S_\tau + i_{\tau +1}) \geq \frac12\; f_Y(S_\tau \cup C)$.
\end{lemma}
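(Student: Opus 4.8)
Write $g:=f_Y$; by Proposition~\ref{prob:submodfacts} this is submodular with $g(\emptyset)=0$, and also $g(Y)=f(Y)-f(Y)=0$. The plan is to replay the classical greedy--density telescoping argument of~\cite{Sviridenko04} on $g$ along the greedy path $Y=S_0\subseteq S_1\subseteq\cdots\subseteq S_\tau$, but relative to the \emph{residual} budget $\beta:=B-c(Y)>0$ (positive since $Y\subsetneq C$ with $C$ feasible) rather than $B$ itself --- legitimate because $C\setminus Y$ fits in a knapsack of capacity $\beta$ --- and then to append the single extra step $S_\tau\mapsto S_\tau+i_{\tau+1}$. Two preliminary observations: (i) because $\tau+1$ is the \emph{first} step at which the greedy either halts ($\theta\le 0$) or drops an element of $C$, we have $\theta_t>0$ for every $t\le\tau+1$, so $g$ is strictly increasing along $0=g(S_0)<g(S_1)<\cdots<g(S_\tau)<g(S_\tau+i_{\tau+1})$; in particular $g(S_\tau+i_{\tau+1})\ge g(S_\tau)\ge 0$. (ii) Put $V:=g(S_\tau\cup C)$. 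If $V\le g(S_\tau)$ we are already done, since then $g(S_\tau+i_{\tau+1})\ge g(S_\tau)\ge\tfrac12 g(S_\tau\cup C)$ (the case $V<0$ being covered by $g(S_\tau)\ge 0$); so assume $V>g(S_\tau)\ge 0$.

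Now for the recursion. The reasoning behind~(\ref{eq:k11}) --- each element of $C\setminus S_t$ lies in $I_t\setminus S_t$ and so has cost-normalized marginal at most $\theta_{t+1}$, while these elements have total cost at most $c(C\setminus Y)\le\beta$, combined via subadditivity of $g_{S_t}$ --- gives the sharpened bound $g(S_t\cup C)\le g(S_t)+\beta\,\theta_{t+1}$ for every $t\le\tau$. The only point at which non-monotonicity of $g=f_Y$ matters is here: in the monotone case one would invoke $g(S_t\cup C)\ge g(C)$, which now fails. I would instead use the purely submodular fact that $t\mapsto g(S_t\cup C)-g(S_t)$ is non-increasing (adding $i_{t+1}$ to the larger set $S_t\cup C$ cannot help more than adding it to $S_t$), hence $g(S_t\cup C)-g(S_t)\ge V-g(S_\tau)\ge 0$ for all $t\le\tau$. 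Feeding this into the sharpened bound together with $\theta_{t+1}=\bigl(g(S_{t+1})-g(S_t)\bigr)/c_{i_{t+1}}$ for $t<\tau$ (and $\theta_{\tau+1}=\bigl(g(S_\tau+i_{\tau+1})-g(S_\tau)\bigr)/c_{i_{\tau+1}}$ for the last step) yields $g(S_{t+1})-g(S_t)\ge\tfrac{c_{i_{t+1}}}{\beta}(V-g(S_\tau))$ for $t<\tau$ and $g(S_\tau+i_{\tau+1})-g(S_\tau)\ge\tfrac{c_{i_{\tau+1}}}{\beta}(V-g(S_\tau))$.

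Finally, telescope. Summing the first family from $g(S_0)=g(Y)=0$ up to $S_\tau$ gives $g(S_\tau)\ge\tfrac{b}{1+b}\,V$, where $b:=\bigl(c(S_\tau)-c(Y)\bigr)/\beta\ge 0$; adding the last-step inequality and using $0\le c_{i_{\tau+1}}\le\beta$ (as $i_{\tau+1}\in C\setminus Y$) together with the bound just obtained gives $g(S_\tau+i_{\tau+1})\ge\tfrac{a+b}{1+b}\,V$, where $a:=c_{i_{\tau+1}}/\beta\in[0,1]$. The argument then closes on the overflow inequality: $a+b=\bigl(c(S_\tau+i_{\tau+1})-c(Y)\bigr)/\beta=(B'-c(Y))/(B-c(Y))>1$ since $B'>B$, so $2a+b=a+(a+b)>1$, i.e.\ $\tfrac{a+b}{1+b}>\tfrac12$, giving $g(S_\tau+i_{\tau+1})>\tfrac12 V=\tfrac12\,f_Y(S_\tau\cup C)$. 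The two places where care is needed --- and the only genuine departures from the monotone analysis of~\cite{Sviridenko04} --- are the submodularity observation that $g(S_t\cup C)-g(S_t)$ is non-increasing (which substitutes for monotonicity and supplies the uniform lower bound $V-g(S_\tau)$), and normalizing by the residual capacity $B-c(Y)$ rather than $B$, which is exactly what turns the overflow $B'>B$ into the factor $\tfrac12$; the remaining telescoping is routine.
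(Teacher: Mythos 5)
Your proof is correct, and while it runs on the same engine as the paper's (greedy density plus the budget overflow $c(S_\tau + i_{\tau+1}) = B' > B$), the route is genuinely different. The paper argues by contradiction with a single persistent witness: assuming $f_Y(S_\tau+i_{\tau+1}) < \tfrac12 f_Y(S_\tau\cup C)$, subadditivity of $f_{Y\cup S_\tau}$ yields one element $e\in C$ whose value-per-cost exceeds $f_Y(S_\tau+i_{\tau+1})/B$, submodularity keeps $e$'s density above that threshold at every earlier step, and since greedy always takes the maximum-density element, the value accrued by the time the budget overflows would exceed $f_Y(S_\tau+i_{\tau+1})$, a contradiction. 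You instead give a direct telescoping argument: a sharpened form of~\eqref{eq:k11} normalized by the residual capacity $\beta=B-c(Y)$, the submodularity fact that $t\mapsto g(S_t\cup C)-g(S_t)$ is non-increasing (your substitute both for monotonicity and for the paper's persistent witness), and the overflow inequality $a+b=(B'-c(Y))/\beta>1$ combine to give $g(S_\tau+i_{\tau+1})\ge\tfrac{a+b}{1+b}V>\tfrac12 V$. Your bookkeeping is arguably tighter on one point: normalizing by $\beta$ rather than $B$ correctly accounts for the fact that $Y$ consumes budget while contributing zero $f_Y$-value. The paper's closing sentence (``at the moment the total cost of the picked exceeded $B$\dots'') implicitly needs the same refinement, since the greedily added elements alone have cost $B'-c(Y)$, which exceeds $B-c(Y)$ but not necessarily $B$; taking the witness in $C\setminus(Y\cup S_\tau)$, whose total cost is at most $B-c(Y)$, is exactly the fix your normalization builds in. Your treatment of the side cases ($V\le g(S_\tau)$, possibly negative marginals, and $\theta_t>0$ for all $t\le\tau+1$ because step $\tau+1$ is a drop rather than a termination) is also sound, and you in fact obtain the strict inequality.
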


\begin{proof}
  If not, then we have 
  \[ f_Y(S_\tau \cup C) - f_Y(S_\tau + i_{\tau + 1}) > f_Y(S_\tau +
  i_{\tau + 1}).
  \] 
  Since we are in the case that $\theta_{\tau + 1} > 0$, we know that
  $f_Y(S_\tau + i_{\tau + 1}) > f_Y(S_\tau)$, and hence 
  \[ f_{Y \cup S_{\tau}}(C) = f_Y(S_\tau \cup C) - f_Y(S_\tau) >
  f_Y(S_\tau + i_{\tau + 1}). 
  \] 
  Now, the subadditivity of $f_{Y \cup S_{\tau}}()$ implies that there
  exists some element $e \in C$ with $\frac{f_{Y \cup S_{\tau}}(e)}{c_e}
  > \frac{f_Y(S_\tau + i_{\tau + 1})}{B}$. Submodularity now implies
  that at each point in time $i \leq \tau + 1$, the marginal increase
  per unit cost for element $e$ is $\frac{f_{Y \cup S_i}(e)}{c_e} >
  \frac{f_Y(S_\tau + i_{\tau + 1})}{B}$. Now since the greedy algorithm
  picked elements with the largest marginal increase per unit cost, the
  marginal increase per unit cost at each step was strictly greater than
  $\frac{f_Y(S_\tau + i_{\tau + 1})}{B}$. Hence, at the moment the total
  cost of the picked exceeded $B$, the total value accrued would be
  strictly greater than $f_Y(S_\tau + i_{\tau + 1})$, which is a
  contradiction.
\end{proof}

Now for the final calculations:
\begin{align*}
  f(S_\tau) &\geq f(Y) + f_Y(S_\tau) \\
  &\geq f(Y) + f_Y(S_\tau + i_{\tau + 1}) - \big(f_Y(S_\tau + i_{\tau +
    1}) - f_Y(S_\tau)\big) \\
  &\geq f(Y) + f_Y(S_\tau + i_{\tau + 1}) - \big(f(S_\tau + i_{\tau +
    1}) - f(S_\tau)\big) \\
  &\geq f(Y) + (1/2)f_Y(S_\tau \cup C) - f(Y)/3 \tag{using
    \lref[Lemma]{lemma:sviri-3} and~\eqref{eq:ksviri-5}}\\
  &\geq (1/2)f(S_\tau \cup C). \tag{by the definition of $f_Y()$}
\end{align*}
Hence this set $S_\tau$ will be in the family of sets output, and will
satisfy the claim of the theorem.

\section{Proofs from the Submodular Secretaries Section}
\label{sec:secretariesproofs}

In this section, we give the missing proofs from
\lref[Section]{sec:secretaries}.

\subsection{Proof for Cardinality Constrained Submodular Secretaries}

\textbf{\lref[Theorem]{thm:conversion}} \emph{The algorithm for the
  cardinality-constrained submodular maximization problem in the
  secretary setting gives an $O(1)$ approximation to $\OPT$.}

\medskip\noindent The proof basically shows that with reasonable
probability, both the first and the second half of the stream have a
reasonable fraction of $\OPT$, so when we run the offline algorithm on
the first half, using its output to extract value from the second half
gives us a constant fraction of $\OPT$.

\begin{proof}
  Let $C^* = \{e_1,\ldots,e_{k'}\}$ denote some set with $k' \leq k$
  elements such that $f(C^*) = \OPT$. Without loss of generality, we
  normalize so that $\OPT = 1$. Suppose the elements of $C^*$ have been
  listed in the ``greedy order'' (i.e., in order of decreasing marginal
  utility), and let $a_i$ denote the marginal utility of $e_i$ when it
  is added to $\{e_1, e_2, \ldots, e_{i-1}\}$. We consider two cases: in
  the first case, $a_1 \geq 1/c$, where $c \geq 1$ is some constant to
  be determined. In this case, with probability $1/2e$, the algorithm
  runs Dynkin's secretary algorithm and selects $a_1$, achieving an
  $1/(2ce)$ approximation.

  In the other case, $a_i < 1/c$ for all $i$. We imagine randomly
  partitioning the elements of the input set $X$ into two sets, $X_1$
  and $X_2$, with each element belonging to $X_1$ independently with
  probability $1/2$. This corresponds to the algorithm's division of
  $\stream$ into the first (random) $m$ elements $\stream_m$ and the
  remaining elements $\stream-\stream_m$.  Let $C_1^*$ and $C_2^*$
  denote the optimal solutions restricted to sets $X_1$ and $X_2$
  respectively. Define the random variable $A =
  \sum_{i=1}^{k'}Y_ia_i$ where each $Y_i \in_r\{-1,1\}$ is selected
  uniformly at random. Note that by submodularity, $f(C_1^*) + f(C_2^*)
  \geq f(C^*) = 1$. We wish to lower bound $\min(f(C_1^*),f(C_2^*))$,
  and to do this it is sufficient to upper bound the absolute value
  $|A|$. To see this, suppose that, for some setting of the
  $Y_i$'s it holds that $\sum_{i: Y_i = 1} a_i \geq \sum_{i: Y_i = -1}
  a_i$ (the other case is identical). Now if $|A| = \sum_{i:
    Y_i = 1} a_i - \sum_{i : Y_i = -1}a_i \leq x$, we have:
  \begin{gather*}
    \sum_{i : Y_i = -1}a_i \geq (\sum_{i: Y_i = 1} a_i )  -x  = 1- (\sum_{i : Y_i = -1}a_i) - x
  \end{gather*}
  and hence
  \begin{gather*}
    \min(f(C_1^*),f(C_2^*)) \geq \sum_{i : Y_i = -1}a_i  \geq \frac{1-x}{2}.
  \end{gather*}
  Hence, we would like to upper bound $|A|$ with high probability.
  Since each $Y_i$ is independent with expectation~$0$, we have $\E[A] =
  0$ and $\E[A^2] = \sum_{i=1}^{k'}a_i^2$. The standard deviation of $A$
  is:
  \begin{gather*}
    \sigma = \sqrt{\sum_{i=1}^{k'}a_i^2} \leq \sqrt{c\cdot \frac{1}{c}^2} = \frac{1}{\sqrt{c}}.
  \end{gather*}
  By Chebyshev's inequality, for any $d \geq 0$, we have
  \begin{gather*}
    \Pr[|A| \geq \frac{d}{\sqrt{c}}] \leq \frac{1}{d^2}.
  \end{gather*}
  That is, except with probability $1/d^2$, $\min(f(C_1^*),f(C_2^*))
  \geq (1-\frac{d}{\sqrt{c}})/2$.  Now for some calculations. With
  probability $1/2$, we do not run Dynkin's algorithm. Independently of
  this, with probability $1/2$, $f(C_1^*) \leq f(C_2^*)$---i.e., the
  value $\min(f(C_1^*),f(C_2^*))$ is achieved on $\stream_m$. With
  probability $(1-1/d^2)$, this value is at least
  $(1-\frac{d}{\sqrt{c}})/2$. Now we run a
  $\rho_{\text{off}}$-approximation on $\stream_m$, and thus with
  probability $\frac14 (1- 1/d^2)$,
  \begin{gather*}
    f(A_1) \geq \frac12 (1-\frac{d}{\sqrt{c}}) \cdot \frac{1}{\rho_{\text{off}}}.
  \end{gather*}
  If we use this as a lower bound for $f(C_2^*)$ (which is fine since we
  are in the case where $f(A_1) \leq f(C_1^*) \leq f(C_2^*)$), the
  semi-online algorithm gives us a value of at least
  $\frac{f(A_1)}{\rho_{on}}$. Hence we have
  \begin{gather}
    \label{eq:13}
    \E[f(A_2)]  \geq \frac12 (1-\frac{d}{\sqrt{c}}) \cdot
    \frac{1}{\rho_{\text{off}}} \cdot \frac14 (1- 1/d^2) \cdot
    \frac{1}{\rho_{\text{on}}}.
 \end{gather}
 Combining both cases and optimizing over parameters $d$ and $c$ ($d
 \leftarrow 3.08,\ c \leftarrow 260.24)$ we have:
 \begin{gather}
   \label{eq:14}
   \E[f(A_2)] \geq \min\left(\frac{1}{8\, \rho_{\text{off}}
       \,\rho_{\text{on}}}
     (1-1/d^2)(1-\frac{d}{\sqrt{c}}),\frac{1}{2ce}\right)\cdot\OPT \geq
   \frac{\OPT}{1417}
 \end{gather}
\end{proof}


\subsection{Proof for Partition Matroid Submodular Secretaries}

Let $S_0$ be the set of first $N_0$ elements, and let $S_j$ denote the elements in epoch $j$. Since the input permutation itself is random, the distribution over the sets $S_0,\ldots,S_k$ is identical to one resulting from the following process: each element $e$ independently chooses a real number $r_e$ in $(0,1)$ and is placed in $S_0$ if $r_e \leq \frac{1}{2}$, and in $S_j$ if $r_e \in (\frac{1}{2}+\frac{j-1}{100k},\frac{1}{2}+\frac{j}{100k}]$. We shall use this observation to simplify our analysis.

For the following lemma, we need to keep track of several events:
\begin{enumerate}
\item $H_{i,j}$: The highest element from partition group $i$ defined under the valuation function used during epoch $j$ falls into epoch $j$.
\item $F_{i,j}$: The highest element from partition group $i$ among those seen until the end of epoch $j$ (defined under the valuation function used during epoch $j$) falls into epoch $j$.
\item $L_{i,j}$: The highest element from partition group $i$ defined under the valuation function used during epoch $j$ does not fall before epoch $j$.
\item $S_{i,j}$: The second highest element (if any) from partition group $i$ defined under the valuation function used during epoch $j$ falls before epoch  $j$.
\item $P_{i,j}$: Some element from partition group $i$ has already been selected before epoch $j$.
\end{enumerate}
In the definitions above, we assume that a fixed tie breaking rule is used to ensure that there is a unique highest and second highest element.

\textbf{Lemma \ref{lem:partitionTechnicalLemma}}
{\it
For all partition groups $i$ and epochs $j$, the algorithm selects the highest element from group $i$ (according to the valuation function during epoch $j$) during epoch $j$ with probability at least $\Omega(\frac{1}{k})$. Specifically:
$$\Pr[(H_{i,j} \wedge S_{i,j} \wedge \neg P_{i,j}) \wedge (\bigwedge_{i' \ne i}\neg F_{i',j})] = \Omega(\frac{1}{k})$$
where the probability is over the random permutation of the elements.}

\begin{proof}
We observe that the event $(H_{i,j} \wedge S_{i,j} \wedge \neg P_{i,j}) \wedge (\bigwedge_{i' \ne i}\neg F_{i',j})$ implies that algorithm selects the highest element from group $i$ in epoch $j$. We will lower bound the probability of this event. We will show this by considering the events $P_{i,j},S_{i,j},L_{i,j},\bigwedge_{i' \ne i}\neg F_{i',j}, H_{i,j}$ in this order, and lower bound the probability of each conditioning on the previous ones.

Under any (arbitrary) valuation function, the events $L_{i,j}$ and $S_{i,j}$ depend on the real numbers chosen by the highest and the second highest elements. Thus $\Pr[L_{i,j} \wedge S_{i,j}] \geq \frac{1}{2}(\frac{1}{2}-\frac {j-1}{100k}) \geq \frac{1}{5}$. 

Let $Q_{i,j}$ denote the number of elements from group $i$ that do not appear in $S_0,\ldots,S_{j-1}$, but are higher (under the valuation function at epoch $j$) than any group $i$ element in $S_0,\ldots,S_{j-1}$. It is easy to see that the random variable $Q_{i,j}$ is dominated by a geometric random variable with parameter $\frac{1}{2}$. Moreover, for any element $e$ contributing to $Q_{i,j}$, it appears in epoch $j$ with probability at most $\frac{\frac{1}{100k}}{\frac{1}{2}-\frac{j}{100k}} \geq \frac{1}{40k}$ so that $Pr[F_{i,j}] \leq \frac{E[Q_{i,j}]}{40k} \leq \frac{1}{20k}$. Since $P_{i,j} \subseteq \cup_{j'<j} F_{i,j}$, we conclude that  $\Pr[P_{i,j}] \leq \sum_{j' < j}\Pr[F_{i,j}] \leq \frac{1}{20}$. It follows that $\Pr[L_{i,j} \wedge S_{i,j} | \neg P_{i,j}] \geq \Pr[L_{i,j} \wedge S_{i,j}] - \Pr[P_{i,j}] \geq \frac{1}{5} - \frac{1}{20} = \frac{3}{20}$.

For convenience, let us define event $\mathcal{E}_{i,j} = (L_{i,j} \wedge S_{i,j} \wedge \neg P_{i,j})$. We have:
$$\Pr[\mathcal{E}_{i,j}] = \Pr[L_{i,j} \wedge S_{i,j} | \neg P_{i,j}]\Pr[\neg P_{i,j}] \geq \frac{3}{20}(1-\frac{1}{20}) = \frac{57}{400}$$

We next upper bound the probability that groups $i' \ne i$ have elements in epoch $j$ that the algorithm might select, conditioned on $\mathcal{E}_{i,j}$. With $Q_{i',j}$ defined as above, we have 
$$\Pr[F_{i',j} | \mathcal{E}_{i,j}] \leq E[Q_{i',j} | \mathcal{E}_{i,j}] \cdot \frac{1}{40k}
 \leq \frac{E[Q_{i',j}]}{\Pr[\mathcal{E}_{i,j}]}\cdot \frac{1}{40k} \leq  \frac{20}{57k}.$$ 
 
Since there are at most $k$ groups $i'$, by a union bound: $\Pr[\bigvee_{i' \ne i} F_{i',j} | \mathcal{E}_{i,j}] \leq \frac{20}{57}$, and so: $\Pr[\bigwedge_{i' \ne i} \neg F_{i',j} | \mathcal{E}_{i,j}] \geq \frac{37}{57}$. Consequently:
$$\Pr[\mathcal{E}_{i,j} \wedge (\bigwedge_{i' \ne i} \neg F_{i',j})] \geq  \Pr[\mathcal{E}_{i,j}]\cdot \Pr[\bigwedge_{i' \ne i} \neg F_{i',j} | \mathcal{E}_{i,j}]\geq \frac{57}{400}\cdot \frac{37}{57} = \frac{37}{400}.$$

To complete the proof, we observe $\Pr[H_{i,j} | \mathcal{E}_{i,j} \wedge (\bigwedge_{i' \ne i}\neg F_{i',j})] \geq \frac{1/100k}{2/5} = \frac{1}{40k}$ and so:
$$\Pr[H_{i,j} \wedge \mathcal{E}_{i,j} \wedge (\bigwedge_{i' \ne i}\neg F_{i',j})] \geq \frac{37}{400}\cdot\frac{1}{40 k} = \frac{37}{16000k}.$$
\end{proof}


\section{Lower Bounds for the Constrained Submodular Maximization Problem in the Secretary Setting}
\label{sec:lower-bounds}

In this section we show lower-bounds for the secretary problem over submodular functions.  We first note that Kleinberg~\cite{Kleinberg-multiple} showed that for additive functions, the maximization problem in the on-line setting with a $k$-uniform matroid constraint can be approximated within a factor of $1 - \frac{5}{\sqrt{k}}$.  We show that this is not the case for submodular functions, even in the information theoretic, semi-online setting (where the algorithm knows the value of OPT) by exhibiting a gap for arbitrarily large $k$.

\begin{theorem}  No   algorithm approximates submodular maximization in the semi-online setting with a $k$-uniform matroid constraint better than a factor of $\frac{8}{9}$ for $k = 2$ or $\frac{17}{18}$ for any even  $k$.\end{theorem}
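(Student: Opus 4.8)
The plan is to construct, for each even $k$, a distribution over submodular maximization instances such that any semi-online algorithm (even one that knows $\OPT$) fails to extract more than the claimed fraction of $\OPT$ in expectation; then Yao's principle gives the lower bound against deterministic algorithms, which suffices. I would start with the $k=2$ case to fix intuition. Take a ground set partitioned into two ``types'' of elements, and a submodular function that behaves like a coverage (or cut) function: elements of the same type are highly redundant with one another, while elements of different types are complementary. Concretely, one natural choice is a function built from a small number of ``features'': each element covers some features, $f(S)$ is the weighted number of features covered, and the feature weights are arranged so that picking two elements of the same type gains much less than picking one of each type. With $k=2$, an optimal solution takes one element of each type for value $9$ (say), while an algorithm that has committed to one element must hope the next arriving element is of the complementary type; an adversarial arrival order forces a wrong guess with constant probability, capping the expected value at $8$.

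The key steps, in order: (1) define the instance family --- ground set size $n$, the partition into types, the feature structure, and verify the function is nonnegative, submodular, and has $f(\emptyset)=0$ (submodularity is immediate for coverage-type functions); (2) compute $\OPT$ exactly and exhibit the optimal set, ensuring it has size $\le k$; (3) describe the randomization over arrival orders (or equivalently over which elements are ``good''), chosen so that the information available when the algorithm must irrevocably decide is insufficient to distinguish the good continuation from a bad one; (4) bound from above the expected value of any algorithm's output against this distribution, via a case analysis on when the algorithm commits; (5) invoke Yao's minimax principle to conclude the bound holds for randomized algorithms, and note that knowing $\OPT$ does not help because all instances in the support share the same $\OPT$. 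For general even $k$, I would take $k/2$ independent ``blocked'' copies of the $k=2$ gadget (on disjoint feature sets so the function remains submodular and the copies don't interact), so that the per-gadget loss of $1/9$ of that gadget's value aggregates; since each gadget contributes a $1/9$ relative loss on a $1/(k/2)$ fraction of $\OPT$, one has to check the arithmetic gives the stated $\tfrac{17}{18}$ rather than $\tfrac{8}{9}$ --- the dilution happens because in the general random-order model the algorithm gets extra chances, so only a fraction of gadgets are ``caught'', and the bookkeeping should be done carefully.

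The main obstacle I anticipate is step (3)–(4): pinning down exactly what the algorithm ``knows'' at its decision point in the semi-online model and showing it genuinely cannot do better. Because the algorithm sees the stream and may adaptively wait, one must argue that the adversary's distribution over arrival orders keeps the relevant bit (which future element is the complementary one) hidden until it is too late --- i.e., any element the algorithm would want to select looks identical, under the posterior, to one it would regret selecting. This is where the construction has to be tuned so that symmetry among same-type elements is exact and the ``reveal'' of type happens only through $f$-evaluations that the algorithm cannot perform on not-yet-arrived elements. Getting the constant to come out to $\tfrac{8}{9}$ (resp. $\tfrac{17}{18}$) rather than something weaker will require choosing the feature weights and the number of decoy elements tightly; I would set up the value equations symbolically and then solve for the weights that maximize the gap subject to submodularity, rather than guessing.
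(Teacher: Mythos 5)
Your plan for $k=2$ is in the right spirit and essentially matches the paper's construction: the paper uses a small coverage instance (universe $\{1B,1T,2B,2T\}$, elements $1_B,2_B$ and a random ``double'' element $r_{TB}$), with the randomness placed in \emph{which} element is the valuable double one, and a short case analysis on which element arrives first gives expected payoff at most $8/3$ against $\OPT=3$. But your proposal stops short of an actual gadget and case analysis, and one modelling slip should be fixed: in the secretary/semi-online setting the arrival order is uniformly random, not adversarial, so the only randomness you control is the instance distribution (which is exactly how the paper's $k=2$ argument is run). Knowing $\OPT$ is handled because all instances in the support have the same optimum value, as you note.

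The genuine gap is in the extension to general even $k$. Your idea --- take $k/2$ disjoint copies of the gadget and let the per-gadget loss of $1/9$ ``aggregate'' --- does not work as stated, and you yourself flag that the arithmetic is unverified. The obstruction is that the budget of $k$ selections is \emph{global}: each gadget contains three elements, so the algorithm can take all three elements of some gadgets (capturing that gadget's full value $3$ with no loss) at the cost of taking fewer elements from others, and a naive per-gadget bound cannot rule this hedging out; this reallocation is precisely why the final constant degrades from $8/9$ to $17/18$ rather than staying at $8/9$. The paper closes this with two ingredients you are missing: (i) an embedding/simulation argument that plants a random $k=2$ instance as a random ``puzzle'' inside a random large instance, showing that the expected payoff counted over the first two selections per puzzle is at most $8/3$ per puzzle (formally $E_1+E_2 \le \frac{4k}{3}-2x_3$, where $x_i$ is the expected number of puzzles from which exactly $i$ elements are chosen); and (ii) a bookkeeping step combining this with $E_3=3x_3$, the cheap-payoff bounds for puzzles with $0$ or $1$ selections, and the counting constraints $x_0+x_1+x_2+x_3=k/2$ and $x_1+2x_2+3x_3\le k$, which together give $\E[f(C)]\le \frac{17}{12}k$ against $\OPT=\frac32 k$. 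Without an analogue of the planting argument and these global constraints, your step (4) for general $k$ has no route to the claimed bound, so as written the proposal establishes (at best) the $k=2$ case in outline only.
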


No non-trivial bound is possible for $k = 1$ because the algorithm knows OPT.  Thus the standard secretary lower bounds will not work.

Let $R, S$,  be two finite sets such that $S \subseteq R$.  We define the {\sc cover$(R, S)$} as follows: define the universe to be  $U=\{ij: i \in R, j \in \{B, T\} \}$, define the set of elements $W$ to contain $i_B = \{iB\}$ for $i \in  R$ and $i_{TB}$ for $i = S$.  Define a submodular function $f(C) = |\bigcup_{S \in C} S|$ where $C \subseteq W$.

We first prove the case for $k = 2$ with a small example and case analysis.  Consider {\sc cover$(\{1, 2\}, \{r\})$}, where $r \in \{1, 2\}$.  The universe is $U = \{1B, 1T, 2B, 2T \}$.  The three elements are $1_B = \{1B\}$, $2_B = \{2B\}$ and $r_{TB} = \{iB, iT \}$.

\begin{figure}[htb]
  \begin{center}
    \includegraphics[angle=0,width=3.0in]{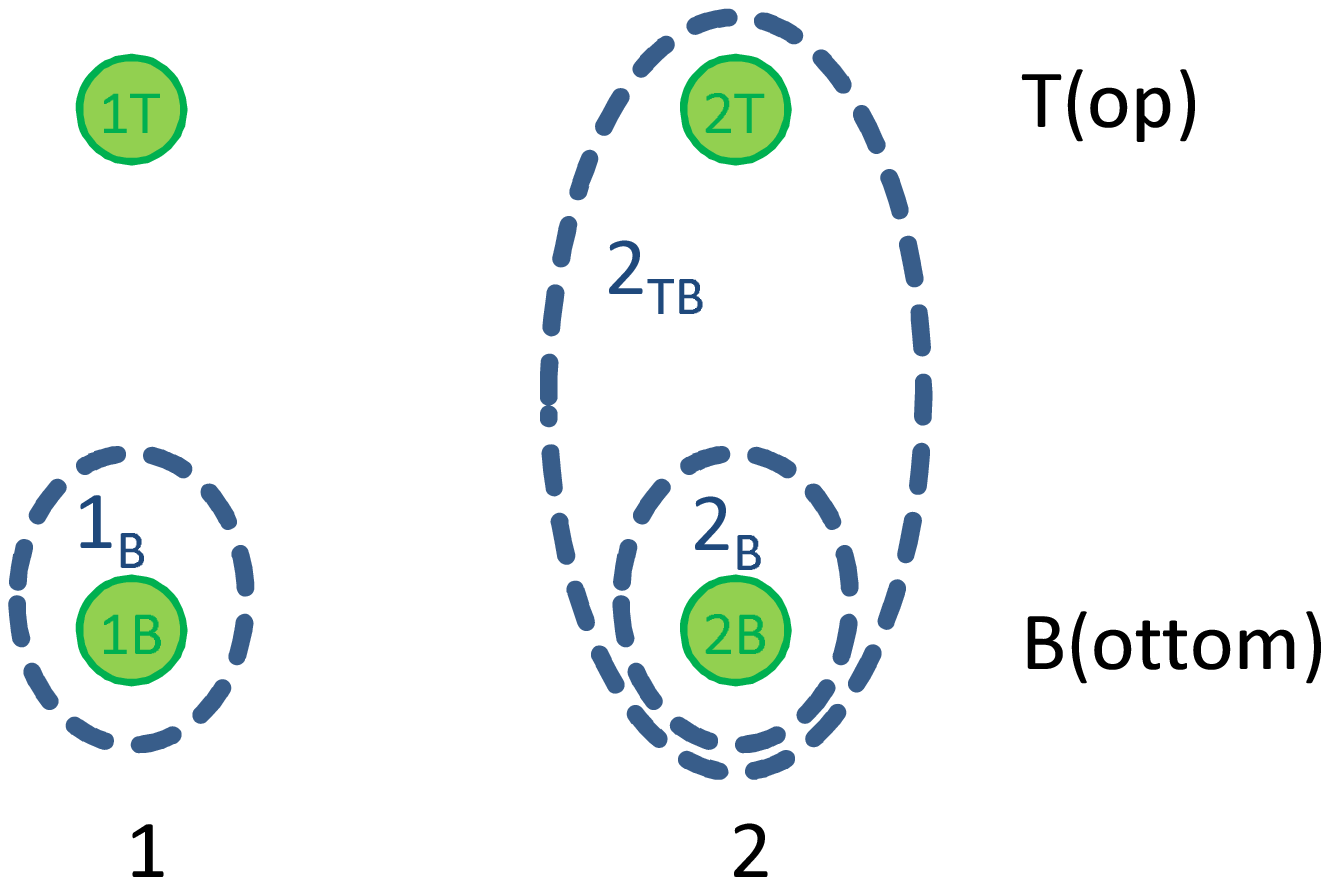}
  \end{center}
  \caption{\small Illustration of {\sc cover$(\{1, 2\}, \{2\})$}}
  \label{fig-label}
\end{figure}

We will chose a uniformly random $r \in \{1, 2\}$ and in the semi-online setting will require the algorithm to pick at most $k = 2$ of the sets appearing in random order, while trying to maximize $f$.  Let $\bar{r} = 3-r$, then the offline OPT is $C^*=\{r_{TB}, \bar{r}_{B} \}$ with $f(C^*)=3$

\begin{claim} \label{'claim-2k'}  No algorithm has expected payoff greater than $\frac{8}{3}$ on the instance {\sc cover$(\{1, 2\}, \{r\})$} in the semi-online setting when $r$ is drawn uniformly at random.
\end{claim}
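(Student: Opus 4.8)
Write $\bar r = 3-r$, and let $r_B$ denote the $B$-element with index $r$ and $\bar r_B$ the $B$-element with the other index, so $\{1_B,2_B\}=\{r_B,\bar r_B\}$ and $r_{TB}=\{rB,rT\}$; recall $f(C)=|\bigcup_{s\in C}s|$. First I would catalogue the value of every feasible set, which is easy since there are only three elements and $k=2$: the three two-element subsets are $\{1_B,2_B\}$ (value $2$), $\{r_B,r_{TB}\}$ (value $2$), and $\{\bar r_B,r_{TB}\}$ (value $3$), while every singleton has value at most $2$ and $f(\emptyset)=0$. Consequently the algorithm achieves the optimum value $3$ if and only if its final selected set is exactly $\{\bar r_B,r_{TB}\}$, and otherwise it collects at most $2$. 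So it suffices to bound the probability that the algorithm ends holding precisely $\bar r_B$ together with $r_{TB}$.

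Next I would condition on the arrival position of $r_{TB}$, which is uniform on $\{1,2,3\}$. If $r_{TB}$ arrives first, I simply use the trivial bound $3$, contributing at most $\tfrac13\cdot 3$. The heart of the argument is that when at least one $B$-element arrives before $r_{TB}$, the algorithm must commit on those $B$-elements without knowing $r$: the identities $1_B$ versus $2_B$ reveal nothing about $r$, and the arrival order is independent of $r$, so conditioned on the position of $r_{TB}$ together with the relative order of the two $B$-elements, the value of $r$ is still uniform on $\{1,2\}$, and any decision the (possibly randomized, adaptive) algorithm makes about the $B$-elements seen before $r_{TB}$ is stochastically independent of which of them is $\bar r_B$.

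I would then finish the two remaining cases. If $r_{TB}$ arrives last: after the two $B$-elements the algorithm holds one of $\emptyset,\{1_B\},\{2_B\},\{1_B,2_B\}$; the last set is already full with value $2$, from $\emptyset$ it can reach only value $2$, and from $\{i_B\}$ it can reach value $3$ only when $i_B=\bar r_B$, an event of probability $\tfrac12$ independent of its choice. Writing $p_i$ for the probability its held set after the two $B$'s is $\{i_B\}$, the chance of value $3$ is $\tfrac12(p_1+p_2)\le\tfrac12$, giving conditional expectation at most $\tfrac12\cdot3+\tfrac12\cdot2=\tfrac52$. If $r_{TB}$ arrives second, the algorithm's take/skip decision on the lone preceding $B$-element is made before $r$ is revealed; a short case check (split on whether it took that element) shows that in either branch it ends with value $3$ only on an event of probability exactly $\tfrac12$, again yielding conditional expectation at most $\tfrac52$. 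Combining, the overall expected payoff is at most $\tfrac13\cdot3+\tfrac13\cdot\tfrac52+\tfrac13\cdot\tfrac52=\tfrac83$.

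The main obstacle is making the independence claim of the second paragraph airtight: one must argue that an adaptive randomized online algorithm cannot correlate its handling of the pre-$r_{TB}$ $B$-elements with the hidden index $r$, and that this independence survives the conditioning used in the case analysis (on the position of $r_{TB}$ and the relative order of the $B$'s). Everything after that is a finite, routine enumeration, and the bound $\tfrac83$ matches the offline optimum $3$ up to the factor $\tfrac89$ claimed in the theorem.
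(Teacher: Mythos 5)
Your proof is correct and follows essentially the same route as the paper's: a case analysis over the random arrival order, using that any decision the algorithm makes about $B$-elements before $r_{TB}$ arrives is independent of the hidden index $r$, so each such case yields conditional expectation at most $\tfrac52$ while the $r_{TB}$-first case yields at most $3$, giving $\tfrac13\cdot 3+\tfrac23\cdot\tfrac52=\tfrac83$. The only difference is cosmetic: the paper conditions on the identity of the first arriving element (with an accept/reject sub-analysis of marginal values), whereas you condition on the position of $r_{TB}$ and use the characterization that payoff $3$ requires ending with exactly $\{\bar r_B, r_{TB}\}$.
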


Because $OPT = 3$, Claim~\ref{'claim-2k'} implies  no algorithm can do better than $\frac{8}{9}$ fraction of $OPT$, which gives us the first part of the theorem.

\begin{proof}
We proceed by case analysis.  In the case where the first element that arrives is $r_{TB}$, the algorithm knows $r$ and can obtain $OPT = 3$.  This happens with probability $\frac{1}{3}$.

In the case where the first element that arrives is $1_B$, the algorithm can accept or reject the element.  If the algorithm rejects, then it may as well take the next two elements that arrive. Since $r=1$ with probability half, the expected payoff is at most $\frac{5}{2}$.

Now suppose the algorithm accepts $1_B$ in the the first position. The algorithm should now pick $r_{TB}$ (and reject $2_B$ if it comes before $r_{TB}$) because the marginal value of $r_{TB}$ is at least as large as that of $2_B$. Since $r$ is random, this marginal value is $\frac{3}{2}$ in expectation, and hence the expected payoff of the algorithm is once again $\frac{5}{2}$.

Similarly, if the first element is $2_B$, the payoff is bounded by $\frac{5}{2}$ in expectation.
Thus the total expected payoff of the algorithm is bounded by
$\frac{1}{3}\cdot 3 + \frac{1}{3}\cdot \frac{5}{2} + \frac{1}{3}\cdot \frac{5}{2} = \frac{8}{3}$.

\end{proof}

Now we would like to show that something similar is true for much larger $k$.  The basic idea is to combine many disjoint instances of {\sc cover$(\{1, 2\}, \{r\})$}, and show that if the algorithm does well overall, it must have done well on each instance, violating Claim~\ref{'claim-2k'}.

\begin{claim} \label{'claim-k-big'}  For any even $k$, no algorithm has expected payoff more than $\frac{17}{12}k$ in the semi-online setting on instances of {\sc cover$(\{1, \ldots, k\}, S)$} trying to maximize $f$ and restricted to picking $k$ sets, when $S$ is drawn uniformly at random among subsets of $\{1, \ldots, k\}$ with $k/2$ elements.
\end{claim}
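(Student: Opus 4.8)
The plan is to show that the combined construction behaves like $k/2$ essentially independent copies of {\sc cover$(\{1,2\},\{r\})$}, so that the single budget of $k$ forces the algorithm to be, on each copy, roughly as constrained as a budget-$2$ algorithm is in Claim~\ref{'claim-2k'}. Since the input distribution (the random $S$ and the random arrival order) is fixed, it suffices to bound the expected payoff of an arbitrary \emph{deterministic} algorithm; we also assume throughout that the algorithm never picks a set contained in a previously picked set (such a pick cannot increase $f$ and wastes a slot). Pair up the indices as $P_\ell=\{2\ell-1,2\ell\}$ for $\ell=1,\dots,k/2$. For the algorithm's output $T$, write $f(T)=\sum_{\ell=1}^{k/2} f_\ell$, where $f_\ell$ counts the ground elements $\{iB,iT:i\in P_\ell\}$ covered by $T$; this is exactly additive because distinct $P_\ell$'s live on disjoint ground sets. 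Let $n_\ell$ be the number of picked sets whose index lies in $P_\ell$, so $\sum_\ell n_\ell\le k$ and the $n_\ell$ average at most $2$; let $t_\ell=|S\cap P_\ell|\in\{0,1,2\}$. A uniformly random $k/2$-subset $S$ gives, for each $\ell$, $\Pr[t_\ell=0]=\Pr[t_\ell=2]=\tfrac14\cdot\tfrac{k-2}{k-1}\le\tfrac14$ and $\Pr[t_\ell=1]=1-2\Pr[t_\ell=0]$.

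The core of the argument is the inequality $\E[f_\ell]\le\tfrac{17}{6}$ for every $\ell$, after which summing over the $k/2$ copies yields $\E[f(T)]\le\tfrac k2\cdot\tfrac{17}{6}=\tfrac{17}{12}k$. To prove the per-copy bound I would first reduce to runs with $n_\ell\le 2$: the value of $f_\ell$ on any sub-collection of a copy's sets is already attained by two of those sets, so a third pick inside a copy never raises $f_\ell$ yet consumes a slot that could serve another copy; hence (after an appropriate exchange) forbidding third picks does not decrease $\E[f(T)]$. Assuming $n_\ell\le2$: if $t_\ell=0$ the copy offers only the two singleton sets, so $f_\ell\le2$; if $t_\ell=2$ the copy can cover at most the four ground elements indexed in $P_\ell$, so $f_\ell\le4$; and if $t_\ell=1$ the three sets of the copy, together with their relative arrival order and the uniformly random identity of the index in $S$, form exactly an instance of {\sc cover$(\{1,2\},\{r\})$} with a budget of $2$ (information about other copies is independent of $r$ and useless), so Claim~\ref{'claim-2k'} gives $\E[f_\ell\mid t_\ell=1]\le\tfrac83$. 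Combining,
\[ \E[f_\ell]\ \le\ 4\Pr[t_\ell=2]+\tfrac83\Pr[t_\ell=1]+2\Pr[t_\ell=0]\ =\ \tfrac83+\tfrac23\Pr[t_\ell=0]\ \le\ \tfrac83+\tfrac16\ =\ \tfrac{17}{6}. \]

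The delicate point, and the step I expect to require the most care, is that the single budget of $k$ couples the copies: whether the algorithm is forced to make at most two picks in copy $\ell$ depends on how many picks it has already spent on the other copies, so the ``reduce to $n_\ell\le2$'' step cannot be carried out copy by copy but must be argued globally — one has to show that any online algorithm can be transformed, without decreasing its expected payoff, into one that never spends a third pick on a copy, even though the transformation must be online and the redundancy of a third pick is only apparent in hindsight. Once such a normal form is in hand, the remaining ingredients — the per-copy case analysis (essentially a repackaging of the proof of Claim~\ref{'claim-2k'}, now also accounting for $t_\ell\in\{0,2\}$) and the hypergeometric bookkeeping for $t_\ell$ — are routine.
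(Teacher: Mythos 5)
Your decomposition into the fixed pairs $P_\ell$, the hypergeometric computation for $t_\ell$, and the conditional use of Claim~\ref{'claim-2k'} on pairs with $t_\ell=1$ are all sound (conditioned on $t_\ell=1$, the special index is uniform on $P_\ell$ and independent of everything outside the pair and of the arrival order, so the restriction of your algorithm to that pair is a legitimate randomized semi-online algorithm for \textsc{cover}$(\{i,j\},\{r\})$ \emph{provided} it makes at most two picks there). But the entire weight of the argument rests on the normalization step you flag and do not prove: that any algorithm can be converted, online and without loss of expected payoff, into one that never makes a third pick inside a pair. This is not a routine exchange argument, because it is exactly the place where the global budget of $k$ has to enter the proof: without the budget your per-pair bound $\E[f_\ell]\le\frac{17}{6}$ is false (the algorithm that picks every arriving set gets $2+\E[t_\ell]=3$ per pair), so any proof of the normalization must reallocate the freed slot across pairs. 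That reallocation cannot be done pathwise: at the moment a third pick is skipped, the remaining stream may consist only of sets from pairs in which two picks have already been spent, in which case the freed slot is worthless and the skip is a strict loss on that history. One would have to argue the exchange in expectation over arrival orders, and it is not even clear that the exact statement ``forbidding third picks never decreases the expected payoff'' is true, as opposed to some amortized version of it. As it stands, the proposal has a genuine gap at its crux.

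It is instructive to compare with the paper, which avoids any such normal form. First, the paper decomposes via a uniformly random perfect matching of $\{1,\dots,k\}$ with one special endpoint per edge, so every ``puzzle'' is exactly a \textsc{cover}$(\{i,j\},\{r\})$ instance and the $t_\ell\in\{0,2\}$ cases never arise (this also recovers the uniform $k/2$-subset distribution by symmetry). Second, instead of forbidding third picks it accounts for them: with $x_i$ the expected number of puzzles receiving exactly $i$ picks and $E_i$ the corresponding expected payoff, embedding a random small instance into a random puzzle and taking only the first two picks of any over-picked puzzle (worth at least $2$) gives $E_1+E_2+2x_3\le\frac{4}{3}k$ via Claim~\ref{'claim-2k'}; this is then combined with $E_3=3x_3$, the trivial bound $\E[f(C)]\le\frac{3k}{2}-3x_0-x_1$, and the counting constraints $\sum_i x_i=k/2$ and $x_1+2x_2+3x_3\le k$ to yield $\frac{17}{12}k$ by adding inequalities. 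If you want to salvage your route, replace the normalization lemma by an analogous accounting over how many picks land in each pair (crediting a three-pick pair with its two-pick value plus at most one, and charging the extra pick against the budget); in effect this is the paper's argument, and it is where the budget constraint does its work.
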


Because $OPT = 3k/2$., Claim~\ref{'claim-2k'} implies no algorithm can do better than a $\frac{17}{18}$ fraction of $OPT$, which gives us the second part of the theorem.

\begin{proof}
For the sake of analysis, we think of the instance of {\sc cover$(\{1, \ldots, k\}, S)$} being created by first choosing a matching on the set $\{1, \ldots, k\}$ and then within each edge $e = (i, j)$ of the matching choosing $e_r \in \{i, j\}$ to include in $S$.

We can then think of the sets of {\sc cover$(\{1, \ldots, k\}, S)$} being generated by taking the sets of the instance {\sc cover$(\{i, j\}, \{e_r\}$)} for each edge $e = (i, j)$ in the matching.  Call each of these $k/2$ instances of  {\sc cover$(\{i, j\}, \{e_r\}$)} a \emph{puzzle}.

Fix an semi-online algorithm $A$.  Let $C$ be the set of elements chosen by $A$.  For $0 \leq i \leq 3$, let $P_i$ be the set of puzzles such that $C$ contains exactly $i$ elements from the puzzle;  let $x_i$ be the expected sizes of  $P_i$ (over the randomness of the assignments of puzzles, the ordering, and $A$); and let $E_i$ be the expected payoff from all the puzzles in $P_i$.  Note that $E_0 = 0$ and $E_3 = 3 x_3$.

\begin{claim}  $E_1 + E_2 \leq \frac{4k}{3} - 2 x_3$ \end{claim}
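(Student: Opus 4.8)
The plan is to relate the algorithm's total expected payoff to a sum of \emph{single-puzzle} payoffs and then invoke Claim~\ref{'claim-2k'}. First I would note that the puzzles live on pairwise-disjoint sub-universes (puzzle on $\{i,j\}$ only touches $\{iB,iT,jB,jT\}$), so the submodular function decomposes: writing $W_P$ for the three sets belonging to puzzle $P$ and $C$ for the (random) collection the algorithm selects, $f(C)=\sum_{\text{puzzles }P} f(C\cap W_P)$. Hence the algorithm's total expected payoff equals $\sum_P \E[f(C\cap W_P)] = E_0+E_1+E_2+E_3 = E_1+E_2+3x_3$, using $E_0=0$ and $E_3=3x_3$. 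So the claim $E_1+E_2\le\frac{4k}{3}-2x_3$ is exactly equivalent to showing $\E[f(C)]\le\frac{4k}{3}+x_3$, and this is what I would prove.

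Next I would introduce an auxiliary (analysis-only) algorithm $A'$: run $A$, but whenever $A$ is about to select a \emph{third} element from some puzzle, have $A'$ decline it. Then $A'$'s selections $C'$ form a subset of $C$, $A'$ picks at most two elements from every puzzle, and for each puzzle $P$ and every outcome, $f(C\cap W_P)\le f(C'\cap W_P) + \mathbf{1}[A\text{ picked all three sets of }P]$. The inequality holds because a puzzle's three sets are $i_B=\{iB\}$, $j_B=\{jB\}$, $e_{r,TB}=\{e_rB,e_rT\}$ with $f(W_P)=3$, and deleting any single one of them still leaves a collection with $f$-value at least $2=f(W_P)-1$ (a quick three-case check: the remaining two sets always cover at least two of the three points $\{iB,jB,e_rT\}$). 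Summing over puzzles and taking expectations gives $\E[f(C)]\le \E[f(C')] + x_3$, so it remains to bound $\E[f(C')]=\sum_P\E[f(C'\cap W_P)]$, for which it suffices to show $\E[f(C'\cap W_P)]\le\frac83$ for each of the $k/2$ puzzles.

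For the per-puzzle bound I would use a deferred-decisions argument. Fix a puzzle $P$ on pair $\{i,j\}$ and condition on all the randomness except $e_r$: the matching, the contents of all other puzzles, the arrival order, and $A$'s internal coins. The value $e_r$ is uniform in $\{i,j\}$, independent of everything conditioned on, and it only determines which of $i_{TB},j_{TB}$ is the ``big'' element of $P$; therefore the algorithm's view is $e_r$-independent until that big element arrives, and after that $e_r$ is fully revealed. Consequently $A'$, restricted to $P$'s three elements, behaves precisely as a semi-online algorithm on the standalone instance {\sc cover}$(\{i,j\},\{e_r\})$ that selects at most two of the three sets (any side information it carries over from other puzzles is irrelevant to the value it can extract from $P$). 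Claim~\ref{'claim-2k'} bounds the expected payoff of \emph{any} such algorithm by $\frac83$; de-conditioning yields $\E[f(C'\cap W_P)]\le\frac83$. Summing over the $k/2$ puzzles gives $\E[f(C')]\le\frac{4k}{3}$, hence $\E[f(C)]\le\frac{4k}{3}+x_3$, which rearranges to $E_1+E_2\le\frac{4k}{3}-2x_3$.

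The main obstacle is the last step: making the reduction of $A'$'s interaction with one puzzle to a standalone semi-online instance fully rigorous. One must keep the \emph{relative order} of $P$'s three elements random throughout (Claim~\ref{'claim-2k'} bounds the expectation over the random order \emph{together with} random $r$, not for a worst-case order), while conditioning away all other randomness, and one must argue carefully that the extra information available to $A'$ about other puzzles cannot be turned against puzzle $P$ — essentially because $e_r$ is independent of that information and only becomes visible through $P$'s own big element. The remaining ingredients — the decomposition of $f$, the definition of $A'$, and the ``deleting one set costs at most $1$'' inequality — are routine.
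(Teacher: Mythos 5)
Your bookkeeping is fine, and your overall plan---decompose $f$ across the disjoint puzzle sub-universes, charge a puzzle where $A$ takes all three sets only $2$ (your ``drop one set, lose at most $1$'' observation is exactly the paper's ``keep only the first two picks''), then reduce to the $8/3$ bound of Claim~\ref{'claim-2k'}---is the same strategy the paper uses. But the step you yourself flag as the main obstacle is where the write-up actually breaks, and it is the heart of the proof, so it is a genuine gap. In your third paragraph you condition on ``the arrival order'' of the whole stream. That conditioning is ill-posed, since the element set itself depends on $e_r$ (whether $i_{TB}$ or $j_{TB}$ exists at all), and if read as fixing the relative order of $P$'s three elements it destroys exactly the randomness Claim~\ref{'claim-2k'} needs: that claim bounds the expectation jointly over the random order \emph{and} the random $r$, and for a fixed relative order (say the big element arriving first among $P$'s elements) the conditional payoff can be $3 > 8/3$. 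So the per-puzzle bound $\E[f(C'\cap W_P)]\le 8/3$ does not follow from what you wrote; you state what would be needed (keep the relative order random, argue the cross-puzzle side information is harmless) but do not carry it out.

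The repair is to condition not on the order but on the matching, the other puzzles' $e_{r'}$'s and their internal relative orders, the assignment of arrival \emph{slots} to puzzles, and $A$'s coins; then the pair (relative order of $P$'s three elements, $e_r$) is uniform and independent of everything conditioned on, the restriction of $A'$ to $P$'s elements is a legitimate semi-online algorithm on {\sc cover}$(\{i,j\},\{e_r\})$ that picks at most two sets, and Claim~\ref{'claim-2k'} applies conditionally; averaging gives your per-puzzle bound and hence the claim. The paper instead avoids per-puzzle conditioning altogether by running the reduction in the other direction: it takes a fresh {\sc cover}$(\{1,2\},\{r\})$ instance and embeds it as a uniformly random puzzle of a randomly generated big instance, simulates $A$, and keeps at most the first two elements $A$ picks from that puzzle; by exchangeability the embedded instance's expected payoff is at least $(E_1+E_2+2x_3)/(k/2)$, and Claim~\ref{'claim-2k'} then yields $E_1+E_2+2x_3\le \frac{4k}{3}$ in one shot. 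Either route works, but your argument is incomplete until the conditioning (or the embedding) is set up correctly.
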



\begin{proof}
Given an instance of  {\sc cover$(\{1,2\}, \{r\})$}, construct a random instance of {\sc cover$(\{1, \ldots, k\}, S)$} by generating a random matching and randomly picking a special edge $e = (i, j)$, where $i$ and $j$ are randomly ordered.  For each edge $e' = (i', j')$ pick $e'_{r} \in \{i', j'\}$ to include in $S$.  Now run $A$ on this instance of {\sc cover$(\{1, \ldots, k\}, S)$}, except than whenever an element of the puzzle corresponding to edge $e$ comes along, replace it with the next element from the given instance of {\sc cover$(\{1,2\}, \{r\})$}; however replace $1$ with $i$, and $2$ with $j$.  Run $A$ on this instance of {\sc cover$(\{1, \ldots, k\}, S)$}, and wheneven $A$ chooses an element from the instance of {\sc cover$(\{1,2\}, \{r\})$}, choose that element (it may be that $A$ selects more than $2$ elements, in which case, just select the first 2).

This instance of {\sc cover$(\{1, \ldots, k\}, S)$} has the same distribution as in the claim,
and the given instance of  {\sc cover$(\{1,2\}, \{r\})$}
is a random puzzle in this distribution.
Thus, the expected payoff of the  {\sc cover$(\{1,2\}, \{r\})$}
instance is at least $(E_0 + E_1 + E_2+ 2 x_3)/(\frac{k}{2})$.
By Claim~\ref{'claim-2k'} this is $\leq \frac{8}{3}$. Recalling $E_0 = 0$ and rearranging gives us the claim.
\end{proof}

We combing the above claim with the fact that $E_0 = 0$ and $E_3 = 3 x_3$ to get that \begin{equation} \label{'equation-lb-1'} \E[f(C)] = E_1 + E_2 + E_3  \leq \frac{4}{3}k + x_3.  \end{equation}

Note also that $A$ receives payoff  at most 2 from any puzzle in $P_1$, and at most 0
from any puzzle in $P_0$.  The maximum payoff from each puzzle is 3, which occurs in OPT.  Thus  \begin{equation}\label{'equation-lb-2'} \E[f(C)] \leq 3k/2 - 3 x_0 - x_1. \end{equation}

Finally, there are $k/2$ puzzles, so $x_0+x_1+x_2+x_3 = k/2$.  Additionally, because the algorithm never picks more than $k$ elements, we have $x_1+2x_2+3x_3 \leq k$.  Solving the first equation for $x_2$ and substituting for $x_2$ in the second we get \begin{equation} \label{'equation-lb-3'} 0 \leq 2 x_0 + x_1 - x_3. \end{equation}

Adding Equations \ref{'equation-lb-1'},  \ref{'equation-lb-2'}, and  \ref{'equation-lb-3'}, we see that $2 \E[f(C)] \leq \frac{17}{6}k -  x_0$ which implies that $ \E[f(C)] \leq \frac{17}{12}k$.
\end{proof}

\end{document}